\newcommand{\fullVersion}{def}
\newif\ifcomm
\newif\ifacm
\newif\ifacmart
\newif\ifs
\newcommand{\SSSInstance}{y}
\newcommand{\queueOfOverflows}{b}
\newcommand{\sumOfOverflows}{{B}}
\newcommand{\frameOffset}{M}
\newcommand{\ceil}[1]{ \left\lceil{#1}\right\rceil}
\newcommand{\parentheses}[1]{ \left({#1}\right)}
\newcommand{\inc}[1]{$#1 = #1 + 1$}
\newcommand{\dec}[1]{$#1 = #1 - 1$}
\newcommand{\smallMultError}[1]{(1+o(1))}
\newcommand{\window}{W}
\newcommand{\numBlocks}{k}
\newcommand{\blockSize}{\frac{\window}{\numBlocks}}
\definecolor{darkred}{rgb}{0.7,0,0}
\definecolor{darkgreen}{rgb}{0,0.5,0}
    \newtheoremstyle{boldthm}{}{}{\itshape}{}{\bfseries}{.}{ }{\thmname{#1}\thmnumber{ #2}\thmnote{ (#3)}} %sets the theorem title in bold, if any
    \theoremstyle{boldthm}
  \newtheorem{theorem}{Theorem}%[section]    -> example: \begin{theorem}   ... \end{theorem}
  \newtheorem{definition}{Definition}%[section] -> example: \begin{definition} [$B_{h}$ Sequence] \label{def:B} ..... \end{definition}
  \newtheorem{lemma}{Lemma}%[section]
  \newtheorem{Obs}{Observation}%[section]
  \newtheorem{corollary}{Corollary}
    \newcommand{\mycomm}[3]{{\color{#2} \textbf{[#1: #3]}}}
    \newcommand{\Fmycomm}[3]{\footnote{{{\color{#2} \textbf{[#1: #3]}}} }}
    \newcommand{\mycomm}[3]{}
    \newcommand{\Fmycomm}[3]{}    
\newcommand{\IK}[1]{\mycomm{IK}{red}{#1}}
\newcommand{\changed}[1]{\textcolor{blue}{#1}}
\def\compactify{\itemsep=0pt \topsep=0pt \partopsep=0pt \parsep=0pt}
  \let\latexusecounter=\usecounter
  \newcommand{\bp}{\begin{proof}}
  \newcommand{\bpo}{ \begin{proof}[Proof Outline] }
  \newcommand{\ep}{\end{proof}}       %\newcommand{\ep}{\QED \end{proof}}
  \newcommand{\bp}{\begin{IEEEproof}}     %IEEE formats: \begin{IEEEproof} ; \begin{IEEEproof}[Proof of Theorem \ref{thm:my}]; else: \begin{proof}
  \newcommand{\bpo}{ \begin{IEEEproof}[Proof Outline] }
  \newcommand{\ep}{\end{IEEEproof}}       %\newcommand{\ep}{\QED \end{proof}}
\newcommand{\be}{\begin{equation}}
\newcommand{\ee}{\end{equation}}
\newcommand{\NB}{\psi}
\newcommand{\NBound}{{Z_{1 - \frac{{{\delta }}}{2}}}HW^{-1}{\varepsilon_s}^{ - 2}}
\newcommand\MyIncludeGraphics[2][]{% needs packages: \usepackage{graphicx}\usepackage{todonotes}
    \IfFileExists{#2}{%
        \includegraphics[#1]{#2}%
    }{%
        \missingfigure[figwidth=7.0cm]{Missing #2}%
    }%
}%
\newcommand{\T}[1]{\smallskip\noindent\textbf{#1}} %paragraph title
\newcommand{\set}[1]{\left\{#1\right\}}         %same as \brac
\newcommand{\eps}{\varepsilon}
\newcommand{\newVar}[2]{\newcommand{#1}{\ensuremath{#2}\xspace}}
  \newVar{\server}{S}
  \newVar{\client}{C}
  \newVar{\rclient}{R_c}
  \newVar{\rserver}{R_s}
\providecommand{\vs}{vs. }
\providecommand{\ie}{\emph{i.e.,} }
\providecommand{\eg}{\emph{e.g.,} }
\newcommand{\HHAWAII}{Memento\xspace}%{\ensuremath{\mathit{HHAWAII}}}
\newcommand{\HHHAWAII}{H\text{-}Memento\xspace}%{\ensuremath{\mathit{HHHAWAII}}}
\newcommand{\name}{Memento\xspace}
\newcommand{\Hname}{H\text{-}Memento\xspace}
\begin{document}
    \title{
%        \ifdefined\fullVersion
%                Memento: Making Sliding Windows Efficient \mbox{for Heavy Hitters (Extended Version)}
%        \else
        Memento: Making Sliding Windows Efficient \mbox{for Heavy Hitters}
%        \fi
    }

  \ifacm %only for ACM
    \ifacmart
    \newcommand{\aut}[2]{#1\texorpdfstring{$^{#2}$}{(#2)}} %solves hyperref problems with exponents
\author{
  \aut{Ran Ben Basat}{1,2},\, %\qquad
  \aut{Gil Einziger}{3,4},\, %\qquad
  \aut{Isaac Keslassy}{2,5},\, %\\
  \aut{Ariel Orda}{2},\, %\qquad
  \aut{Shay Vargaftik}{2},\, %\qquad
  \aut{Erez Waisbard}{4}
}%
      \affiliation{
      %\fontsize{10.7}{9.9}\selectfont
$^1$ \textit{Harvard University} \quad
$^2$ \textit{Technion}\quad  
$^3$ \textit{Ben Gurion University} \quad
$^4$ \textit{Nokia Bell Labs} \quad 
$^5$ \textit{VMware}
}
% \author{
%   \aut{Ran Ben Basat}{1,3},\, %\qquad
%   \aut{Gil Einziger}{2,5},\, %\qquad
%   \aut{Isaac Keslassy}{3,4},\, %\\
%   \aut{Ariel Orda}{3},\, %\qquad
%   \aut{Shay Vargaftik}{3},\, %\qquad
%   \aut{Erez Waisbard}{5}
% }%
%       \affiliation{
%       %\fontsize{10.7}{9.9}\selectfont
% $^1$ \textit{Harvard University} \quad
% $^2$ \textit{Ben Gurion University} \quad
% $^3$ \textit{Technion}\quad 
% $^4$ \textit{VMware}\quad 
% $^5$ \textit{Nokia Bell Labs} 
% }
          \renewcommand{\shortauthors}{R. Ben Basat \textit{et al.}} 
     \fi

  \fi %end of \ifacm
\begin{abstract}

\IK{
%1.Ran: I changed the order of references so your references start with 1,2, \etc., and made them shorter. Please check and feel free to change. \\
%2. We need to remove the footnotes from the ACM reference format below. I think last time we just hard-coded it in the style file. Please see if you can, and if not ping me and I'll do it.\\
%3. Another ACM format hack: for the extended version, at the top, the title is too long, and we need to only show a shorter version. I may have it somewhere. Ping me if you want me to do it.\\  Ran: I think we should remove the "(Extended Version)" from the title of the extended version anyways
%4. BTW ACKs to people still missing.\\ Ran: I added acks to the reviewers and Kenji. Anyone else you want to add?
5. In the extended version, I am not sure why, but all figures are postponed to the end and missing.}

Cloud operators require real-time identification of Heavy Hitters (HH) and Hierarchical Heavy Hitters (HHH)  for applications such as load balancing, traffic engineering, and attack mitigation. 
%However, existing techniques are slow to \mbox{detect new heavy hitters.}
However, existing techniques are slow in \mbox{detecting new heavy hitters.}

%\FIK{So is the paper only about new HH? Not sure the 1st sentence matches the 2nd. Maybe: ``Cloud operators require real-time/fast identification of HH and HHH for applications such as..." then ``are too slow [and inaccurate] to identify/track changes in heavy hitters"?}

In this paper, we make the case for identifying heavy hitters through \textit{sliding windows}. \sout{We demonstrate that sliding windows detect heavy hitters quicker and more accurately than current methods.}
\changed{Sliding windows detect heavy hitters quicker and more accurately than current methods, but to date had no practical algorithms.}
%\IK{practical implementation? Ran: I prefer not, it would sounds as if all we do if offer implementation of existing algorithms.\\Isaac: "practical algorithm"? "could not be ran efficiently"? Feel free to rephrase. Currently the solution to what problem is unclear.}
Accordingly, we introduce, design and analyze the \textit{Memento} family of \changed{sliding window} algorithms for the HH and HHH problems in the single-device and network-wide settings. Using extensive evaluations, we show that our single-device solutions attain similar accuracy and are by up to $273\times$ faster than existing window-based techniques.
%while the network-wide algorithms provide a better bandwidth-to-accuracy tradeoff compared to alternatives.
%Finally, 
Furthermore, we exemplify our network-wide HHH detection capabilities on a realistic testbed. To that end, we implemented  Memento as an open-source extension to the popular HAProxy cloud load-balancer.
%using Memento for control and  as measurement points.\FIK{extend: shay} 
In our evaluations, using an HTTP flood by 50 subnets, our network-wide approach detected the new subnets faster, and reduced the number of undetected flood requests by up to $37\times$ compared to the alternatives. % when equalizing bandwidth overheads. 

%We simulate a DDoS attack by 50 subnets and successfully identify the attack using multiple methods.\FIK{Shay wanted to change? ``We simulate a set of 50 heavy-hitting subnets and..."} Our approach reduces the number of misidentified attack packets by up to $34\times$ compared to the alternatives when equalizing reporting bandwidth overheads. 
\end{abstract}
\ifdefined\fullVersion
\renewcommand{\changed}[1]{{#1}}
\renewcommand{\sout}[1]{}
 % \acmConference[SIGCOMM'17]{ACM SIGCOMM}{August 2017}{Budapest, Hungary}
%%% potentially enable (i.e. remove the disable command) for the fields below in the camera-ready version
    %\settopmatter{printacmref=false}
	\settopmatter{printfolios=true,printacmref=false} %numbers the pages; remove the ugly ACM reference
    \setcopyright{none}
	\renewcommand\footnotetextcopyrightpermission[1]{} % removes footnote with conference information in first column
	\pagestyle{plain} % removes running headers
\else

\begin{CCSXML}
<ccs2012>
<concept>
<concept_id>10003033.10003068</concept_id>
<concept_desc>Networks~Network algorithms</concept_desc>
<concept_significance>300</concept_significance>
</concept>
<concept>
<concept_id>10003033.10003079.10011704</concept_id>
<concept_desc>Networks~Network measurement</concept_desc>
<concept_significance>100</concept_significance>
</concept>
</ccs2012>
\end{CCSXML}

\ccsdesc[300]{Networks~Network algorithms}
\ccsdesc[100]{Networks~Network measurement}

\copyrightyear{2018} 
\acmYear{2018} 
\setcopyright{acmcopyright}
\acmConference[CoNEXT '18]{The 14th International Conference on emerging Networking EXperiments and Technologies}{December 4--7, 2018}{Heraklion, Greece}
%\acmBooktitle{The 14th International Conference on emerging Networking EXperiments and Technologies (CoNEXT '18), December 4--7, 2018, Heraklion, Greece}
\acmPrice{15.00}
\acmDOI{10.1145/3281411.3281427}
\acmISBN{978-1-4503-6080-7/18/12}

\renewcommand{\textrightarrow}{$\to$}
\fi
\maketitle
% For TR, remove for journal:
%\OnlyTR{\markboth{Technical Report TR16-01, Technion, Israel}{}}

% make the title area -> in acmart, it's postponed after the abstract

%For ACM formats
\ifacm %only for ACM
    %\sloppy
    \sloppypar
\else %only for IEEE
\fi

%=========================================================================
%  Introduction
%=========================================================================

%\IK{Can we call the competitors as ``intervals'' (or periods, etc.) rather than ``streams"? In the paper we use both terminologies currently. And a window also works on a stream, so it's more confusing.} \IK{Note that our HHH is H-Memento, not Memento...}
\vspace{-.2cm}
\section{Introduction}

Cloud operators require fast and accurate single-device and network-wide detection of \emph{Heavy Hitters (HH)} (most frequent flows) and of \emph{Hierarchical Heavy Hitters (HHH)} (most frequent subnets) to attain real-time visibility of their traffic. 
These capabilities are essential building blocks in network functions such as load balancing~\cite{LBSigComm,LoadBalancing,SilkRoad,LBConext}, traffic engineering~\cite{TrafficEngeneering,TrafficEngneering2} and attack mitigation~\cite{DDoSwithHHH,DDOSwithHHH3,HHHSwitch,DDoSwithHHH2,PseudoWindowHHH}.

Quickly identifying changes in the HH and HHH is a key challenge~\cite{Change1} and can have a dramatic impact on the performance of such application. 
For example, faster detection of HH flows allows load-balancing and traffic engineering solutions to respond to traffic spikes swiftly. For attack mitigation systems, quicker and more accurate detection of HHH subnets means that less attack traffic reaches the victim. 
This is particularly important for combating \emph{Distributed Denial of Service (DDoS)} attacks on cloud services, as they become a growing concern with the increasing number of connected devices (\ie Internet of things)~\cite{DynAttack,DDoSReport}. 

In this work, we show that \emph{sliding windows} are faster than interval based measurements in detecting new (hierarchical) heavy hitters.
Unfortunately, the idea of using a sliding window for HHH was previously dismissed, as the existing sliding-window algorithms were ``markedly slower and less space-efficient in practice", to quote~\cite{HHHMitzenmacherArXiv}. 
\changed{Intuitively, this is because the sampling methods used for accelerating interval methods do not naturally extend to sliding windows, even for the simpler HH problem.}
As a result, the merits of sliding windows have not been properly evaluated. Moreover, sliding windows do not provide network-wide measurement capabilities, as opposed to interval approaches~\cite{HHTagging, Sigcomm2018Networkwide, RexfordNetworkwide,Everflow}. Consequently, most applications that use  HH or HHH rely on interval based measurements~\cite{DDoSwithHHH2,DDOSwithHHH3,DDOSwithHHH4,DDoSwithHHH}. 

%\smallskip
%\vspace*{-0.3cm} %this creates a terrible effect in the pdf; it's not worth it
\T{Contributions.}
Our goal is to make sliding windows practical for network applications. Accordingly, we focus on the fundamental HH and HHH problems in both single-device and network-wide measurement scenarios. We then introduce the \textit{Memento} family of four algorithms---one for each problem (\ie HH and HHH) and each measurement scenario (\ie single-device and network-wide). These are rigorously analyzed and provide worst-case accuracy guarantees. Moreover, in the network-wide setting, we maximize the accuracy guarantee given a per-packet control bandwidth~budget. 

Using extensive evaluations on real packet traces, we show that the Memento algorithms achieve speedups of up to 14$\times$ in HH and up to 273$\times$ in HHH when compared to existing sliding-window solutions. We also show that they match the speed of the fastest interval-based algorithm~\cite{RHHH}. Our algorithms detect emerging (hierarchical) heavy hitters consistently faster than interval-based approaches, and their accuracy is similar to that of slower sliding-window solutions.  

Next, we implement a proof-of-concept network-wide HH and HHH measurement system. The controller uses our network-wide algorithms, and the measurement points are implemented on top of the popular HAProxy cloud load-balancer, which we extended with capabilities to rate-limit traffic from entire subnets. 
We evaluate the achievable accuracy given a per-packet bandwidth budget for reporting measurement data to the control. We introduce new communication methods and compare them with a traditional approach. We create an HTTP flood attack from 50 subnets and show that the detection time is near-optimal while using a bandwidth budget of 1 byte per packet. For the same budget, our methods exhibit a reduction of up to 37$\times$ in the number of undetected flood requests compared to the alternative.
Finally, we \sout{(anonymously)} open-source the Memento algorithms and the HAProxy cloud \mbox{load-balancer extension~\cite{TRCode}.}
%We hope that our code would further facilitate sliding window measurements in network applications.

%Our evaluati
%implemented a network-wide DDoS attack mitigation system and evaluate it in a realistic environment. Our system uses the popular HAProxy cloud load-balancer, which we extended with capabilities to block and rate-limit traffic from entire subnets (rather than from individual flows). We then leveraged this capability to allow HAProxy to implement mitigation orders issued by our control which is based on our network-wide algorithms. 

%In a setting of 10 HAProxy LBs reporting to our controller, we evaluate this system by generating a trace-based realistic 50 subnet DDoS attack and monitor the detection and mitigation time of our controller. Our findings show that a window based measurement with an 1-byte per-packet overheads suffices for near-optimal detection time. %Reducing the number of misidentified attack packets by a factor of $37\times$. 

\changed{
\section{Background}
%In this section, we provide background on previous works that are crucial for understanding our paper. 
%\IK{Maybe delete this first sentence? \ie the title is ``Preliminaries", so just get to them directly? Your decision.\\ 
%Also maybe call the Section ``Background" or ``Background on Streaming Algorithms"?\\I made a few small changes in the text.
%}
Streaming algorithms~\cite{muthukrishnan2005data} are designed to process a stream (sequence) of elements (in our case, packets) while analyzing the underlying data. 
The main challenge of these algorithms is the sheer volume of the data that they are required to process, motivating space-efficient solutions that \mbox{process elements extremely fast.}
}

\changed{One of the most studied streaming problems is that of identifying the \textit{Heavy Hitters} (HH) (\ie \textit{elephant flows}) -- the elements that frequently appear in the data.
For instance, Space Saving (SS)~\cite{SpaceSavings} is a popular HH algorithm. 
SS utilizes a set of $m$ counters, each associated with a key (flow identifier) and a value. Whenever a packet arrives, SS increments the value of its flow's counter if it exists. Otherwise, if there is a free counter, it is allocated to the flow with a value of $1$. Finally, if no available counter exists, SS replaces the identifier of the flow with the smallest value with 
%\IK{by? the wording of this sentence is a bit confusing to people who don't know, please check} 
that of the current flow and increments its value. 
For example, assume that the minimal counter is associated with flow $x$ and has a value of $4$, while flow $y$ does not have a counter. If a packet of flow $y$ arrives, we will reallocate $x$'s counter to $y$ and set its value to $5$, leaving $x$ without a counter. 
When queried for the value of flow $z$, SS returns the value of the counter associated with $z$, or the value of the minimal counter if there is no counter for $z$.
}

\changed{
SS runs on \emph{intervals}, \ie it estimates the flow sizes from the beginning of the measurement and is often reset to allow its data to be fresh~\cite{DDOSwithHHH3}. 
Another way for analyzing only the recent data is to use a sliding window algorithm~\cite{DatarGIM02} in which an answer to a query only reflects the last $W$ packets. WCSS~\cite{WCSS} extends Space Saving 
%\IK{note that it's sometimes written in one word, sometimes in 2; please make it the same everywhere} 
to sliding windows, and achieves constant update and query time. Unfortunately, WCSS is too slow to keep up with line rates, and it
%\IK{it=the runtime or WCSS?} 
\ifdefined\fullVersion
We expand on WCSS and how it is generalized by Memento in Section~\ref{sec:HHAWAII}.
\else
serves as a baseline in this work. We expand on WCSS in the full version of the paper~\cite{fullVersion}. 
\fi
\IK{please add a placeholder reference! Ran: I prefer not, there's no way I'll submit a paper with a missing reference, but may miss that we haven't replaced a dummy identifier.}
}

\changed{
\textit{Hierarchical Heavy Hitters} (HHH) are a generalization of the HHH problem in which we identify frequent IP \emph{networks}. That is, rather than looking for the large flows, we look for the networks whose aggregated volume exceeds a predetermined threshold. To that end, MST~\cite{HHHMitzenmacher} proposed to utilize SS for tracking all networks. Specifically, it uses one SS instance for each network size and whenever a packet arrives, it computes all its prefixes (specifically $H$ updates, as the size of the hierarchy) 
%\IK{``(specifically $H$ prefixes, as the size of the hierarchy)"; then delete the comment that explains $H$ later} and feeds them into the relevant SS instances. For example, when processing $181.7.20.6$, it feeds $181.7.20.6$ into one SS instance, $181.7.20.0$ into a second instance, $181.7.0.0$ into a third, \etc
MST has two main drawbacks: first, it makes multiple  SS updates, while even a single update may be too slow for keeping up with line rates; second, it solves the problem on intervals rather than on sliding 
windows.
%windows, which they admit to being superior. \IK{this ``they admit" is a bit too defensive; maybe ``which we show to be superior", or even better, just don't add anything for now?}
To alleviate the first problem, RHHH~\cite{RHHH} proposes to draw a single random integer $i$ uniformly between $1$ and $V$ (for a parameter $V\ge H$). If $i\le H$ then RHHH makes a \emph{single} SS update to the $i$'th prefix, and otherwise it ignores the packet. For example, for the above packet, $i=7$ would be ignored while $i=3$ would lead the algorithm to feed $181.0.0.0$ into the relevant SS instance.
While RHHH is fast enough to keep up with the line rate, its approach does not seem to extend to sliding windows easily, a gap we close in this paper.
}

%\vspace*{-.2cm} %Please don't do this, it looks terrible in the pdf
\section{Why sliding windows?}\label{sec:2} 

In this paper, we argue that once a new heavy hitter emerges, the sliding window method identifies it most quickly and accurately. Therefore, network applications that capitalize on sliding windows can potentially react faster to changes in the traffic.
For simplicity, we consider accurate measurements, but the results are also valid for approximate measurements.

\T{Window \vs interval.} 
We start by comparing \textit{sliding windows} to the \emph {Interval} method that is commonly used in HHH-based DDoS mitigation systems~\cite{DDoSwithHHH2,DDOSwithHHH3,DDoSwithHHH,PseudoWindowHHH}.
%\FIK{didn't Shay say that we'll remove the DDoS story? - its okay.}
As depicted in Figure~\ref{fig:whyWindow2}, the Interval method relies on sequential interval measurements. Usually, the measurement data is available at the end of each measurement interval, wherein the \emph{improved} Interval method it is accessible throughout each measurement period. 
There are two possible failure modes, namely: failing to detect a new heavy hitter (false negative), or falsely declaring a heavy hitter (false positive). Algorithms that follow the (improved) Interval method would either have false positives or false negatives. In contrast, sliding windows can avoid both errors.  
To show this, we start with the following definitions:
\begin{definition}[Window Frequency]
We denote by $f^{W}_x$ the \textit{window frequency} of flow $x$, \ie the \emph{number} of packets transmitted by $x$ over the last $W$ packets.
\end{definition}
\begin{definition}[Normalized Window Frequency]
We denote by $\frac{f_x^W}{W}$ the \textit{normalized window frequency} of flow $x$, \ie the \emph{fraction} of  $x$'s packets within the last $W$ packets.
\end{definition}
Next, window heavy hitters are flows whose normalized window frequency is larger than a user-defined threshold:
\begin{definition}[Window Heavy Hitter]
Flow $x$ is a \textit{window heavy hitter} if its normalized window frequency  $\frac{f_x^W}{W}$ is larger than $\theta$, where $\theta \in (0,1)$ is a user-defined threshold. 
\end{definition}

\smallskip
\T{Window optimality.} The optimal detection point for new window heavy hitters is simply once their normalized window frequency is above a user-defined threshold. Reporting a flow earlier is \emph{wrong} (false positive), and reporting it afterwards is \emph{(too) late}.
%\IK{the next definition is a bit weird, as time is not a factor in the equation}
%\begin{definition}[Optimal Detection]
%Given a threshold $\theta$, the optimal detection time of a flow is when %$\frac{f^{W}_X}{W}> \theta$.
%\end{definition}
%
%The following proposition 
This means that sliding window measurements, \emph{by definition}, have an optimal detection time. 
%\begin{proposition}[Sliding Window Optimality] %IK: changed from theorem %because theorem sounds a bit too strong
%\label{thm:optimality}
%The sliding-window approach has optimal detection time. 
%\end{proposition}
%\begin{proof}
%A sliding window directly monitors $f^{W}_y$ for each flow $y$ in the %window. Once $\frac{f^{W}_x}{W} >\theta$ for some $x$, then $x$ is reported %as a heavy hitter. Thus, sliding windows have optimal detection time. %\end{proof}
%

\smallskip
\T{Motivation.} %While the definition revolves around sliding windows, we show that it is meaningful in practice. \IK{rewrite this last sentence? unclear} To do so, we form an experiment where 
We motivate the definition for window heavy hitters with an experimental scenario where a new flow appears during the measurement and consumes, \changed{at a constant rate,} a larger-than-the-threshold portion of the traffic after its initial appearance. We measure how long it takes for each measurement method to identify the new heavy hitter and evaluate the \mbox{following measurement methods:}

(i) Interval. The window frequency of each flow is estimated at the end of every measurement. 
This method represents limitations of sampling techniques~(\eg \cite{BUS,RHHH}) that require time to converge and thus cannot provide estimates during the measurement.
%This method represents various limitations, such as convergence time, of aggregation and sampling~\cite{BUS,RHHH,FlowRadar}.
{(ii) Improved interval.} Same as \textit{interval}, but flow frequencies are estimated upon the arrival of each packet. This represents the best case scenario for the Interval method. 
(iii) Window. Sliding window, where frequencies are estimated upon packet arrivals. 
%\FIK{does it appear in the literature or did we invent it? it is unclear from reading here} We do not invent them - we observe these patterns in the literature. Our modeling, their creation... Did not change the text. I think it is fine. 

%\emph{(i) Interval.} The window frequency of each flow is monitored at the end of each measurement. This method represents various limitations of aggregation and sampling~\cite{BUS,RHHH,FlowRadar}. 
%\emph{(ii) Improved interval.} Same as \textit{interval}, but flow frequencies are estimated per packet arrival. This represent the best case scenario for the Interval method. 
%\TT{(iii) Window.} a sliding window, where frequencies are estimated on packet arrivals. 

\begin{figure}[t!]
\subfloat[\label{fig:whyWindow2}An example of the periodic \textit{interval} and \textit{sliding window} methods. In this scenario, consider a threshold of nine packets. The solid-green flow is a window heavy-hitter as it has ten packets within the sliding window. However, the measurement interval method does not detect the green flow, as it only has five packets within the current interval (false negative). Intuitively, one can identify the green flow by lowering the threshold to 4 packets, but in that case, the striped red flow is detected as well (false positive).
] {\includegraphics[width=\columnwidth]%,height =2cm]
{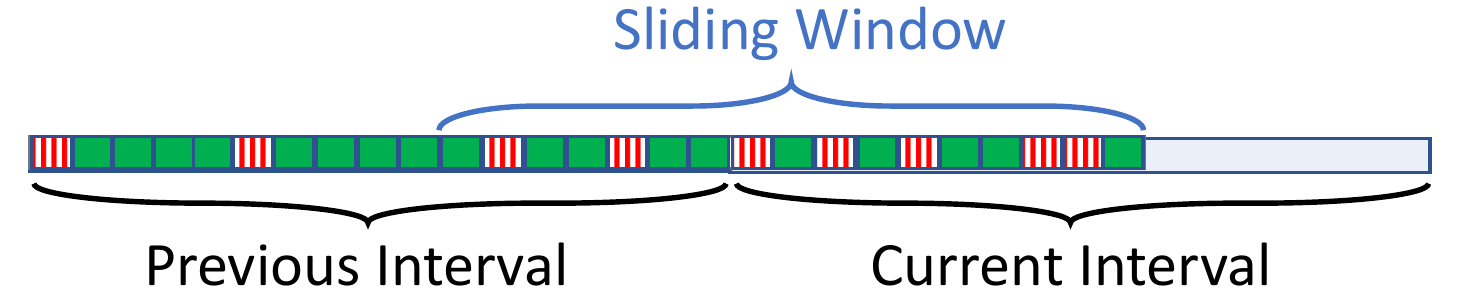}}\\
\subfloat[\label{fig:whyWindow} Effect of a new heavy hitter's frequency on its detection time. The x-axis is the ratio of the normalized heavy hitter's frequency and the user-defined threshold. The y-axis is the expected detection time in windows. For instance, when the frequency is twice the threshold, it takes a window algorithm half a window to detect the new heavy hitter whereas interval-based algorithms require \mbox{between 0.6-1.0 windows.}
] {\includegraphics[width=\columnwidth]%, height=4cm]
{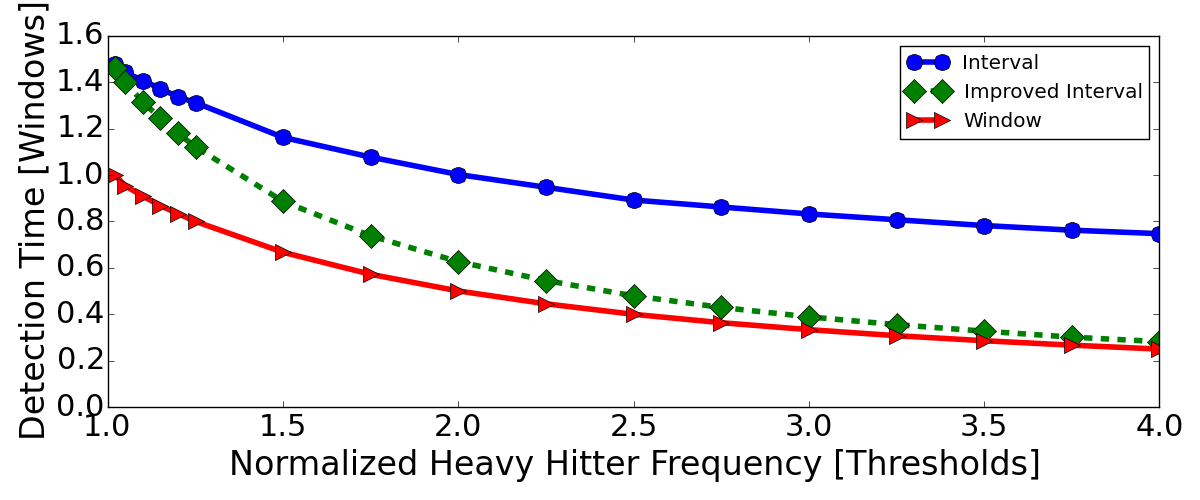}}
\caption{Sliding windows compared to intervals.}\vspace*{-0.4cm}
\end{figure}

Figure~\ref{fig:whyWindow} plots the detection time for each method as a function of the normalized frequency of the new heavy hitter. Intuitively, larger heavy hitters are detected faster, because less time passes before their normalized window frequency reaches the threshold.  Indeed, the \textit{sliding window} approach is always faster than the \textit{Interval} and \textit{improved Interval} methods. When the frequency is close to the detection threshold, we get up to $40\%$ faster detection time compared to the Interval method.  At the end of the tested range, sliding windows are still over $5\%$ quicker. The Interval method is the slowest, as it estimates frequencies only at the end of the measurement. Thus, such a usage pattern is undesired for systems such as load balancing and attack mitigation. 

%\IK{why do we need the following paragraph? We already referred to Fig 1a, then Fig 1b, and now get back to Fig 1a? If it's important, let's move it to the reference to Fig 1a, before the definitions; if not, let's delete it.}
%\paragraph{Sliding windows are more accurate}

%%%%%%%%%%%%%%%%%%%%%%%%%%%%%%
\section{Sliding window algorithms}\label{sec:algorithms}
Our next step is to make sliding windows accessible to cloud operators. We do so by first introducing new single-device algorithms that are significantly faster than existing techniques, and then extend them to efficient network-wide algorithms that combine information from many measurement points to obtain a global perspective. 

\subsection{Heavy Hitters on Sliding Windows}
\label{sec:HHAWAII}

\T{Natural approach.} Our goal is to produce faster sliding window algorithms.
Intuitively, one can accelerate the performance of a heavy hitter algorithm by sub-sampling the packets. That is, we would like to sample packets with a probability of $\tau$, use an HH algorithm with a window size of $W\cdot \tau$ packets, and then multiply its estimations by a factor of $\tau^{-1}$. Unfortunately, this does not yield the desired outcome as the number of samples from the window varies whereas sliding-window HH algorithms are designed for fixed-sized windows. 
%\FIK{I think that the problem for the reader here is that you mean that the algorithm should apply on the sampled stream, and therefore you'd want a fixed number of $\tau \cdot W$ packets (not $W$ packets); while in this context the reader would think of an algorithm on the full window. So this makes it hard to follow. Let me know if you want me to have a try at rewriting, or would like to do so.}
Specifically, since the actual number of samples from the $W$ sized window is distributed $\text{Bin}(W,\tau)$, this approach results in an additional error of  
\small$\pm \Theta\left(\sqrt{W(1-\tau)\cdot \tau^{-1}}\right)$
\normalsize~in the size of the reference window.  
Since we are interested in small values of $\tau$ to achieve speedup (see Section~\ref{sec:eval}), this \mbox{approach results in a considerable error.}

\T{\HHAWAII{} overview.}
The key idea behind \textit{\name} is to decouple the computationally expensive operation of updating a packet (\emph{Full update}) from the lightweight operation of \emph{Window update}.
Specifically, for each packet, \name{} performs the Full update operation with probability $\tau$; otherwise, it makes the quicker Window update.
%\emph{for every packet} to maintain a constant $W$-sized window and to avoid the error mentioned above. 
%For each packet, it also performs a {Full update} with probability $\tau$.
%while performing the computationally expensive update to the data structure only for a subset of the packets that is sampled with probability $\tau$. 

Therefore, \name alternates between the fast {Window updates} and the slower {Full updates}. %\FIK{Before it was written ``Window update" (with capital U) everywhere. I am in favor of changing, but please make sure you change throughout the paper.}
%Full updates are selected with a probability of $\tau$ for each packet. %, and we are interested in small values of $\tau$ (e.g., $1\%$). This means that, for most packets, \HHAWAII{} only performs lightweight Window updates. 
%\textit{Full updates} 
Full updates include (1) forgetting outdated data \emph{and} (2) adding a new item to the measurement data structure. On the other hand,  {Window updates} only involve (1) forgetting outdated data. 
That is, \name{} maintains a $W$-sized window but most of the packets within that window are missing. 
Thus, it attains speedup but avoids the additional error that is caused by uniform samples. The concept is exemplified in Figure~\ref{fig:DPI}. 

%Note that \HHAWAII{} is an extension of WCSS~\cite{WCSS}

%\IK{I really don't like this formulation, which turns on a huge red light (``incremental") for any reviewer. I believe we should stop at this point the key idea of Memento, and then in a new paragraph titled ``Implementation." (or just in the next paragraph), say that to  [assuming what I wrote is true -- is it?]} and the two are identical for $\tau =1$. Yet, WCSS is a single-device HH algorithm whereas our approach extends to HHH and network-wide settings. \IK{using the above formulation, this sentence may not be needed anymore}

\T{Implementation.}
\label{sec:existing}
For simplicity, we built \HHAWAII{} on top of an existing sliding window HH algorithm. This makes it easier to implement, verify, and then compare with the current approaches. We picked WCSS as the underlying algorithm~\cite{WCSS}, but our approach is general and works on other window algorithms as well (\eg~\cite{FAST,HungAndTing}).
\changed{Intuitively, when $\tau=1$, \HHAWAII{} becomes identical to WCSS as it performs a full update for each packet.}
\begin{figure}[t!]\centering
    \subfloat[Our HH algorithm, \HHAWAII{},  utilizes two update methods: a slow Full update, and a faster Window update that only updates the sliding window.  Speedup is achieved by performing Full updates for a small fraction of the packets. Here, \changed{the flow id} $4$ is inserted following the coin flip and $2$ leaves the window (regardless \mbox{of the coin flip).} \label{fig:DPI}]{\includegraphics[width = 0.9\columnwidth%, height=4cm
]{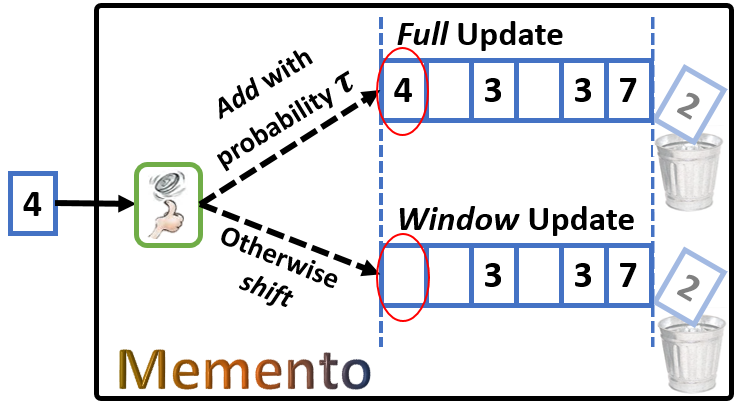}}\\\vspace*{-0.2cm}
    \subfloat[Our HHH algorithm, \HHHAWAII{}, simply updates \HHAWAII{} with a single random prefix, achieving constant time~complexity.\label{fig:VMI}]{\includegraphics[width = 0.9\columnwidth]{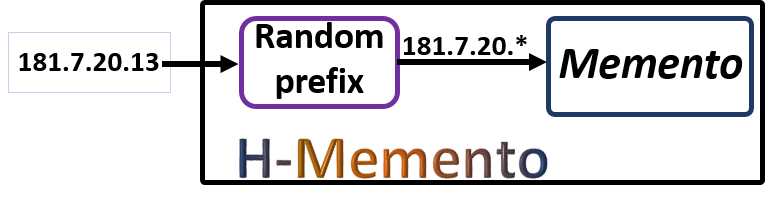}}
\caption{High-level overview of our algorithms. %\IK{please correct the vspace, too tight}
}
        \label{fig:contribution}%\vspace*{-0.8cm}
\end{figure}
\ifdefined\fullVersion
As detailed in Algorithm~\ref{alg:hhawaii}, given some error parameter $\epsilon_a$ such that $W\cdot \epsilon_a\gg1$, \HHAWAII{} divides the stream into \emph{frames} of size $\window$, where each frame is then further partitioned into $\numBlocks\triangleq\ceil{\frac{4}{\epsilon}}$  equal-sized \emph{blocks}. Intuitively, \HHAWAII{} keeps count of how many times each item arrived during the last frame, and each time this counter reaches a multiple of the block size, it records this  event as an \emph{overflow}.
\HHAWAII{} uses a queue of queues $\queueOfOverflows$, which contains $\numBlocks+1$ queues -- one queue for each block that overlaps with the current window. Each queue in $\queueOfOverflows$ contains an ordered list of items that overflowed in the corresponding block.
When a block ends, we remove the oldest queue from $\queueOfOverflows$, as it no longer overlaps with the window. Additionally, we append a new empty queue to $\queueOfOverflows$. Note that \HHAWAII{} does not count accurately, but instead uses a Space Saving~\cite{SpaceSavings} instance (denoted $\SSSInstance$) to approximately count the in-frame frequency.
% Space Saving Documentation
Space Saving (SS) is an algorithm that uses \emph{counters} to provide frequency estimations and to find the heavy hitters over a stream or interval. Allocated with $k=\ceil{4/\epsilon}$ counters, it guarantees that the additive error is bounded by $W/k\le W\epsilon/4$ (when the number of packets is $W$, as in \HHAWAII{}). We show that despite the approximate count within frames, \HHAWAII{} keeps the overall error bounded as guaranteed.
%Whenever a packet from an unmonitored flow arrives, SS takes over the minimal counter and associate it with the arriving flow. This ensures that the 
%Since the sum of counters is at most $W$ and there are $k$ counters, a counter that reached a value of $W/k$ cannot be replaced in the current frame. Therefore, once an item overflows, we 
Intuitively, \HHAWAII{} provides the machinery to both speed-up Space Saving and extend it to sliding windows.
%
%Specifically, its estimation error is at most $\blockSize$. 
Finally, $\SSSInstance$ is cleared at the end of each frame.
%frame, and overflows are kept for the entire window.

The frequency of an item is estimated by multiplying its number of overflows by the block size and adding the remainder of its appearance count as reported by $\SSSInstance$. In \cite{HHHMitzenmacher}, MST has one-sided error, and thus we choose to keep the error one-sided as well for comparison purposes. To do so, \HHAWAII{} adds $2\blockSize$ to each item's estimation. It then multiplies the result by $\tau^{-1}$ as a Full update is performed on average once per $\tau^{-1}$ packets.  The table $\sumOfOverflows$ counts the number of overflows for each item for quick frequency queries. \sout{Finally,} \HHAWAII{} de-amortizes the update of $\sumOfOverflows[x]$, \mbox{achieving constant worst case time.} 
\changed{To that end, when processing a packet, \HHHAWAII{} pops (at most) one flow from the queue of the oldest block (see lines~\ref{line:empty-tail}-\ref{line:remove-from-tail}). This ensures that the worst case update time is constant as we are guaranteed that by the end of the block we will have an empty queue and $B$ will be fully updated.
Finally, for finding the heavy hitters themselves (rather than just estimating flow sizes), \HHAWAII{} iterates over the flows with entry in $B$ and estimate their sizes. Since every heavy hitter must overflow in the window, we are guaranteed that it will have such an entry.}
%\scriptsize
\small
\begin{algorithm}[!t]
	\caption{\HHAWAII$(W,\epsilon,\tau)$}\label{alg:hhawaii}
	\begin{algorithmic}[1]
		\State Initialization: $\numBlocks=\ceil{4/\epsilon}, \SSSInstance=\mathit{SpaceSaving}.\mathit{init}(\numBlocks), \frameOffset = 0, \sumOfOverflows=\mbox{Empty hash},$
		%\Statex \hspace{2.02cm}
		$\queueOfOverflows = \mbox{Queue of $\numBlocks+1$ empty queues}$.
		\Function{WindowUpdate()}{}
		\State $\inc\frameOffset \mod W$
		\If {$\frameOffset = 0$} 
		$y$.\Call {flush}{\null}\label{line:flush}\Comment new frame
		\EndIf
		\If {$\frameOffset\mod\blockSize = 0$}
		\Comment new block
		\State$\queueOfOverflows$.\Call {pop}{\null}\label{line:pop}
		\State$\queueOfOverflows$.\Call {append}{new empty queue}
		\EndIf
		%\Comment Remove oldest element in oldest block if such exists.
		\If {$\queueOfOverflows$.tail is not empty} \Comment remove oldest item \label{line:empty-tail}
		\State $oldID=\queueOfOverflows$.tail.\Call {pop}{\null}\label{line:pop-from-b}
		\State \dec{\sumOfOverflows[oldID]}
		\If {$\sumOfOverflows[oldID]=0$}
		 $\sumOfOverflows$.\Call{remove}{$oldID$}\label{line:remove-from-tail}
		\EndIf
		\EndIf
		\EndFunction
		\Function{FullUpdate}{Item x}
		\State $WindowUpdate()$
		\State$\SSSInstance.$\Call{add}{$x$}  \Comment add item
		\If {$\SSSInstance$.\Call{query}{$x$}$= 0\ mod\ \blockSize$} \Comment{overflow}
		\State $\queueOfOverflows$.head.\Call{push}{$x$}\label{line:push}
		\If {$\sumOfOverflows$.\Call{contains}{x}\}}
		\inc{\sumOfOverflows[x]}
		\Else $\ \sumOfOverflows[x]=1$ \Comment{Adding $x$ to $\sumOfOverflows$}
		\EndIf
		\EndIf
		\EndFunction
		\Function{Update}{Item x}
		\If {$\mathit{Uniform}(0,1) \le \tau$}
		 \Call{FullUpdate}{$x$}
		\Else
		    \Call{WindowUpdate}{$x$}
		\EndIf
		\EndFunction
		\Function{Query}{Item x}
		\If {$\sumOfOverflows$.\Call{Contains}{$x$}}
		\State\Return {$\tau^{-1}\cdot\left(\blockSize\cdot\parentheses{\sumOfOverflows[x]+2} + \parentheses{\SSSInstance.\Call{query}{x} \mod \blockSize}\right)$}
		\Else\, \Return $ \tau^{-1}\cdot\left(2\blockSize+y.\Call{query}{x}\right)$ \Comment {no overflows}
		\EndIf
		\EndFunction
	\end{algorithmic}
\end{algorithm}
\normalsize
\else
\changed{We provide more background on the algorithmic implementation of WCSS and Memento in the full version of the paper~\cite{fullVersion}. \IK{see my comment, please add a reference placeholder}}
\fi

\subsection{Extending to Hierarchical Heavy Hitters}
\label{sec:HHHAWAII}
Hierarchical heavy hitters monitor subnets and flow aggregates in addition to individual flows. We start by introducing existing approaches for HHH measurements on \mbox{sliding windows.} 

\T{Existing approaches.} 
In MST~\cite{HHHMitzenmacher},
multiple HH instances are used to solve the HHH problem. This design trivially extends to sliding windows by replacing the HH building blocks with window algorithms (e.g., WCSS~\cite{WCSS}). This was proposed by~\cite{HHHMitzenmacher} but dismissed as impractical. 
Replacing the underlying algorithms with \HHAWAII{} is slightly better as we can perform Window updates to most instances.
%instances and at most one Full update. 
Unfortunately, the update complexity remains $\Omega(H)$ which may still be too slow. In contrast, \HHHAWAII{} achieves constant time updates, matching the complexity of interval algorithms~\cite{RHHH}.
%whereas for intervals $O(1)$ updates are achievable. 
Another natural approach comes from the RHHH~\cite{RHHH} algorithm. RHHH shares the same data structure as MST but randomly updates at most a single HH instance which allows for constant time updates. 
\changed{Additionally, it makes small changes to the query procedure to compensate for the sampling error and guarantees that (with high probability) it will have no false negatives.}
This method does not work for sliding windows, as each HH instance is updated a varying number of times and monitors a possibly different window. 

\T{\HHHAWAII{}'s overview.}
In \HHHAWAII{} we differ from the lattice structure of RHHH and MST. That is, we maintain a single large \HHAWAII{} instance and use it to monitor all the sampled prefixes.  Therefore, we use just one sliding window to measure all subnets, which the underlying \HHAWAII{} does in constant time.  This approach also has engineering benefits such as code reuse, simplicity, and maintainability. 
%Our analysis shows accuracy guarantees despite the single instance. 
The update procedure of \HHHAWAII{} is illustrated in Figure~\ref{fig:VMI}.
%\IK{From this point to the Network-wide description, we have 1-2 pages of notations and pseudo-code that just cut the flow for the reader, with little added benefit. We redefine most of it in the analysis anyway. I would move them to the appendix. I suggest you (=Ran, Gil and Shay) talk about it and decide.}
Next, we proceed with notations and definitions for the HHH problem, which we later use to detail \HHHAWAII{}.
%We proceed with notations and definitions for the HHH problem, and then provide the full description of .
%However, to accurately present \HHHAWAII{} we need to formally define the HHH problem and provide notations in Section~\ref{sec:HHHNotations}. The full description of \HHHAWAII{} follows 
%in Section~\ref{accurateDescription}. 

%A naive acceleration method would be to update a \textbf{single random prefix} instead of all prefixes, as suggested by RHH  interval HHH measurement. For instance, each incoming packet first selects a random prefix with probability (w.p.) $1/H=1/5$ and then updates the corresponding \name instance. The \name (samples) the packet w.p. $\tau$.

 %\footnotesize
 %\tiny
 %\scriptsize{}
 \small
\begin{table}[!t]
    \centering
    \begin{tabular}{|c|l|}
        \hline
        Symbol & Meaning \tabularnewline
        \hline
        $\mathbb{S}$ & The packet stream. \tabularnewline
        \hline
        $N$ & Current number of packets (the stream length).  \tabularnewline
        \hline
        $W$ & The window size.  \tabularnewline
        \hline
        $H$ & Size of the hierarchy.  \tabularnewline
                \hline
        $\tau$ & Sampling probability.   \tabularnewline
        \hline
        $V$ & Sampling ratio for HHH, $V\triangleq \frac{H}{\tau}$. \tabularnewline
        \hline
        $S^i_x$ & Variable for the i'th appearance of a prefix $x$.
        \tabularnewline
        \hline
        $S_{x}$ & Sampled prefixes with id $x$.
        \tabularnewline
        \hline
        $S$ &  Sampled prefixes from all ids.
        \tabularnewline

        \hline
        $\mathcal U$ & Domain of fully specified items. \tabularnewline
        \hline
        $\epsilon,\epsilon_s,\epsilon_a $ & Overall, sample, algorithm's error guarantee.  \tabularnewline
        \hline
        $\delta,\delta_s , \delta_a$ &   Overall, sample, algorithm's confidence. \tabularnewline
        \hline
        $\theta$ & Threshold parameter.    \tabularnewline
        \hline
        $C_{q|P}$ & Conditioned frequency of $q$ with respect to  $P$. \tabularnewline
        \hline
        $G(q|P)$ & Subset of $P$ with the closest prefixes to q.  \tabularnewline
        \hline
        $f_q$ & Frequency of prefix q \tabularnewline
        \hline
        $\widehat{f^{+}_q},\widehat{f^{-}_q}$ & Upper and lower bounds for $f_q$.  \tabularnewline
        \hline
        $Z$ &   inverse CDF of the normal distribution . \tabularnewline
        \hline
        $\mathcal B$ &  per-packet control bandwidth budget.  \tabularnewline
        \hline
        $\mathbb{O}$ &   the minimal header size (in bytes),  \tabularnewline
        \hline
        $E$ &   bytes required to report a packet.  \tabularnewline
        \hline
        $m$ &   number of measurement points.  \tabularnewline
        \hline
        $b$ &   number of samples in each report.  \tabularnewline
        \hline
        $\mathfrak E_b $ &  overall error in network-wide settings. \tabularnewline
        \hline
    \end{tabular}
    \caption{Summary of notations}
    \label{tbl:notations}
    \vspace{-0.8cm}
\end{table}
\normalsize
\T{HHH notations and definitions.}
\label{sec:HHHNotations}
For brevity, Table~\ref{tbl:notations} summarizes the notations used in this work. We consider IP prefixes (\eg $181.*$).  A prefix without a wildcard (e.g., $181.7.20.6$) is called \emph{fully specified}. The notation $\mathcal U$ is the domain of the fully specified items. %\FIK{but below $e \in U$ is not fully specified} it is! 
A prefix $p_1$ \emph{generalizes} another prefix $p_2$ if $p_1$ is a prefix of $p_2$. For example, $181.7.20.*$ and $181.7.*$ generalize the (fully specified) $181.7.20.6$.
The \emph{parent} of a prefix is the longest generalizing prefix, \eg $181.7.*$ is $181.7.20.*$'s parent.  \mbox{Definition~\ref{def:generalization} formalizes this concept.}
\begin{definition}[Generalization]
    \label{def:generalization}
    Let $p,q$ be prefixes. We say that $p$ \textit{generalizes} $q$ and denote
    $p \preceq q$ if for each dimension $i$, $p_i = q_i$ or $p_i\preceq q_i$. 
    We denote the set of fully specified items generalized by $p$ using $H_p\triangleq \{e\in \mathcal U\mid e\preceq p\}$.  Similarly, the set of every fully specified item that is generalized by a set of prefixes $P$ is denoted by: $H_P\triangleq \cup_{p\in P} H_p$.
    Moreover, denote $p\prec q$  if $p \preceq q$ and $p\neq q$.
\end{definition}

Definition~\ref{def:generalization} also deals with the more general multidimensional case. For example, we can consider tuples of the form (source IP, destination IP). In that case, fully specified ``prefixes'' are fully determined in both dimensions,
\eg $(\langle181.7.20.6\rangle, \langle 208.67.222.222\rangle)$.
Also, observe that ``prefixes'' now have two parents, \eg
$(\langle181.7.20.*\rangle, \langle 208.67.222.222\rangle)$
and
$(\langle181.7.20.6\rangle, \langle 208.67.222.*\rangle)$
are both parents to\\
$(\langle181.7.20.6\rangle, \langle 208.67.222.222\rangle)$. 

The \textit{size of the hierarchy ($H$)}  is the number of different prefixes that generalize a fully specified prefix.
Next, we look at a set of prefixes $P$ and denote $G(p|P)$ as the set of prefixes in $P$ that are most closely generalized by the prefix $p$. That is, $G(p|P) \triangleq     \left\{ {h:h \in P,h \prec p,\nexists\,h' \in P\,\,s.t.\,h \prec h' \prec p} \right\}$.

For example, consider the prefix $p=<142.14.*>$ and the set $P = \left\{ {<142.14.13.*>,<142.14.13.14>} \right\}$, then we have $G(p|P) = \left\{<142.14.13.*>\right\}$.
%\FIK{I don't understand, this prefix doesn't seem to generalize $p$}
%Ran: You are correct, fixed.
The window frequency of a prefix $p$ is the total sum of packets within the window that are generalized by $p$, \ie  $f^W_p \triangleq  \sum\nolimits_{e \in H_p} {f^W_e} .$ Note that each packet is generalized by $H$ different prefixes. 
This motivates us to look at the \emph{conditioned} (residual) frequency that a prefix $p$ adds to a set of already selected prefixes $P$. The {conditioned frequency} is defined as:
$C_{p\mid P}\triangleq\allowbreak \sum_{e\in H_{P\cup\{p\}} \setminus H_P} f_e$.

We denote by $X^W_p$ the number of times prefix $p$ is sampled in the window, $\widehat{X^W_p}^{+}$ is an upper bound on $X^W_p$ and $\widehat{X^W_p}^{-}$ is a lower bound.  The notation $V \triangleq \frac{H}{\tau}$ stands for the sampling rate of each specific prefix.
We define:\\
$\widehat{f^W_p}\triangleq \widehat{X^W_p} V$ -- an estimator for $p$'s frequency.\\
$\widehat{f^W_p}^{+}\triangleq \widehat{X^W_p}^{+} V$ -- an upper bound for $p$'s frequency.\\
$\widehat{f^W_p}^{-}\triangleq \widehat{X^W_p}^{-} V$ -- a lower bound for $p$'s frequency.

We now define the \emph{depth} of a prefix (or a prefix tuple). Fully specified items are of depth 0, their parents are of depth 1 and more generally, the parent of an item with depth $x$ is of depth $x+1$.
 $L$ denotes the maximal depth; observe that this may be lower than $H$ (\eg in 2D byte-hierarchies $H=25$ and $L=9$). Hierarchical heavy hitters are calculated by iterating over all fully specified items (depth $0$). If their frequency is larger than a threshold of $\theta W$, we add them to the set $HHH_0$. Then, we go over all the items with depth $1$ and if their \emph{conditioned} frequency, with regard to $HHH_0$, is above $\theta W$, we \textbf{add} them to the set. We name the resulting set $HHH_1$ and repeat the process $L$ times, until the set $HHH_L$ contains the (exact) hierarchical heavy hitters. 
%\ran{\ \\ 
%This is not clear. $HHH_L$ can only contain the fully general $*$. I assume that you mean that $HHH=\cup_{\ell=1}^L HHH_\ell$ is the set of HHH.
%\ \\} 
Unfortunately, we need space that is linear in the stream size to calculate exact HHH (and even plain heavy hitters)~\cite{TCS-002}.  Hence, as done by previous work~\cite{RHHH,HHHMitzenmacher,Cormode2003,Cormode2004,CormodeHHH}, \mbox{we solve approximate HHH.}

%\FIK{Did previous papers like MST solve exact or approximate? If they also did approximate, please say that we do as previous papers in the literature and add refs.}

A solution to the approximate HHH problem is a set of prefixes that satisfies 
the \emph{Accuracy} and \emph{Coverage} conditions (Definition~\ref{def:deltaapproxHHH}).  Here, Accuracy
means that the estimated frequency of each prefix is within acceptable error bounds and Coverage means that the conditioned frequency of prefixes not included in the set is below the threshold. 
This does not mean that the conditioned frequency of prefixes that are included in the set is above the threshold. Thus, the set may contain a small \mbox{number of subnets misidentified as HHH (false positives).}
\begin{definition}[Approximate HHHs]
    \label{def:deltaapproxHHH}
     An algorithm $\mathbb A$ solves {\sc {$(\delta,\epsilon, \theta)$ - Approximate Window Hierarchical Heavy Hitters}} if it returns a set of prefixes $P$ that, for an arbitrary run of the algorithm, satisfies: 
     %\FIK{There seems to be a bug -- you use notations for the frequency upper bound that you only define two paragraphs below!}\\
     %Ran: fixed.
     
\textbf{Accuracy:} If $p\in P$ then \small$\Pr \left( {\left| {{f^W_p} - \widehat {{f^W_p}}} \right| \le \varepsilon W} \right) \ge 1 - \delta.$\normalsize 

\textbf{Coverage:} If $q \notin P$ then \small$\Pr \left( {{C_{q|P}} < \theta W} \right) \ge 1 - \delta .$\normalsize
\end{definition}

%\IK{But the problem names that the Memento algorithms are claimed to solve in Section 4.1 are different problems!? So why do we define this problem here?}

%\IK{It looks a bit weird that we define everything, then present the algorithm, then jump to the next topic without even saying that the algorithm does what it's designed for. Shouldn't we move the results from the analysis to here?}

\begin{algorithm}[t]
    %\scriptsize
    \small
    \begin{algorithmic}[1]
\Statex Initialization:  $\HHAWAII.\mathit{init}(H\cdot \epsilon_a^{-1}, W,\tau\cdot H)$
        \Function{Update}{ $x$}
        \State $Memento.update(RandomPrefix(x))$
        \EndFunction

        \Function{Output}{$\theta$}
        \State $HHH = \phi$
        \For{Level $\ell = 0$ up to $L$ }
        \For{ each $p$ in level $\ell$} \Comment{Only over  prefixes with a counter.}
        \State \label{line:cp}$\widehat{C_{p|HHH}} = \widehat{f^W_p}^{+} + calcPred(p,HHH) $
        \State \label{line:accSample}$\widehat{C_{p|HHH}} = \widehat{C_{p|HHH}}+ 2{{Z_{1 - {\delta}}}\sqrt {VW} }$\Comment{\scriptsize{} \mbox{Compensate for sampling}}
        \If {$\widehat{C_{p|HHH}}\ge \theta N$}
        $ HHH = HHH \cup \{p\}$ %\Comment{$p$ is an HHH candidate}
        \EndIf
        \EndFor
        \EndFor
        \State\Return $HHH$
        \EndFunction

    \end{algorithmic}
\normalsize
    \caption{\HHHAWAII{} $(W, \eps_a, \tau)$ }
    \label{alg:Skipper}
\end{algorithm}

\begin{algorithm}[t]
    %\scriptsize
    \small
    \begin{algorithmic}[1]
        \Function{calcPred}{prefix $p$, set $P$}
        \State $R = 0$
        \For{ each $h\in G(p|P)$}
         \label{alg:first}$R = R - \widehat{f^W_h}^{-}$
        \EndFor
        \State \Return $R$
        \EndFunction
    \end{algorithmic}
    \normalsize
    \caption{calcPred for one dimension }
    \label{alg:randHHH}
\end{algorithm}

\begin{algorithm}[t]
    %\scriptsize
    \small
    \begin{algorithmic}[1]
        \Function{calcPred}{prefix $p$, set $P$}
        \State $R = 0$
        \For{ each $h\in G(p|P)$}
        \label{alg:second}$R = R - \widehat{f_h}^{-}$
        \EndFor
        \For{ each pair $h,h'\in G(p|P)$}
        \State $q=glb(h,h')$
        \If {$\not\exists h_3 \neq h,h'\in G(p|P), q \preceq h_3$}
        \label{alg:third}$R = R + \widehat{f_q}^{+}$
        \EndIf
        \EndFor
        \State \Return $R$
        \EndFunction
    \end{algorithmic}
    \normalsize
    \caption{calcPred for two dimensions}
    \label{alg:randHHH2D}
\end{algorithm}

\T{\HHHAWAII{}'s full description.}
\label{accurateDescription} A pseudo-code for \HHHAWAII{} is given in Algorithm~\ref{alg:Skipper}. 
The output method performs the HHH set calculation as explained %\FIK{where was this explained? I don't follow the pseudo-code at all, and don't see any explanation. Didn't we add the pseudo-code to explain the algorithm?} 
for exact HHH. The calculation yields an approximate result as we only have an approximation for the frequency of each prefix. Thus, we conservatively estimate conditioned frequencies. 

 For two dimensions, we use the inclusion-exclusion~principle (Definition~\ref{def:glb}) to avoid double counting.
\begin{definition}
    \label{def:glb}
    Denote by $glb(h,h')$ the greatest lower bound of $h$ and $h'$.
    $glb(h,h')$ is a unique common descendant of $h$ and $h'$ s.t. $\forall p : \left(q\preceq p\right)\wedge \left(p \preceq h\right)\wedge \left(p \preceq h'\right) \Rightarrow p = q.$
    If $h$ and $h'$ have no common descendants, $glb(h,h') =0$
\end{definition}

A pseudo code for the update method is given in Algorithm~\ref{alg:Skipper}, which is the same for one and two dimensions. The difference between these is encapsulated in the {\sc calcPred} method which uses Algorithm~\ref{alg:randHHH} for one dimension and Algorithm~\ref{alg:randHHH2D} for two.
In two dimensions, $C_{p|HHH}$ is first set in Line~\ref{line:cp} of Algorithm~\ref{alg:Skipper}. 
%\FIK{but didn't we say a page ago that we start from depth 0 and go up? Why does the algorithm work the other way around? Or is the level different from depth? In fact, where is level defined?}
Then, we remove previously selected descendant heavy hitters (Line~\ref{alg:second}, Algorithm~\ref{alg:randHHH2D}) and finally we add back the common descendants (Line~\ref{alg:third}, Algorithm~\ref{alg:randHHH2D})). The sampling error is accounted for in Line~\ref{line:accSample}. 
Intuitively, our analysis shows which $\tau$ values guarantee that \HHHAWAII{} solves the approximate HHHs problem.
\ifdefined\fullVersion
A formal proof of the algorithm's correctness appears in Section~\ref{sec:analysis}.
\else

 % proof appears in the full version of this paper~\cite{TR}.
\fi

\newcommand{\thirdFigWidth}{5.12cm}

\subsection{Network-Wide Measurements}
\label{sec:distributed}
As Figure~\ref{fig:paperInAFig} illustrates, we now discuss a centralized controller that receives data from multiple clients and forms a network-wide view of the traffic (\eg network-wide HH or HHH). Similarly to~\cite{RexfordNetworkwide, HHTagging} we assume that there are several measurement points and that each packet is measured once. Our design focus is on two critical aspects of this system: (1) a \textit{communication method} between the clients and the controller that conveys the information in a timely and bandwidth-efficient manner, and (2) a fast \textit{controller algorithm}.

%\IK{the goal of this network-wide part is a bit unclear. Find the first network-wide window algorithm? Just introduce and evaluate simple intuitive models? etc.} 

\T{Formal model.} 
First, we define a sliding window in the network-wide model as the last $W$ packets that were measured somewhere in the network. 
Intuitively, we want the controller to analyze the traffic of the most recent $W$ packets \emph{in the entire network}, as observed by the measurement points. For example, we may want to monitor the last million packets in the entire network.

\T{(1) Communication method.}
We now suggest three methods to communicate with the controller. For each method, the frequency of messaging with the control is according to the bandwidth budget ($\mathcal B$). That is, smaller reports can be sent more frequently but also deliver less information.  
%\shay{for the theoretical accuracy guarantee we must mention here that the measurement points send Window update messages periodically even if they did not sample enough traffic. Otherwise, the deterministic bound of $b \cdot k$ old packets in a window will not hold.}

\T{Aggregation.}
Existing HH algorithms are often \emph{mergeable}, \ie the content of two HH instances can be efficiently merged~\cite{AndersonIMSUM}. We are unaware of previous work that targets HHH, but since MST~\cite{HHHMitzenmacher} and RHHH~\cite{RHHH} use HH algorithms as building blocks then they can be merged as well. This capability motivates the \emph{Aggregation} communication method. In this method, each client periodically transmits all the entries of its HH algorithm to the controller. Given enough bandwidth, this method is intuitively the most communication-efficient, as all data is transmitted. 
However, as each message is large, we infrequently send messages to meet the bandwidth budget, which creates inaccuracies. %This creates a large delay between messages which increase the error.
%\FIK{This last sentence could be made stronger; I don't think it takes time because the message is large, but because we want some time to form the HHH local aggregates. ----  look at the explanation at the end of 5.3.2. It's better than what's written here.}

\T{Sample.}
Most network devices are capable of transmitting uniform packet samples to the controller. Motivated by this capability, the Sample method samples packets with a fixed probability $\tau$, and sends a report to the controller once per $\tau^{-1}$ packets. Thus on average, each message contains a single sample. This information is enveloped by the usual packet headers that are required to deliver the packet in the network.  We observe that this uses a significant portion of the bandwidth for the header fields of the transmitted packet. Yet, the Sample method is considerably easier to deploy than the Aggregate option, as the nodes only sample packets and do not run the measurement algorithms. The communication pattern is network-friendly as we get a stable flow of traffic from the clients to the controller.   

%each message in this method contains the identifier of the sampled packet as well as the number of non-sampled packets from the previous sample. 

\T{Batch.}
The Batch approach is designed to utilize bandwidth more efficiently than the Sample.
The idea is simple: instead of transmitting,  on average, a single sample per message, we send on average $b$ samples (\eg 100) per report.
That is, we send a report once per $\tau^{-1}b$ packets, containing all the sampled packets within this period. This pattern utilizes the bandwidth more efficiently as the payload ratio of the message is considerably higher. However, it also creates delays in reporting new information to the controller. Our analysis is used to find the optimal batch size $b$ and minimize \mbox{the total error.} 

\begin{figure}[t]
{\includegraphics[width=1.03\columnwidth, height=4cm]{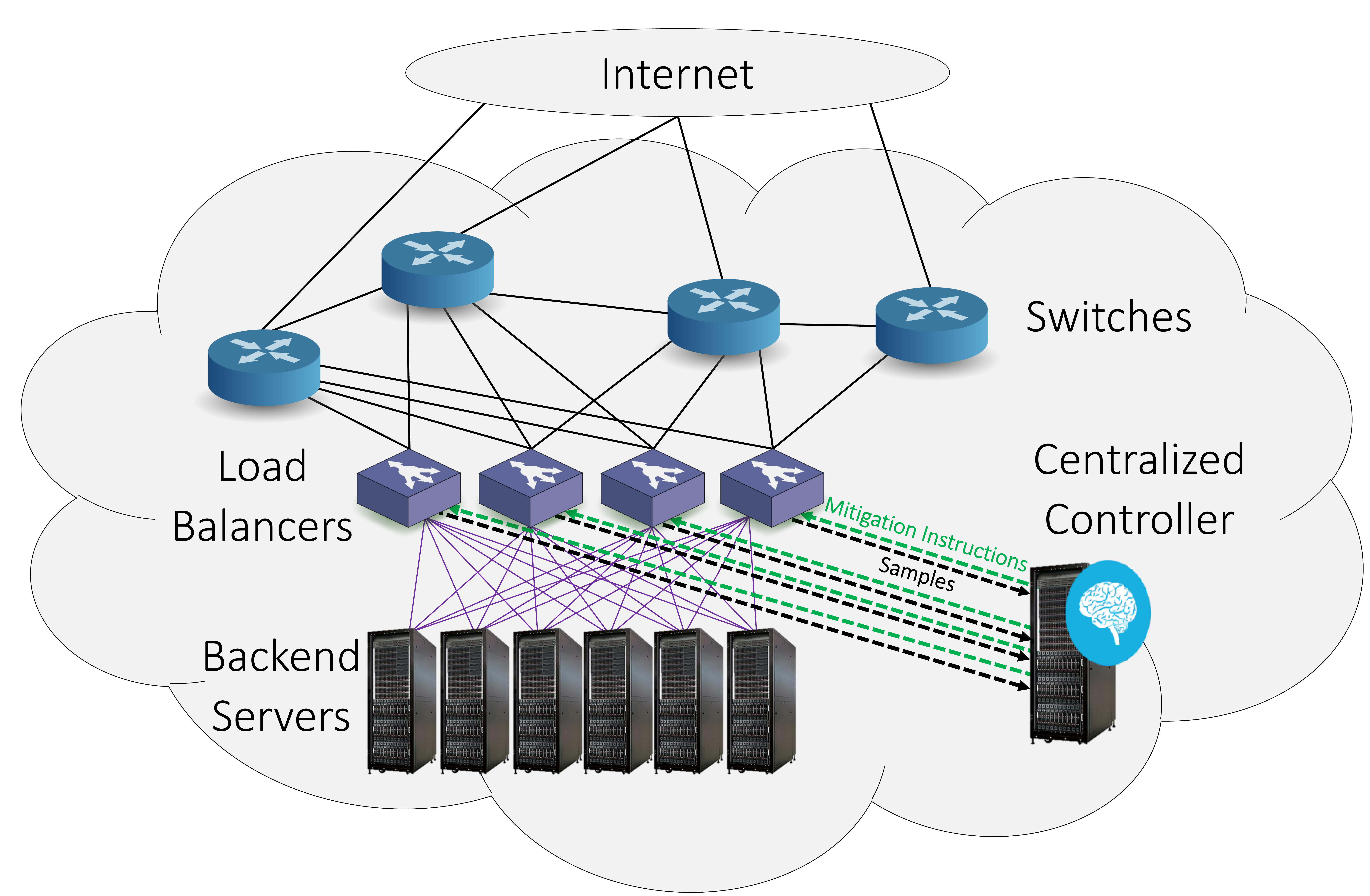}}
\caption{An overview of our system. The clients (load-balancers) perform the measurements and periodically send information to a centralized controller. The controller then runs a global sliding-window analysis. For example, in the case of an HTTP flood, it can mitigate the attack by instructing the clients which subnets to rate-limit or block.  %orders to the clients. 
\label{fig:paperInAFig}}
\end{figure}

\T{(2) Controller algorithm.}
The controller maintains an instance of \HHAWAII{} or \HHHAWAII{} where we term the respective algorithms D-\HHAWAII{} and D-\HHHAWAII{}. The controller behaves slightly differently in each option.

\T{Aggregation.} Aggregation is used in this study only as a baseline. Thus, instead of implementing a specific algorithm, we simulate an \emph{idealized} aggregation technique with an unlimited space at the controller and no accuracy losses upon merging. 
As we later show, the Sample and Batch approaches outperform this Aggregation method; thus, we conclusively demonstrate that they are superior to \emph{any} aggregation technique. %\FIK{rewrote this last sentence, please check}
%As we see later, our own methods are superior to this idealized Aggregation technique. This implies that they are better than any aggregation algorithm. 
%\IK{1. why? 2. did we say it before? it lands our of nowhere} 

\T{Sample and Batch.} In the Sample and Batch schemes, the controller maintains a \HHAWAII{} or \HHHAWAII{} instance. When receiving a report, it first performs a Full update for each sampled packet and then makes Window updates for the un-sampled ones. 
In total, the Sample performs $\tau^{-1}$ updates and the Batch performs $\tau^{-1}b$ updates. 

%\T{Batch.} The controller treats as following: It performs a Full updates for the all the sampled packets and Window updates to the rest of the reported packets. %for all non-sampled packets and  a .

\section{Analysis}
\label{sec:AnalOverviewAnalysis}
This section is divided into two parts; first, Section~\ref{anal:single} analyzes our single-device \HHAWAII{} and \HHHAWAII{} algorithms and shows accuracy guarantees. 
%that the sampling approach works. \FIK{maybe rephrase the end of the sentence to sth more specific? ``and provides bounds on..."?}
\ifdefined\fullVersion
\else
Due to space limits, some proofs are omitted from this version and are available in an online, anonymized,  full version of this paper~\cite{fullVersion}.
\fi
Next, Section~\ref{anal:networkwide} analyzes our network-wide D-\HHAWAII{} and D-\HHHAWAII{} algorithms, and explains how to find the optimal batch size (in terms of guaranteed error)  given a certain (per-packet) bandwidth budget. 

\subsection{\HHAWAII{} and \HHHAWAII{} Analysis}
\label{anal:single}

%\IK{I am confused. We said that we get 1-2 more pages to include some of the analysis. Here we don't include any new analysis, lemmas, theorems, etc. Why?}

This section surveys the main theoretical results for \HHAWAII{} and for \HHHAWAII{}. These assure correctness as long as the sampling probability is large enough. 

%a certain bound.

\T{Formal model.} Our traffic is modeled as a stream $\mathbb{S}$. It is  initially empty, and a packet is added at each step.  A sliding-window algorithm considers only the last $W$ packets, denoted as  $\mathbb{S}^W$.
The notation $f^{W}_e$ denotes the frequency of flow $e$ in $\mathbb{S}^W$. Given $e$, a heavy hitters algorithm provides an estimator $\widehat{f^{W}_e}$ for $f^{W}_e$.
We formalize the problem as follows:

\begin{definition}
    \label{Def:probFE}
    An algorithm solves {\sc {$(\epsilon, \delta)$ - Window Frequency Estimation}} if given a query for a flow ($x$),
    it provides  $\widehat{f^W_{x}}$ such that
    $\Pr \left[ {\left| {{f^W_x} - \widehat {{f^W_x}}} \right| \le \varepsilon W} \right] \ge 1 - \delta.$
\end{definition}

\T{\name{}.} Theorem~\ref{th:main} is the main theoretical result for \name. It states that \name solves the {\sc {$(\epsilon, \delta)$ - Window Frequency Estimation}} problem  for $\varepsilon = \varepsilon_a + \varepsilon_s$ whenever it is allocated  $O(1/\epsilon_a)$ counters and has a sampling probability that satisfies $\tau\ge Z_{1-\frac{\delta}{4}}W^{-1}\epsilon_s^{-2}$, where
$Z$ is the inverse of the cumulative density function of the normal distribution with mean $0$ and standard deviation of $1$. Note that $Z_{1-\frac{\delta}{4}}$ satisfies $Z<4$ for any $\delta> 10^{-6}$.
In other words, {the theorem emphasizes the trade off between the amount of space allocated and the sampling rate, for achieving a target error bound $\epsilon$.} Specifically, if the algorithm has many counters (\ie have a low $\epsilon_a$), then we can afford a higher $\epsilon_s$ (\ie the sampling rate can be low). \ifdefined\fullVersion
The complete analysis for \HHAWAII{} and \HHHAWAII{} is presented in Appendix~\ref{sec:analysis}.
\else
\mbox{The proof appears in the full version of this paper~\cite{fullVersion}.}
\fi

 \begin{theorem}
 \small
     \label{th:main}
          \HHAWAII{} solves {\sc {$(\epsilon, \delta)$ - Windowed Frequency Estimation}} for $\varepsilon = \varepsilon_a + \varepsilon_s$ and $\tau\ge Z_{1-\frac{\delta}{4}}W^{-1}\epsilon_s^{-2}$.\normalsize
 \end{theorem}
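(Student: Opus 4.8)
The plan is to separate the two error sources named in the statement. Let $X^W_x$ denote the true number of \emph{Full-updated} (sampled) occurrences of $x$ among the last $\window$ packets, and write the estimator as $\widehat{f^W_x}=\tau^{-1}\widehat{X^W_x}$, where $\widehat{X^W_x}=\blockSize\cdot(\sumOfOverflows[x]+2)+\big(\SSSInstance.\mathrm{query}(x)\bmod\blockSize\big)$ is the internal estimate of $X^W_x$ produced before the $\tau^{-1}$ rescaling. I would then apply the triangle inequality to
\[
\widehat{f^W_x}-f^W_x=\tau^{-1}\big(\widehat{X^W_x}-X^W_x\big)+\big(\tau^{-1}X^W_x-f^W_x\big),
\]
bounding the first (approximation) term by $\eps_a\window$ and the second (sampling) term by $\eps_s\window$, each on an event of probability $\ge 1-\delta/2$, so that a union bound yields the claimed $(\eps_a+\eps_s,\delta)$ guarantee.

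For the sampling term, since each packet triggers a Full update independently with probability $\tau$, we have $X^W_x\sim\mathrm{Bin}(f^W_x,\tau)$, so $\tau^{-1}X^W_x$ is an unbiased estimator of $f^W_x$ with variance $\tau^{-1}f^W_x(1-\tau)\le\window/\tau$. I would approximate the centred Binomial by a normal (rigorously via Berry--Esseen, valid since $\tau\window\gg1$ under the hypothesis) to obtain $\p{\big|\tau^{-1}X^W_x-f^W_x\big|>\eps_s\window}\le 2\big(1-\Phi(\eps_s\sqrt{\tau\window})\big)$, using the worst-case variance $f^W_x\le\window$. Forcing the right-hand side below $\delta/2$ requires $\eps_s\sqrt{\tau\window}\ge Z_{1-\frac{\delta}{4}}$, which rearranges to $\tau\ge Z_{1-\frac{\delta}{4}}^2\window^{-1}\eps_s^{-2}$, matching the threshold in the statement (there $Z_{1-\delta/4}$ denotes this $O(1)$ quantile factor, which the paper bounds by $4$; the \emph{two-sided} tail is what replaces $Z_{1-\delta/2}$ by $Z_{1-\delta/4}$).

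For the approximation term I would condition on the realised set of Full-updated packets. Restricted to these packets, \name{} behaves as the underlying window Space-Saving structure (WCSS with $\numBlocks=\lceil 4/\eps_a\rceil$ blocks), the Window updates merely advancing the $\window$-packet window for the skipped packets. I would invoke WCSS's deterministic guarantee: the Space-Saving core contributes additive error at most (number of sampled packets in the window)$/\numBlocks$, while the queue-of-queues together with the $+2\blockSize$ term conservatively account for the at most two partial blocks straddling the window boundary, keeping the estimate a one-sided over-estimate as in MST. Since the number of sampled packets in a $\window$-packet window concentrates around $\tau\window$ (Chernoff, on an event of probability $\ge1-\delta/2$ that is already implied by the $\tau$ fixed above), this gives $\big|\widehat{X^W_x}-X^W_x\big|=O(\eps_a\tau\window)$, hence $\tau^{-1}\big|\widehat{X^W_x}-X^W_x\big|\le\eps_a\window$ after rescaling.

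The step I expect to be most delicate is precisely this rescaling. A data-structure error of $\Theta(\eps_a\tau\window)$ measured on the sampled substream is harmless, but if any part of the bookkeeping (block rounding, the boundary correction, or the Space-Saving slack) scaled instead like $\blockSize$ \emph{independently} of $\tau$, multiplying by $\tau^{-1}$ would amplify it and destroy the bound for small $\tau$. Making the error genuinely proportional to the sampled mass $\tau\window$ --- rather than to the raw window length --- is exactly what distinguishes \name{} from the naive ``sub-sample then run a fixed-window HH algorithm'' approach, whose $\Theta(\sqrt{\window(1-\tau)/\tau})$ fluctuation in the sampled window size is noted earlier to be fatal. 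Concretely, the crux is to verify that conditioning on the sample set really does reduce the approximation term to the deterministic WCSS guarantee with its count taken over the \emph{actual} sampled occurrences, so that the two $\delta/2$ events compose to the final $\eps=\eps_a+\eps_s$, $\delta$ bound.
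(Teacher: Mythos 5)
Your proposal is correct in substance and uses the same top-level decomposition as the paper: write the error as (data-structure error on the sampled substream, rescaled by $\tau^{-1}$) plus (sampling error), bound each on an event of probability $\ge 1-\delta/2$, and union-bound. Where you genuinely diverge is in the probabilistic tool for the sampling term. The paper models the sampling as a balls-and-bins process with $V=H/\tau$ bins and $W$ balls, passes to independent Poisson variables via the Poisson-approximation lemma (Corollary 5.11 of Mitzenmacher--Upfal, paying a factor of $2$ in the failure probability), and then applies a normal-approximation confidence interval for Poisson variables with worst-case mean $W/V$. You instead observe that for the HH case each packet is Full-updated by an independent coin flip, so $X^W_x\sim\mathrm{Bin}(f^W_x,\tau)$ exactly, and apply a CLT/Berry--Esseen bound with worst-case variance $W/\tau$. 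For $H=1$ your route is arguably cleaner (the Poisson machinery is really only needed to unify with the HHH analysis, where a packet selects one of $H$ prefixes), and it lands on the same threshold shape. Note that your algebra gives $\tau\ge Z_{1-\delta/4}^{2}W^{-1}\epsilon_s^{-2}$; the paper's own extraction from $\epsilon_s W/V=Z\sqrt{W/V}$ also yields a $Z^2$, yet its stated bound drops the square (harmless up to the constant $Z<4$, but your version is the literal consequence of the normal-approximation step).

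On the approximation term, you correctly identify the crux --- the underlying structure's error must scale with the \emph{sampled} mass $\tau W$, not with $W$ --- but you leave the resolution at the level of an $O(\epsilon_a\tau W)$ claim. The paper makes this precise with a dedicated step you should not omit: Corollary~\ref{cor:oversample} shows the number of sampled packets in the window is at most $(1+\epsilon_s)W/V$ with probability $1-\delta_s$, and the algorithm is then configured with $\epsilon_a'=\epsilon_a/(1+\epsilon_s)$ (i.e., slightly more counters) so that the deterministic WCSS guarantee on the realised sample, multiplied by $\tau^{-1}$, is \emph{exactly} $\epsilon_a W$ rather than $O(\epsilon_a W)$ with an uncontrolled constant. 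This oversampling event is also where the paper's second $\delta_s$ in $\delta=\delta_a+2\delta_s$ comes from, which is the bookkeeping behind the $\delta/4$ in $Z_{1-\delta/4}$; your "Chernoff event already implied by the $\tau$ fixed above" is the same event, but it does need its own slice of the failure probability rather than being absorbed for free.
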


 \T{\HHHAWAII{}.}
 Theorem~\ref{thm:correctness} is our main result for \HHHAWAII{}. It says that \HHHAWAII{} is correct for any $\tau> \NBound$, 
 %\FIK{tiny comment: we have a strict inequality here, but a larger-or-equal in the theorem; do the same everywhere; also I would repeat the value of $\NB$ in the theorem (and maybe in fact get rid of this useless $\NB$ notation)} 
 where $H$ is the size of the hierarchic domain ($H=5$ for source hierarchies and $H=25$ for (source,destination) hierarchies). \ifdefined\fullVersion
The complete analysis for \HHAWAII{} and \HHHAWAII{} is presented in Appendix~\ref{sec:analysis}.
\else
The proof is given in the full version of this paper~\cite{fullVersion}.
\fi

\begin{theorem}
     \label{thm:correctness}
     \small
     \HHHAWAII{} solves {\sc {$(\delta,\epsilon, \theta)$ - Approximate Windowed Hierarchical Heavy Hitters}} for $\tau\ge\NBound$ .
     \normalsize
 \end{theorem}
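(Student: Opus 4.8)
The plan is to verify the two requirements of Definition~\ref{def:deltaapproxHHH}---\textbf{Accuracy} and \textbf{Coverage}---by isolating the two independent sources of error in \HHHAWAII{}: the \emph{algorithmic} error inherited from the single underlying \HHAWAII{} instance, and the \emph{sampling} error introduced by feeding only one random prefix per packet. For the algorithmic source I would invoke Theorem~\ref{th:main} essentially as a black box: the controller runs one \HHAWAII{} instance with $H\epsilon_a^{-1}$ counters, so every prefix $p$ that carries a counter receives an estimate whose algorithmic deviation from its in-structure window count is at most $\epsilon_a W$ with high probability, after budgeting part of the confidence to this event. The sampling source is then handled by a separate concentration argument.

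For \textbf{Accuracy}, fix a prefix $p\in P$. Because \HHHAWAII{} draws a single prefix per packet and then samples it, each of the $f^W_p$ window packets generalized by $p$ contributes to $X^W_p$ (conditionally) independently with probability $1/V=\tau/H$; hence $X^W_p$ is binomial with mean $f^W_p/V$, the scaled estimator $\widehat{f^W_p}=V\widehat{X^W_p}$ is essentially unbiased, and its variance is $\Theta(f^W_p V)\le\Theta(WV)$ using the worst case $f^W_p\le W$. A Gaussian (or Bernstein) tail bound then gives $|\widehat{f^W_p}-f^W_p|\le Z_{1-\frac{\delta}{2}}\sqrt{VW}$ with probability $1-\frac{\delta}{2}$, and the hypothesis $\tau\ge\NBound$ is exactly what forces this sampling term below $\epsilon_s W$. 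Summing the two contributions yields total error at most $(\epsilon_a+\epsilon_s)W=\epsilon W$, and a union bound over the two failure events gives the required $1-\delta$ guarantee.

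For \textbf{Coverage}, fix $q\notin P$. The key is that the quantity $\widehat{C_{q|HHH}}$ computed in Line~\ref{line:cp} of Algorithm~\ref{alg:Skipper} is, with high probability, a genuine \emph{over}-estimate of the true conditioned frequency $C_{q|P}$: \textsc{calcPred} subtracts only \emph{lower} bounds $\widehat{f^W_h}^{-}$ of the already-selected descendants in $G(q|HHH)$ and, in two dimensions, adds back \emph{upper} bounds on the greatest-lower-bounds $glb(h,h')$ via inclusion--exclusion (Definition~\ref{def:glb}), so that it can never under-count; Line~\ref{line:accSample} then inflates the estimate by the sampling slack $2Z_{1-\delta}\sqrt{VW}$. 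Conditioned on all these $O(H/\epsilon_a)$ one-sided bounds holding \emph{simultaneously}---which a union bound over the prefixes carrying a counter guarantees under the stated $\tau$---the fact that the algorithm did not insert $q$ (so $\widehat{C_{q|HHH}}<\theta W$, identifying $N=W$ within a full window) implies $C_{q|P}\le\widehat{C_{q|HHH}}<\theta W$, which is precisely Coverage.

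The hard part is the Coverage step rather than Accuracy. One must argue that the inclusion--exclusion bookkeeping in \textsc{calcPred} never yields an \emph{under}-estimate of $C_{q|P}$ anywhere in the multidimensional hierarchy, and that the one-sided bounds $\widehat{f^W_p}^{+},\widehat{f^W_p}^{-}$ can be made to hold \emph{jointly} across all counted prefixes while the aggregate failure probability stays below $\delta$. This joint requirement is what dictates the precise confidence splitting---the $Z_{1-\frac{\delta}{2}}$ here versus the $Z_{1-\frac{\delta}{4}}$ of Theorem~\ref{th:main}---and, through the worst-case variance bound $f^W_p\le W$, fixes the threshold $\tau\ge\NBound$. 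The remaining pieces---the binomial variance estimate and translating the Gaussian quantile into the bound on $V=H/\tau$---are routine.
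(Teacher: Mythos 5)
Your overall decomposition (algorithmic error from the single \HHAWAII{} instance plus sampling error from the one-random-prefix-per-packet step, with Accuracy and Coverage handled separately) matches the paper's, and your Accuracy argument is essentially the paper's Theorem~\ref{thm:pusmain}: for a fixed prefix the sampled count concentrates around $f^W_p/V$ with deviation $O\bigl(\sqrt{W/V}\bigr)$, and $\tau\ge\NBound$ turns this into $\epsilon_s W$ after scaling by $V$. (The paper reaches the same bound through Poissonization and Lemma~\ref{lemma:rare} rather than your direct binomial computation; either route works. You do, however, skip the paper's Corollary~\ref{cor:oversample}: the underlying algorithm's additive error is proportional to the \emph{random} number of sampled packets, not to $W/V$, so the paper reconfigures it with $\epsilon_a'=\epsilon_a/(1+\epsilon_s)$ to absorb oversampling --- a small but necessary patch.)

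The genuine gap is in Coverage. You propose to condition on all the one-sided bounds $\widehat{f^W_h}^{+},\widehat{f^W_h}^{-}$ holding \emph{simultaneously} via a union bound over the $O(H/\epsilon_a)$ counted prefixes. That does not establish the theorem as stated: Line~\ref{line:accSample} adds a \emph{fixed} slack of $2Z\sqrt{VW}$, but if you control the sampling error of each $h\in G(q|P)$ separately at confidence $\delta/(|G(q|P)|+1)$ and then sum the deviations, the total is of order $\sqrt{V}\sum_h\sqrt{f^W_h}\le\sqrt{|G(q|P)|\cdot VW}$, which exceeds the algorithm's slack once $G(q|P)$ has more than a constant number of elements (and the per-prefix union bound would also inflate the required $Z$-quantile, changing the admissible $\tau$). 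The paper's Theorems~\ref{thm:underCP} and~\ref{thm:conservativeCP} avoid this entirely: they split the packets affecting $\widehat{C_{q|P}}$ into the set $K^{+}$ that contributes positively and the set $K^{-}$ that contributes negatively, observe that the aggregate sampled count of each set is a single (Poissonized) variable with mean at most $W/V$, and apply exactly two concentration bounds, each with deviation $Z_{1-\frac{\delta}{8}}\sqrt{W/V}$. That aggregation is what makes a slack of $2Z\sqrt{VW}$, independent of $|G(q|P)|$, sufficient, and it is the step your proposal is missing. Your final deduction --- $q\notin P$ forces $\widehat{C_{q|P}}<\theta W$, hence $C_{q|P}<\theta W$ on the good event --- coincides with the paper's Corollary~\ref{cor:coverage}.
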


\subsection{D-\HHAWAII{} and D-\HHHAWAII{} Analysis}
\label{anal:networkwide}
We now provide analysis for our {network-wide} D-\HHAWAII{} and D-\HHHAWAII{} algorithms. Intuitively, the error in these algorithms comes from two origins. First, an error due to \textit{sampling}, which is quantified by Theorems~\ref{th:main} and~\ref{thm:correctness}. However, there is an additional error that is caused by the \textit{delay in transmission}, as the measurement points only send the sampled packets once in every $b\tau^{-1}$ packets. 
If a measurement point has a low traffic rate, it may take long time for it to see $b\tau^{-1}$ packets; in this case, all of its samples may be obsolete and may not belong in the most recent window. Therefore, our first step is to reason about the accuracy impact of the \mbox{Batch and Sample methods.}
%Intuitively, for each measurement point, the first batch transmitted within a given window may report traffic which is no longer part of the window. Therefore, the first step is to reason about the accuracy impact of the Batch and Sample methods. %\FIK{I am a bit unclear here about what we are presenting. I thought we had 3 methods. Suddenly you only (seem to) talk about Batch. Where is the Aggregation method for instance? }
%consider a case where 99\% of the traffic goes through a single measurement point and the other only sees 1\% of the traffic. In this case, once the 1\% point transmits its batch, the packets may no longer be part of the sliding window. \IK{example unclear} 

\T{Notations and definitions.}
We denote the \emph{bandwidth budget} as $\mathcal B$ bytes/packet. That is, $\mathcal B$ determines how much traffic is used for communicating between the measurement points and the controller. This communication is done using standard packets, which have header field overheads. We denote by $\mathbb{O}$ 
%\FIK{I still don't like it because it looks like a set :)  your decision} 
the minimal header size (in bytes) of the chosen transmission protocol (\eg 64 bytes for TCP). 
%\FIK{this is confusing; 2 different layers; what are we trying to say? are you sure we need both examples?}
Next, reporting a sampled packet requires $E$ bytes (\eg 4 bytes for srcip, or 8 bytes for (srcip,dstip) pair). We also denote by $m$ the total number of measurement points. 

\T{Model.}
Intuitively, we can choose two (dependent) parameters: the sampling rate, $\tau$, and the batch size $b$. That is, each measurement point samples with probability $\tau$ until it gathers $b$ packets. At this point, it assembles an $(\mathbb{O}+Eb)$-sized packet that encodes the sampled packet and sends it to the controller. As the expected number of packets required to gather a $b$ sized batch is $b\tau^{-1}$, the bandwidth constraint can be written as 
$
(\mathbb{O}+Eb) / (b\tau^{-1}) \le \mathcal B.
$
Specifically, this allows to express the maximum allowed sampling probability as $\tau = \mathcal Bb / (\mathbb{O}+Eb)$ since
sampling at a lower rate would not utilize the entire bandwidth and would result \mbox{in sub-optimal accuracy.}

\T{Accuracy of the Batch and Sample methods.}
We can now quantify the error of the Batch and Sample methods. 
Intuitively, we have to factor the delays in communication (as we only report per a fixed number of packets to stay within the bandwidth budget).
For example, if there are two measurement points in which one processes a million requests per second while the other only a thousand, the batches of the second point would include many obsolete packets that are not within the current window. 
%\FIK{still unclear; [BTW where] did you say that all points sample at the same rate and use the same batch size? I only now start to understand that (maybe) this was the goal of the 1st sentence of the model. Still unsure how this all works.}
%\IK{example nuclear. ``A thousand" because fewer packets arrive, or the sampling rate is lower, or the processing ability is lower? Hard to follow.} 
However, recall that these reports only reflect $b\tau^{-1}$ packets at each of the $m$ points.  Therefore, we conclude that: 
\begin{theorem}
\label{thm:batchdelay}
     The error created by the delayed reporting in the Batch method is bounded by $mb\tau^{-1}$.   
\end{theorem}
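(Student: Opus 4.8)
The plan is to bound, at an arbitrary query instant, the number of packets on which the controller's reconstructed window disagrees with the true network-wide window of the last $W$ packets, and to show that this disagreement is confined to the packets still ``pending'' at the measurement points. First I would fix a query time and recall the definition of the network-wide window as the most recent $W$ packets measured anywhere across the $m$ points. The controller's window is whatever it can reconstruct from the batches received so far; since each point transmits only once it has accumulated $b$ samples, its freshest packets stay unreported until the next batch is dispatched.

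Next I would isolate the per-point backlog. By the batching rule, a point sends a report exactly when it collects $b$ samples, and since each packet is sampled independently with probability $\tau$, the expected number of packets processed between consecutive reports is $b\tau^{-1}$. Consequently, at any instant each measurement point holds at most $b\tau^{-1}$ packets whose information has not yet reached the controller. These are precisely the packets that can be misclassified: a freshly measured packet that belongs in the true window but whose batch is still being filled, or conversely an old sample carried in a stale batch of a slow point that no longer belongs in the most recent $W$.

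Then I would argue that \emph{every} packet on which the two windows disagree must lie in one of these pending batches. A reported packet that lies within the window is treated identically by both views; a reported packet outside the window is correctly expired by the controller's own Window updates (it runs the same sliding-window machinery underlying Theorem~\ref{th:main} on the stream of reports); so the only residual source of discrepancy is the unreported backlog at each point. Summing the per-point bound over all $m$ points gives a total disagreement of at most $m\cdot b\tau^{-1}=mb\tau^{-1}$ packets, the claimed bound.

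The main obstacle is making the backlog bound clean: the number of packets between consecutive batches is a random (negative-binomial) quantity with mean $b\tau^{-1}$, so the statement is most naturally read as a bound on the \emph{expected} error, and for a worst-case version I would have to invoke a concentration argument on this count. A secondary delicate point is confirming that the controller's Window updates expire \emph{every} already-reported stale packet, so that no error leaks in beyond the pending backlog; establishing this carefully is what lets me attribute the entire delay error to the $m$ in-flight batches and thereby close the bound at $mb\tau^{-1}$.
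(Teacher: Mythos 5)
Your proposal is correct and follows essentially the same argument the paper gives, which is simply that each of the $m$ measurement points has at most $b\tau^{-1}$ packets (in expectation) pending in its unreported batch, and summing over the points yields $mb\tau^{-1}$. Your caveat that this is most naturally an expected bound is consistent with the paper's own framing, since it defines $b\tau^{-1}$ as the \emph{expected} number of packets between consecutive reports.
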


Next, Theorems~\ref{th:main} and~\ref{thm:correctness} enable us to bound the sampling error as a function of $\tau$, while Theorem~\ref{thm:batchdelay} bounds the delayed reporting error. The following theorem applies for  D-\HHAWAII{} (using $H=1$) and for D-\HHHAWAII{} (using the appropriate $H$ value). As the round trip time inside the data center is small compared to window sizes that are of interest, the error caused by the delay of packet transmissions is negligible, and thus we do not factor it here.
\changed{Theorem~\ref{thm:batch} quantifies the overall error in the Batch method; the error of the Sample method is derived when setting $b=1$. }
%\IK{1. Are we now talking about D-Memento and/or D-H-Memento? 2. Aren't we also relying on Thm 4.4?}
%\shay{propositions?? isn't a theorem is a bit strong here??}
\begin{theorem}
\label{thm:batch}
Given overhead $\mathbb{O}$, batch size $b$,  bandwidth budget $\mathcal B$, %\FIK{we haven't defined it before; later you call it bandwidth quota and say that it's in bytes per packet, maybe write it somewhere before?}
sample payload size $S$, window size $W$ and confidence $\delta_s$, the overall error $\mathfrak E_b$ (in packets) is at most: 
%\FIK{weird sentence; $\mathfrak E_b $ is the error bound, not the error, right? also is it deterministic or statistical? (\ie bound on the average error?) Also it's error on what exactly? Undefined. }
$$\mathfrak E_b = m(\mathbb{O}+Eb) / \mathcal B+\sqrt{HWZ_{1-\frac{\delta_s}{2}}(\mathbb{O}+Eb) / (\mathcal Bb)}.$$
\end{theorem}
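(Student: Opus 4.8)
The plan is to bound $\mathfrak E_b$ as the sum of two contributions---a deterministic \emph{batching delay} term and a probabilistic \emph{sampling} term---and then eliminate $\tau$ in favor of the bandwidth parameters using the model constraint $\tau = \mathcal B b/(\mathbb{O}+Eb)$. Concretely, let $f^W_p$ denote the true window frequency of a prefix $p$, let $\tilde f_p$ be its frequency in the \emph{effective} window actually reconstructed at the controller (which differs from the true last-$W$-packet window only because each of the $m$ points reports in batches), and let $\widehat{f^W_p}$ be the controller's estimate. The triangle inequality gives $|f^W_p-\widehat{f^W_p}| \le |f^W_p-\tilde f_p| + |\tilde f_p-\widehat{f^W_p}|$, so it suffices to bound the two summands separately and add them.

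First I would bound the delay term $|f^W_p-\tilde f_p|$. By Theorem~\ref{thm:batchdelay}, the reporting delay in the Batch method misplaces at most $mb\tau^{-1}$ packets relative to the true window (the transmission latency itself is neglected, as argued just before the statement). Substituting $\tau^{-1}=(\mathbb{O}+Eb)/(\mathcal B b)$ turns this into $mb\tau^{-1} = m(\mathbb{O}+Eb)/\mathcal B$, which is exactly the first term of $\mathfrak E_b$.

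Next I would bound the sampling term $|\tilde f_p-\widehat{f^W_p}|$. Theorems~\ref{th:main} and~\ref{thm:correctness} guarantee---within the effective window---an additive error of $\varepsilon_s W$ with confidence $1-\delta_s$, provided the sampling rate satisfies $\tau \ge \NBound$ (with $H=1$ for D-\HHAWAII{} and the appropriate $H$ for the hierarchical variant). Reading this sufficient condition at equality and solving for $\varepsilon_s$ yields $\varepsilon_s = \sqrt{Z_{1-\frac{\delta_s}{2}}H/(W\tau)}$, hence $\varepsilon_s W = \sqrt{Z_{1-\frac{\delta_s}{2}}HW\tau^{-1}}$. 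Substituting the same bandwidth constraint gives $\sqrt{HW Z_{1-\frac{\delta_s}{2}}(\mathbb{O}+Eb)/(\mathcal B b)}$, the second term of $\mathfrak E_b$. Summing the two bounds completes the proof, and setting $b=1$ recovers the Sample method as a special case.

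The routine calculations (inverting the $\tau$-inequality and plugging in the constraint) are mechanical; the step that needs the most care is justifying the clean additive split, which separates \emph{which} window is measured (a deterministic shift caused purely by batching, controlling $|f^W_p-\tilde f_p|$) from \emph{how accurately} frequencies are estimated within that window (the probabilistic guarantee controlling $|\tilde f_p-\widehat{f^W_p}|$). I expect the main obstacle to be arguing that the single-device accuracy guarantees of Theorems~\ref{th:main} and~\ref{thm:correctness} transfer verbatim to the controller's reconstructed stream---that is, that aggregating batched samples from $m$ points and replaying them as Full and Window updates preserves the $(\varepsilon_s,\delta_s)$ guarantee on the effective window---so that only the already-accounted-for delay distinguishes the network-wide estimate from the single-device one.
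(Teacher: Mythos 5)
Your proposal is correct and follows essentially the same route as the paper: the paper's proof likewise writes $\mathfrak E_b = bm\tau^{-1}+W\epsilon_s$, obtains $\epsilon_s$ by reading the sampling condition of Theorem~\ref{thm:correctness} at equality, and substitutes $\tau = \mathcal Bb/(\mathbb{O}+Eb)$. Your version merely makes explicit the triangle-inequality split and the transfer of the single-device guarantee to the controller's reconstructed stream, both of which the paper leaves implicit.
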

\begin{proof}
According to Theorem~\ref{thm:correctness}, we have that $\epsilon_s = \sqrt{HW^{-1}Z_{1-\frac{\delta_s}{2}}\tau^{-1}}$.
This means that our overall error is bounded~by:
\begin{multline*}
    \mathfrak E_b = bm\tau^{-1}+W\epsilon_s
    = bm\tau^{-1}+\sqrt{HWZ_{1-\frac{\delta_s}{2}}\tau^{-1}}\\
    = m(\mathbb{O}+Eb) / \mathcal B+\sqrt{HWZ_{1-\frac{\delta_s}{2}}(\mathbb{O}+Eb) / (\mathcal Bb)}.\qedhere
\end{multline*}
\end{proof}
%\Gil{What is $\delta_2$?}
Formally, we showed a bound of $\mathfrak E_b$, for each choice of $b$. The guarantees for the Sample method are given by fixing $b=1$. The next step is to use
Theorem~\ref{thm:batch} to calculate the optimal batch size $b$ given a bandwidth budget $\mathcal B$. Thus, we get the best achievable accuracy for the Batch method within the bandwidth limitation.\mbox{ We have:}
$$
\frac{\partial \mathfrak E_b}{\partial b} = mE/\mathcal B + \frac{HWZ_{1-\frac{\delta_s}{2}}(E/B - \mathbb{O}/(\mathcal Bb^2) )}{2\sqrt{(\mathbb{O}+Eb) / (\mathcal Bb)}}.
$$
We then compare this expression to zero to compute the optimal batch size $b$. This is easily done \mbox{with numerical methods.} 
%\FIK{Maybe should state somewhere why we don't explicitly take the delay in forming a view into account - a reviewer asked for that.}

For example, for a TCP connection ($\mathbb{O}=64$); ten measurement points ($k=10$); source IP hierarchy ($E=4, H=5$); error probability of $\delta=0.01\%$; a window of size $W=10^6$; and a bandwidth quota of $\mathcal B=1$ byte per packet, the optimal batch size is $b=44$. The resulting (overall) error guarantee is 13K packets (i.e., an error of $1.3\%$).
Increasing the bandwidth budget 
%\FIK{This is the 3rd name of $B$... please adopt one name and set it everywhere} 
to $\mathcal B=5$ bytes decreases the absolute error to 5.3K packets ($0.53\%$) while increasing the optimal batch size to $b=68$.
When increasing the window size ($W$), the \emph{absolute} error increases by an $O\left(\sqrt{W}\right)$ factor and the error (as a fraction of $W$) decreases. For example, increasing the window size to $10^7$ increases the optimal batch size to $b=109$, while reducing the error to $0.15\%$. Alternatively, 2D source/destination hierarchies (increasing $H$ from $5$ to $25$) result in a slightly larger error and \mbox{a higher optimal batch size.}
%\shay{I think its important to mention here that increasing the window size decreases the error. Give a numerical example with $W=10^7$??}

 %\IK{these last two sentence should go after Thm 4.5}
Figure~\ref{fig:distErrorGuarantee} illustrates the accuracy guarantee provided by each method. We compare three synchronization variants -- Sample, Batch with $b=100$, and Batch with an optimal $b$ (varies with $\mathcal B$), as explained above.
As depicted, Sample has the smallest delay error and yet provides the worst guarantees as it conveys little information within the bandwidth budget.  
The 100-Batch method has lower a sampling error (as its sampling rate is higher), but its reporting delay makes the overall error larger. For larger values of $\mathcal B$, the optimal batch size grows closer to $100$ and the accuracy gap narrows. 
%\FIK{1. Why is this not in the evaluations? 2. Why don't we have the aggregate as a reference? 3. It would have been more intuitive to show the result as a function of the batch size in the x-axis... }

\begin{figure}[t]
{\includegraphics[width=1.0\columnwidth]{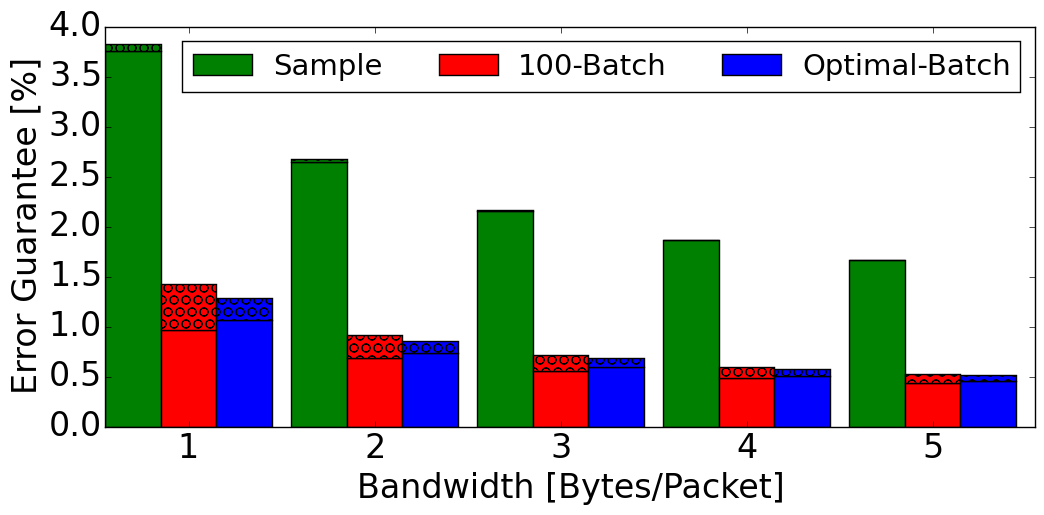}
\caption{Comparing the accuracy \emph{guarantees} of varying synchronization techniques. The parts hatched with circles quantify the bound on the error that is caused by \mbox{the delayed synchronization.}}
\label{fig:distErrorGuarantee}}
\end{figure}
\section{Evaluation}
\label{sec:eval_measure}

\T{Server.} Our evaluation was performed on a Dell 730 server running Ubuntu 16.04.01 release. The server has 128GB of RAM and an Intel Xeon CPU E5-2667 v4@ 3.20GHz.

\T{Traces.} We use real packet traces collected from
an edge router~\emph{(Edge})~\cite{UCLA}, a datacenter \emph{(Datacenter)}~\cite{Benson}, and a CAIDA backbone link \emph{(Backbone)}~\cite{CAIDACH16}.

\T{Algorithms and implementation.}
For the HH problem, we compare  \HHAWAII{} and  WCSS~\cite{WCSS}. For WCSS we use our \HHAWAII{} implementation without sampling ($\tau=1$).
For the HHH problem, we compare \HHHAWAII{} to MST~\cite{HHHMitzenmacher} and RHHH~\cite{RHHH} (interval algorithms). We use the code released by the original authors of these algorithms.
We also form the \emph{Baseline} sliding window algorithm by replacing the underlying algorithm in MST~\cite{HHHMitzenmacher} with WCSS.
Specifically, MST proposed to use Lee and Ting's algorithm~\cite{Lee:2006:SME:1142351.1142393} as WCSS was not known at the time. By replacing the algorithm with the WCSS, a state of the art window algorithm, we compare with the best variant known today.
%\FIK{Important: why? Several reviewers asked why we built this new algorithm to compare against, and suspected us of just introducing some easy algorithm to beat, and we are not answering it here!} We extended the original MST code to accommodate this variant. This algorithm is an upper bound for the accuracy of \HHHAWAII{} as it does not use sampling.
%We compare this idealized Aggregation method to the concrete D-\HHHAWAII{} coupled with the Sample and Batch transmission policies.

\T{Yardsticks.} We consider source IP hierarchies in byte granularity ($H=5$) and two-dimensional source/destination hierarchies ($H=25$).
Such hierarchies are also used in~\cite{RHHH,HHHMitzenmacher,CormodeHHH}.
We run each data point $5$ times and use two-sided Student's t-tests to determine the $95\%$ confidence~intervals.

We evaluate the empirical error in the \emph{On Arrival model}~\cite{WCSS,RAP}, where for each packet we estimate its flow (denoted $s_t$) size. We then calculate the \emph{Root Mean Square Error (RMSE)}, \ie
$RMSE(Alg) \triangleq \sqrt{\frac{1}{|N|}\sum_{t=1}^{N}(\widehat{f_{s_t}} - f_{s_t})^2.}$

\subsection{Heavy Hitters Evaluation }
We evaluate the effect of the sampling probability $\tau$ on the operation speed and empirical accuracy of
\HHAWAII{}, and use the speed and accuracy of WCSS as a reference point for this evaluation.
The notation X-WCSS stands for WCSS that is allocated X counters (for X$\in\set{64, 512, 4096}$). Similarly, the notation X-\HHAWAII{} is for \HHAWAII{} with X counters. The window size is set to $W\triangleq5$ million packets and the interval length is set to $N\triangleq16$ million packets.

\begin{figure}[!t]
\includegraphics[width = \columnwidth]%, height =1cm]
                {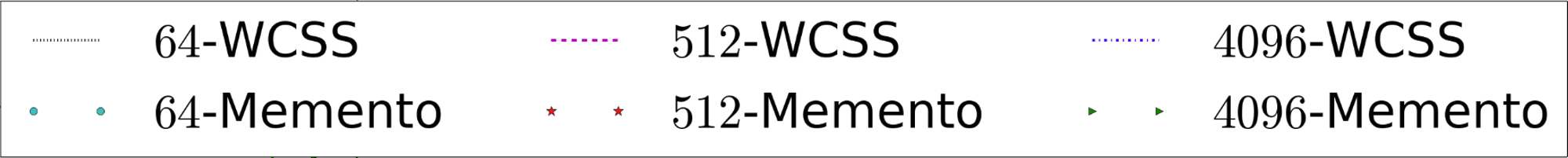}
\subfloat[\label{H/HHSpeedEdge}Edge Trace: Speed]{\includegraphics[width = 0.49\columnwidth]
            {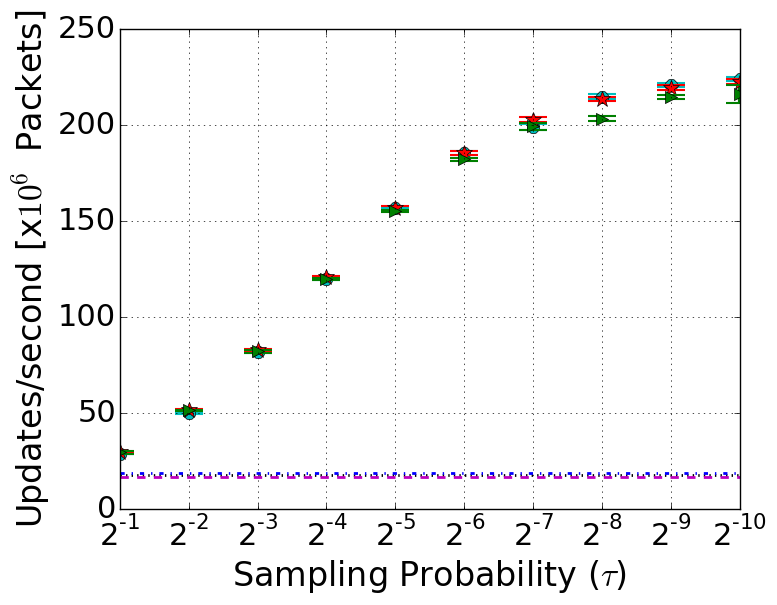}}
\subfloat[\label{HHAccEdge}Edge Trace: Error]{\includegraphics[width = 0.49\columnwidth]
            {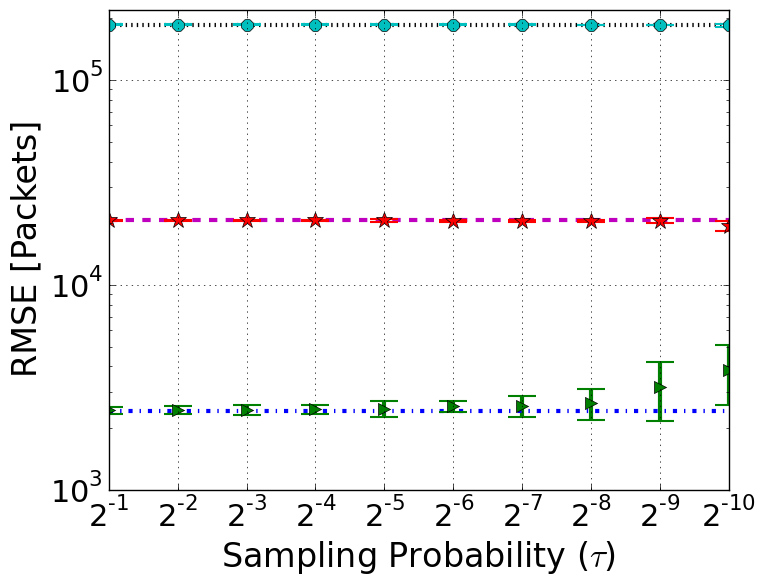}}\\
 \subfloat[\label{HHSpeedDC}Datacenter Trace:  Speed]{\includegraphics[width = 0.49\columnwidth]
            {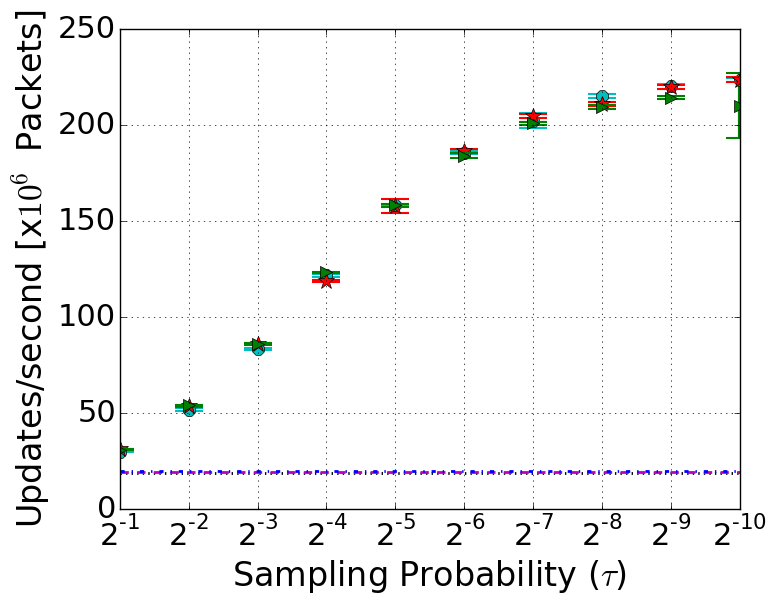}}
\subfloat[\label{HHAccDC}Datacenter Trace:  Error]{\includegraphics[width = 0.49\columnwidth]
            {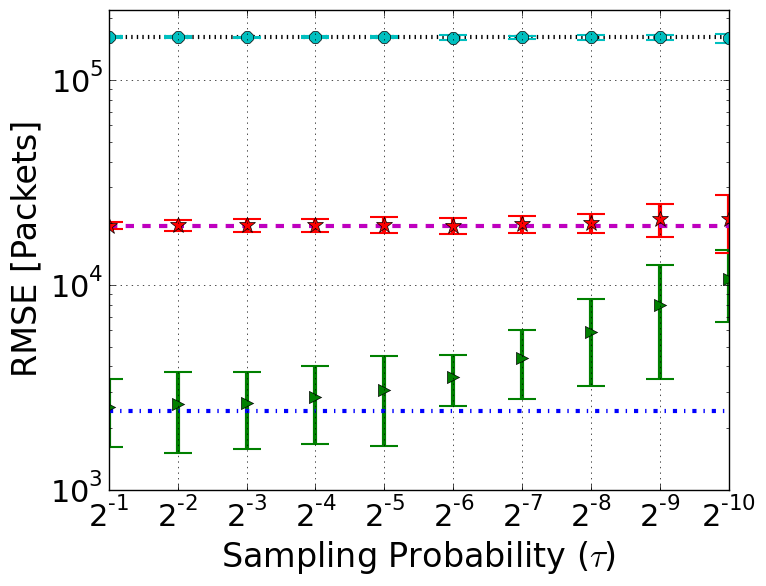}}\\
\subfloat[\label{HHSpeedCHI16}Backbone Trace:  Speed]{\includegraphics[width = 0.49\columnwidth]
            {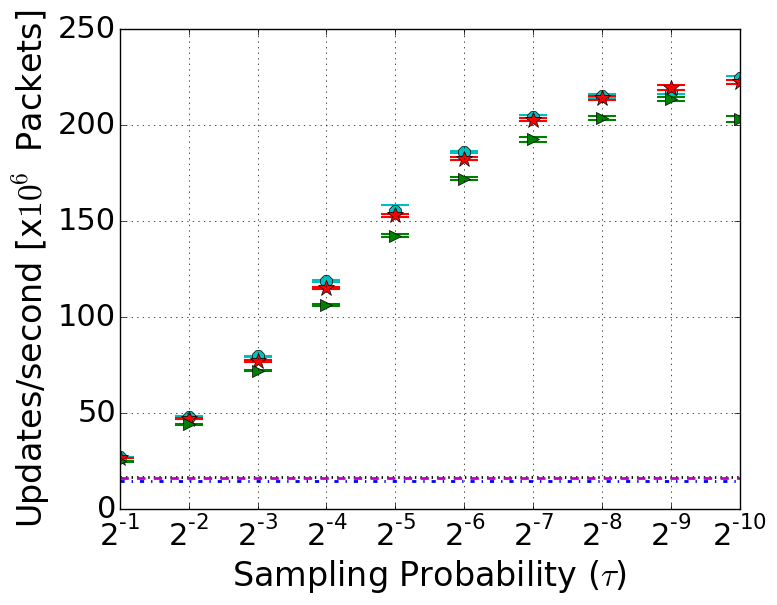}}
\subfloat[\label{HHAccCHI16 }Backbone Trace: Error]{\includegraphics[width = 0.49\columnwidth]
            {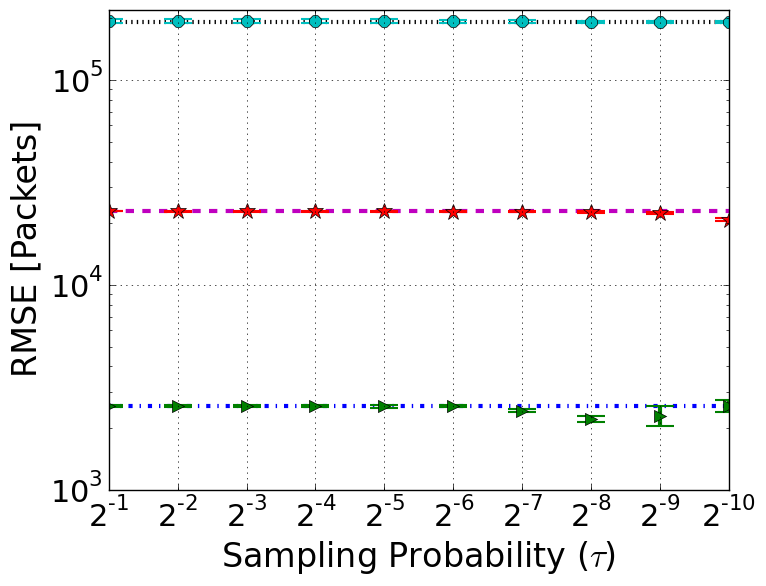}}
%{\includegraphics[width=\columnwidth, height=0.4cm]{}}
\caption{
\label{fig:tau}
Effect of the sampling probability $\tau$ on the speed and accuracy for varying number of counters given three different traces and a window size $W= 5M$. $\HHAWAII{}$ is considerably faster than WCSS, but the accuracy of both algorithms is almost the same despite $\HHAWAII{}'s$ use of sampling, except when the sampling rate is low and the number of counters is high. Even then, this is mainly evident in \mbox{the skewed Datacenter trace.}
}
\end{figure}
As depicted in Figure~\ref{fig:tau}, the update speed is determined by the sampling probability and is almost indifferent to changes in the number of counters. \HHAWAII{} achieves a speedup of up to 14$\times$ compared to WCSS.
%When examining the accuracy, we see that a
As expected, allocating more counters also improves the accuracy. It is also evident that the error of \HHAWAII{} is almost identical to that of WCSS, which indicates that it works well for the range. The smallest evaluated $\tau$, namely, $\tau=2^{-10}$, already exhibits slight accuracy degradation, which shows the limit of our approach. It appears that a larger number of counters, or heavy tailed workloads (such as the Backbone trace), allow for even smaller sampling probabilities without a \mbox{impact to the attained accuracy.}
%we did not consider lower sampling rates. \FIK{Shouldn't we be a bit more negative for our algorithm here, a bit like in the figure caption, and say that it shows some limits to our sampling-based approach?}

\subsection{Hierarchical Heavy Hitters Evaluation}

\T{\HHHAWAII{} \vs existing window algorithm.} Next, we evaluate \HHHAWAII{} and compare it to the Baseline algorithm.  We consider two common types of hierarchies, namely a one-dimensional source hierarchy ($H=5$) and two-dimensional source/destination hierarchies ($H=25$). Note that \HHHAWAII{} performs updates in constant time while the Baseline does it in $O(H)$.  Following the insights of Figure~\ref{fig:tau}, we evaluate \HHHAWAII{} with a sampling rate $\tau$ such that $\tau\ge H\cdot 2^{-10}$, so that each of the $H$ prefixes is sampled with a probability of at least $2^{-10}$. \changed{That is, we do not allow sampling probabilities of $\tau<H\cdot 2^{-10}$ to get an effective sampling rate of at least $2^{-10}$, which is the range in which Memento is accurate.}
%Such sampling probabilities were shown to work well by .

We evaluate three configurations for each algorithm, with a varying number of counters. The notation 64H denotes the use of $64\cdot 5 = 320$ counters when $H=5$, and 1600 counters when $H=25$. The notations 512H and 4096H follow the same rule. In the Baseline algorithm, the counters are utilized in $H$ equally-sized WCSS instances, while \HHHAWAII{} has a single \HHAWAII{} instance with that many counters.

Figure~\ref{fig:HHHtau} shows the evaluation results. We can see how $\tau$ is the dominating performance parameter. 
\HHHAWAII{} achieves up to a $52\times$ speedup in source hierarchies and a $273\times$ speedup in source/destination hierarchies. This difference is explained by the fact that the Baseline algorithm makes $H$ expensive Full updates for each packet, while \HHHAWAII{} usually performs a \mbox{single Window update}.

 \begin{figure}[t]
 \includegraphics[width = \columnwidth, height =1cm]                 {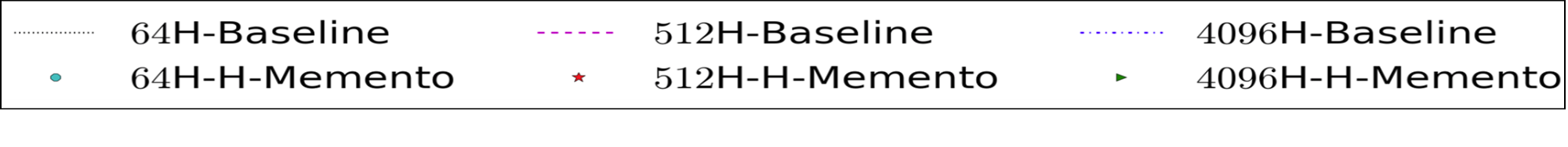}\vspace{-0.4cm} \\
% \subfloat[\label{HHHSpeedEdge}Edge trace - 1 D (H=5)]{\includegraphics[width = 0.5\columnwidth]
%             {figures/1DUCLA_HHHSpeeds.png}}
% \subfloat[\label{FPR:`}Edge trace - 2 D (H=25)]{\includegraphics[width = 0.5\columnwidth]
%             {figures/2DUCLA_HHHSpeeds.png}}\\
%  \subfloat[\label{HHSpeedDC}Datacenter trace - 1 D (H=5)]{\includegraphics[width = 0.5\columnwidth]
%             {figures/1DUniv1_HHHSpeeds.png}}
% \subfloat[\label{HHAccDC}\label{FPR:`}Datacenter trace - 2 D (H=25)]{\includegraphics[width = 0.5\columnwidth]
%             {figures/2DUniv1_HHHSpeeds.png}}\\
\subfloat[\label{HHHSpeedCHI16}Backbone trace - 1 D (H=5)]{\includegraphics[width = 0.5\columnwidth]
            {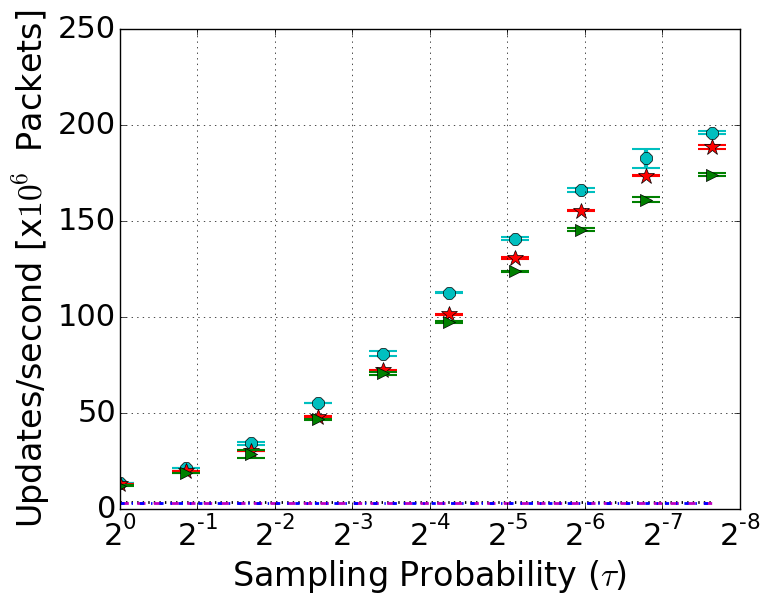}}
\subfloat[\label{HHHAccCHI16 }Backbone trace - 2 D (H=25)]{\includegraphics[width = 0.5\columnwidth]
            {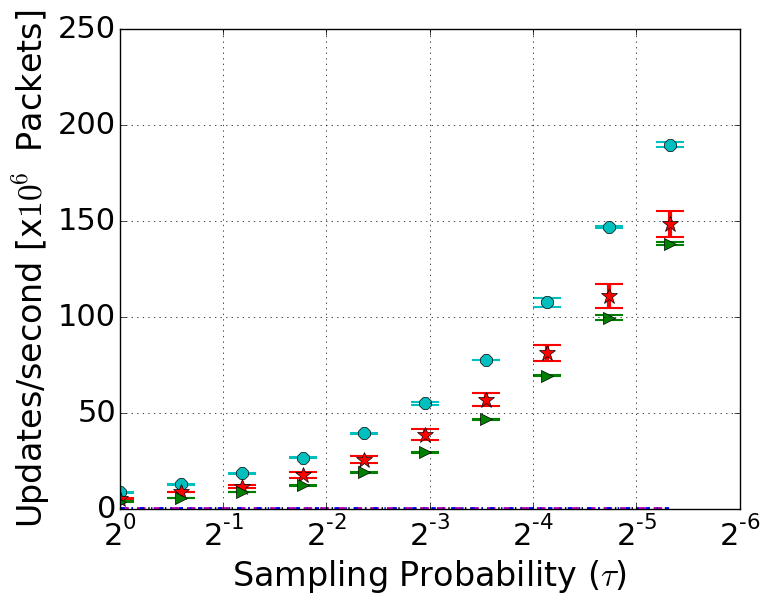}}

%{\includegraphics[width=\columnwidth, height=0.4cm]{figures/legend.PNG}}
\caption{
\label{fig:HHHtau} Effect of the sampling probability %($\tau$)
on the speed of $\Hname$,  compared to the Baseline algorithm in the Backbone trace. Note that \HHHAWAII{} achieves a speedup of up to 53$\times$ in 1D and up to 273$\times$ in 2D. Results for the Edge and Datacenter traces are similar.
}
\end{figure}

%\subsection{Intervals vs. Sliding windows}
\T{\HHHAWAII{} \vs interval algorithm.} Next, we compare the throughput of \HHHAWAII{} to the previously suggested RHHH~\cite{RHHH}. \HHHAWAII{} and RHHH are similar in their use of samples to increase performance. Moreover, RHHH is the fastest known interval algorithm for the HHH problem.
Our results, presented in Figure~\ref{fig:streamvswindow}, show that $\HHHAWAII{}$ is faster than RHHH for small sampling ratios. The reason lies in the implementation of the sampling. Namely, in RHHH, sampling is implemented as a geometric random variable, which is inefficient for small sampling probabilities, whereas in $\HHHAWAII{}$, it is performed using a random number table.  Still, as the sampling probability gets lower, the geometric calculation becomes more efficient, and eventually, RHHH is faster than $\HHHAWAII{}$.
This is because \HHHAWAII{} performs a Window update for most packets, while RHHH only decrements a counter.

Looking at both performance figures independently, we conclude that $\HHHAWAII{}$ achieves very high performance and is likely to incur little overheads in a virtual switch implementation in a similar manner to RHHH.  %That is D-\HHHAWAII{} effectively summarizes traffic.
\begin{figure}[t]
\subfloat[One dimension (H=5)] {\includegraphics[width=0.5\columnwidth]{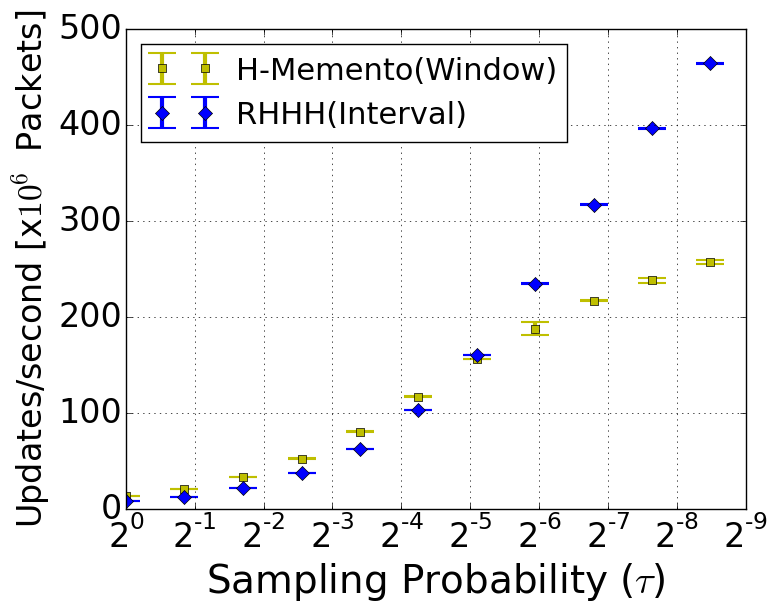}}
\subfloat[Two dimensions (H=25)] {\includegraphics[width=0.5\columnwidth]
{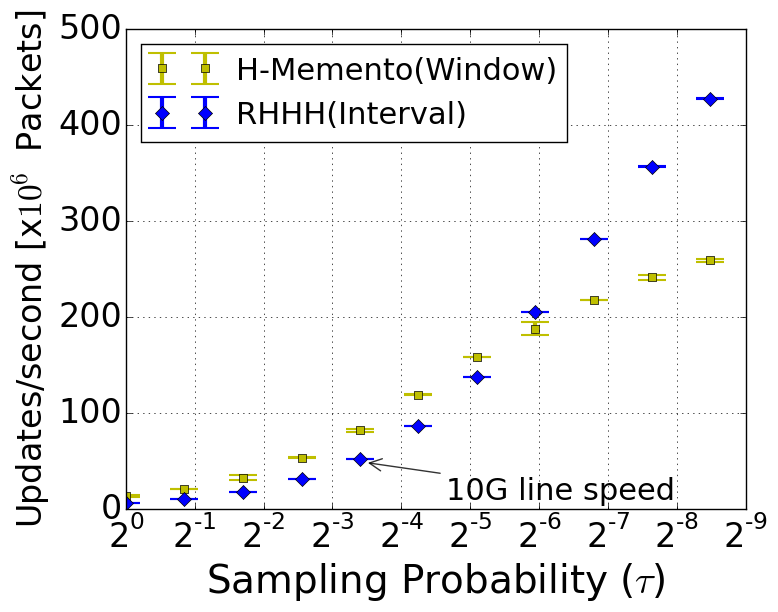}}
%{\includegraphics[width=\columnwidth, height=0.4cm]{figures/legend.PNG}}
\caption{\label{fig:streamvswindow}
Speed comparison between RHHH (interval algorithm) and \HHHAWAII{} (window algorithm) on the Backbone dataset.  The annotated point shows the throughput of the $(\tau=1/10)$-RHHH algorithm that is reported to meet the 10G line speed using a single core~\cite{RHHH}.
That is, \HHHAWAII{} is slightly faster than RHHH in the parameter range of 10G lines.
%For small sampling rates, \HHHAWAII{} convergence to the pace in which Window updates are made while RHHH further accelerates. \IK{I didn't get the last sentence; English? Please provide intuition for the shape.\\ Also should we expect the two other traces to yield similar results?}
%\IK{1. I would prefer a focus on the single point in (b) that was reported in the reference, either through x/y dotted lines or a circle around it or an arrow to it; here it looks like you reference a different simulation with this speed, but your claim is much stronger: it is \textit{exactly} this point. Also can we claim that if we reach 10G with 70 million updates per second in (b), we also do with the same number in (a)? If so you need to argue why in the text.\\2. I would add brief intuition on the weird shapes. }
}
\end{figure} 
\subsection{Network-Wide Evaluation}
\label{sec:eval}
This section describes our proof-of-concept system. We incorporated \HHHAWAII{} into HAProxy which provides the capability to monitor traffic from subnets, an ability which we used to implement rate limiting for subnets. Our controller periodically receives information from (in the Batch, Sample or Aggregate method) the load-balancers and uses this to perform the HHH measurement (with the D-\HHHAWAII{} algorithm). Then, the HHH output can be used as a simple threshold-based attack mitigation application where a subnet is rate-limited if its window frequency is above the threshold. 

%We use this testbed for a comprehensive evaluation in realistic settings. 
%Our evaluation starts with the simpler single load-balancer case and then continues to network-wide settings consisting of ten load-balancers.  
%Our evaluation is focused on the effectiveness of various transmission methods in real conditions as well as in a simulated DDoS attack. We now describe the different system components.  %the different system components we used and developed for our testbed.

\T{HAProxy.} We implemented and integrated our algorithms into the open-source HAProxy load-balancer (version $1.8.1$). Specifically, we leveraged and extended HAProxy's Access Control List (ACL) capabilities,  to allow the updates of our algorithms with new arriving data as well as to perform mitigation (\ie Deny or Tarpit)  when an attacker is identified. %This extension allows for subnet-based rate limiting rather than target- or IP-based. Our load balancers use the open-source Apache2 server instances as a target towards which the load balancing is performed.

\T{Traffic generation.} Our goal is to obtain realistic measurements involving multiple simultaneous stateful connections such as HTTP GET and POST requests from multiple clients towards the load-balancers. To that end, we developed a tool that enables a single commodity desktop to maintain and initiate stateful HTTP GET and POST requests sourcing from multiple IP addresses. Our solution requires the cooperation of both ends (\ie the traffic generators and the load-balancer servers) for an arbitrarily large IP pool. 

%Without cooperation, it can maintain up to $2^{16}$ connections for each network card. 
%\footnote{The tool is essentially capable of generating multiple IP addresses for a single device. \IK{Can we move this sentence to the text itself to explain what the tool does?} We may not release the tool, as its research benefits outweigh the risk of misuse (\eg spoof many IP addresses for a single device and attack weakly-protected wifi networks).} 
It is based on the \textit{NFQUEUE} and \textit{libnetfilter}-queue Linux targets that enable the delegation of the decision on packets to a userspace software. As reported by the Apache \textit{ab load} testing tool, using a single commodity computer, we can initiate and maintain up to 30,000 stateful HTTP requests per second from different IPs without using the HTTP keep-alive feature. We are only limited by the pace at which the Linux kernel opens and closes sockets (\eg TCP timeout).

\T{Controller.} We implemented in C a test controller that communicates with the load-balancers via sockets. It holds a local HHH algorithm implementation and exchanges information with the load-balancers (\eg receives aggregations, samples, or batches). The controller then generates a global and coherent window view of the ingress traffic.
%The controller also updates the load-balancers with mitigation instructions (\eg rate-limiting or blocking suspected subnets).
%can update the LBs with mitigation instructions.

\T{Testbed.} We built a testbed involving three physical servers. The first is used for traffic generation towards the load-balancers. Specifically, we used several apache ab instances augmented with our tool to generate realistic stateful traffic from multiple IP addresses with delay and racing among different clients.  The second station holds ten autonomous instances (\ie separate processes) of HAProxy load-balancers listening on different ports for incoming requests. Finally, at the third station, we used docker to deploy  Apache server instances listening on different sockets.

\subsubsection{\HHHAWAII{}'s Accuracy}
In this experiment, we evaluate MST (denoted as \emph{Interval}),  the Baseline algorithm and \HHHAWAII{} with a single load-balancer client. 
Our goal is to monitor the last 1,000,000 HTTP requests that have entered the load-balancer. The Baseline algorithm and \HHHAWAII{} are set at $\epsilon_a =0.1\%$ and a window size of 1,000,000 requests.  The MST Interval instance is using a measurement period of 1,000,000 requests and is configured with $\epsilon_a=0.025\%$, resulting in comparable memory usage. For each new incoming HTTP request, each algorithm estimates the frequency of each of its IP prefixes.

The results are depicted in Figure \ref{fig:singlelb}. In all the traces, the Interval approach is the least accurate, while as expected, \HHHAWAII{} is slightly less accurate than the Baseline algorithm due to its use of sampling. These conclusions hold \mbox{for every prefix length and testbed workload.}
%is true for every prefix length and for every tested

%For the backbone trace, the results are slightly better since the trace is more uniform thus the lost information results in a lower miss-estimation. As can be seen in sub-figures (b) and (c) its performance degrades in other (skewed) traffic patterns.

%In this evaluation
%Thus far, we have compared to existing solutions in terms of speed. We now turn to evaluate the error of the different algorithms. As we show, the error that our sampling adds is rather small and can be compensated for by a slight increase in the number of counters. The evaluation was done over a sixteen million packet \ran{Shay - how many packets did you use?} stream and with a window size of $W=10^6$.

%Figure~\ref{fig:singleLB} presents the empirical error for three alternative. First, MST~\cite{HHHMitzenmacher} that is deterministic but slow, our \HHHAWAII{} algorithm and a measurement-interval based solution. As the window methods require more space, we use $\epsilon_a=0.1\%$ for MST and \HHHAWAII{} while configuring the stream algorithm for $\epsilon=0.025\%$. As depicted, for all network sizes, our strictly outperforms the stream solution. In contrast, because we compute a single prefix rather than all of them (as in MST), we have a larger error, especially in the Datacenter Trace.
%Intuitively, our error bound, as stated in Theorem~\ref{thm:correctness} implies that
\begin{figure}[t!]
\subfloat[Backbone Trace] {\includegraphics[width=0.333\columnwidth]{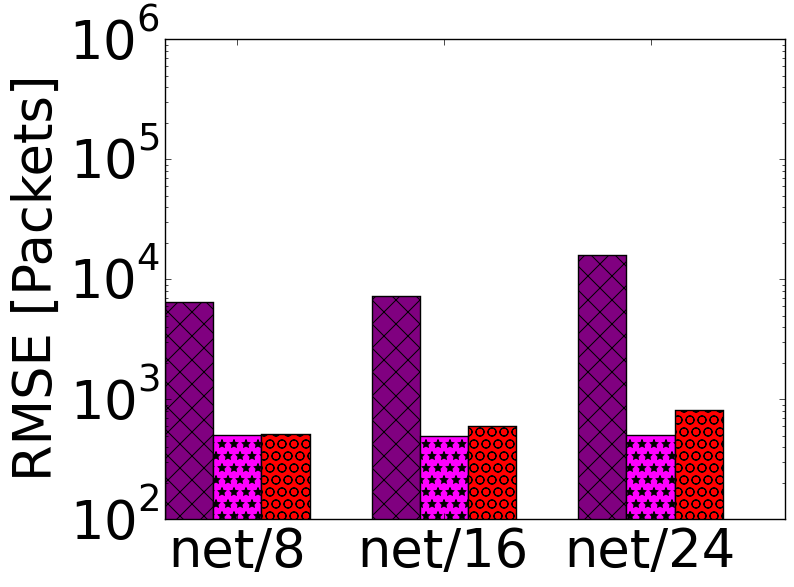}}
\subfloat[Datacenter Trace] {\includegraphics[width=0.333\columnwidth]
{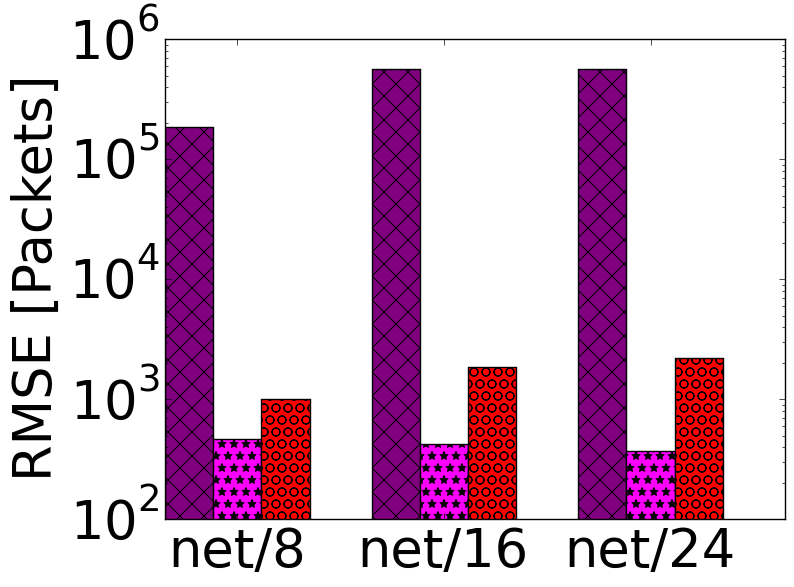}}
\subfloat[Edge Trace] {\includegraphics[width=0.333\columnwidth]
{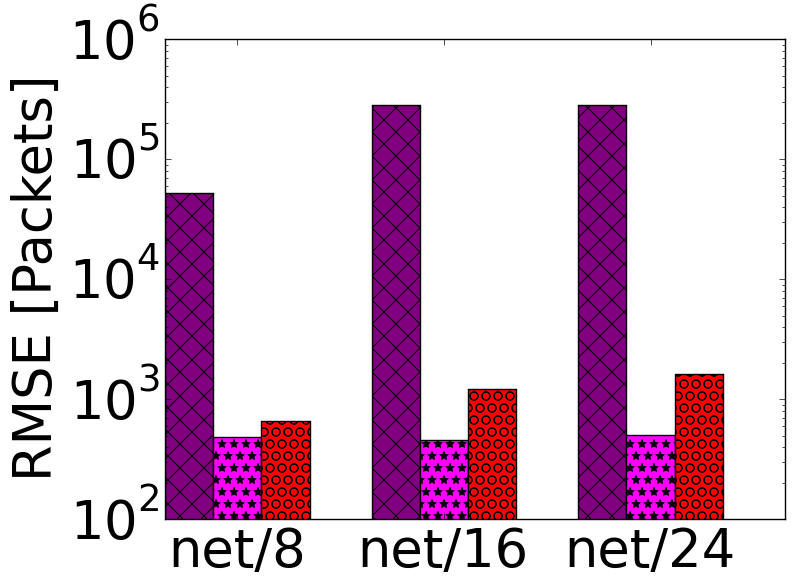}}\\{\includegraphics[width=.7\columnwidth]
{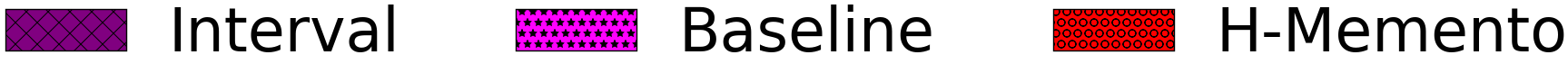}}
%{\includegraphics[width=\columnwidth, height=0.4cm]{figures/legend.PNG}}

\caption{\label{fig:singlelb}
Comparing the error of H-Memento.  % \IK{not clear whether Baseline has the same conditions as us, and why we are happy that it beats us... and also what we compare with the Interval algorithm: the RMSE on a window?} %ideal sliding window. 
%Error (RMSE) for various algorithms
%\ran{grayscale}
}
\end{figure}
\begin{figure}[t]
\subfloat[Backbone Trace]
{\includegraphics[width=0.332\columnwidth]{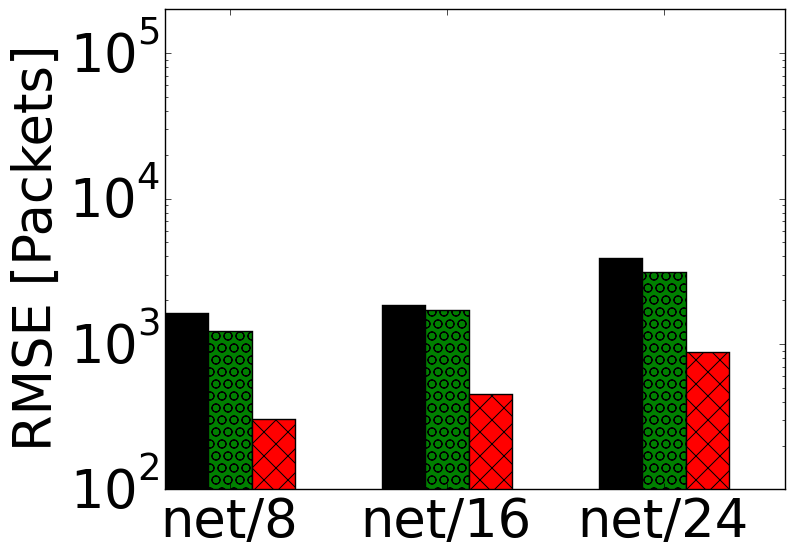}}
\subfloat[Datacenter Trace] {\includegraphics[width=0.332\columnwidth]
{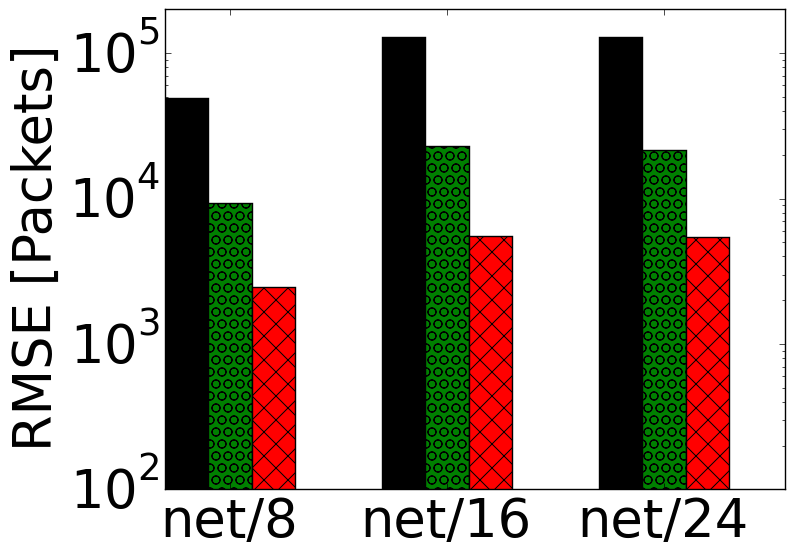}}
\subfloat[Edge Trace]
{\includegraphics[width=0.332\columnwidth]
{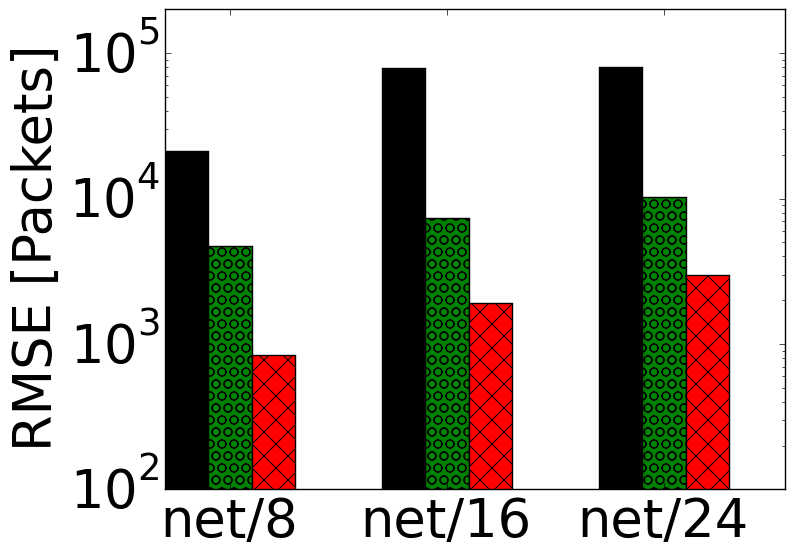}}\\{\includegraphics[width=.7\columnwidth]
{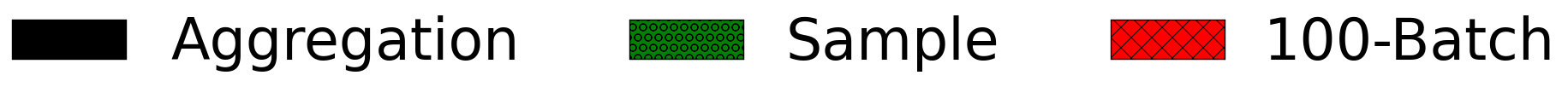}}
%{\includegraphics[width=\columnwidth, height=0.4cm]{figures/legend.PNG}}

\caption{\label{fig:manyLBs}
Network-wide evaluation. Accuracy attained by D-\HHHAWAII{} with a bandwidth limit of 1B per ingress packet under different transmission options.
}
\end{figure}

\begin{figure*}[t!]
\subfloat[Identification over Time\label{fig:attack1}]
{\includegraphics[width=0.332\linewidth]{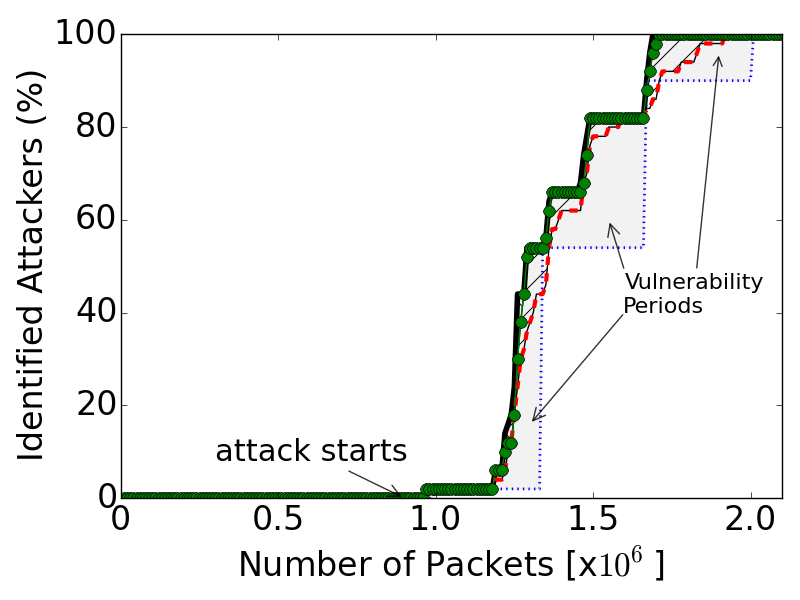}}
\subfloat[Identification (zoom)\label{fig:attack2}] {\includegraphics[width=0.332\linewidth]
{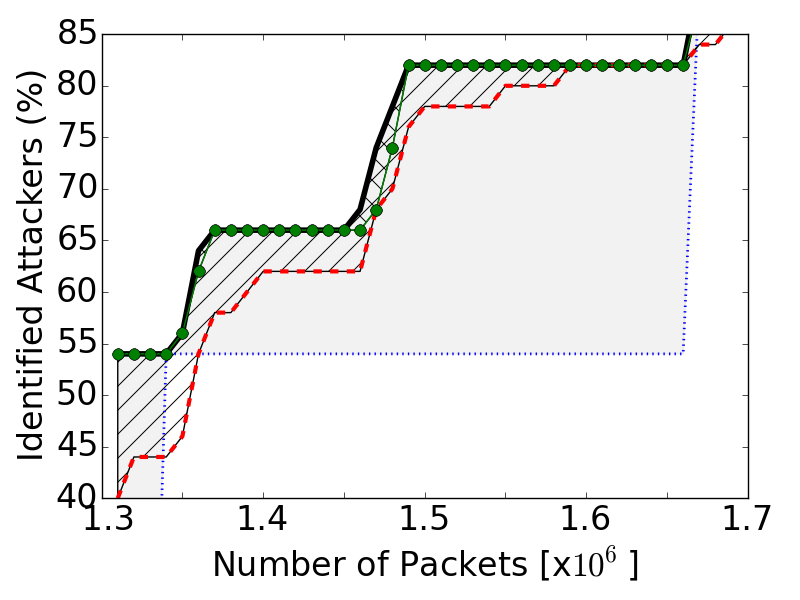}}
\subfloat[Percentage of missed attack packets\label{fig:attack3}] {\includegraphics[width=.332\linewidth]
{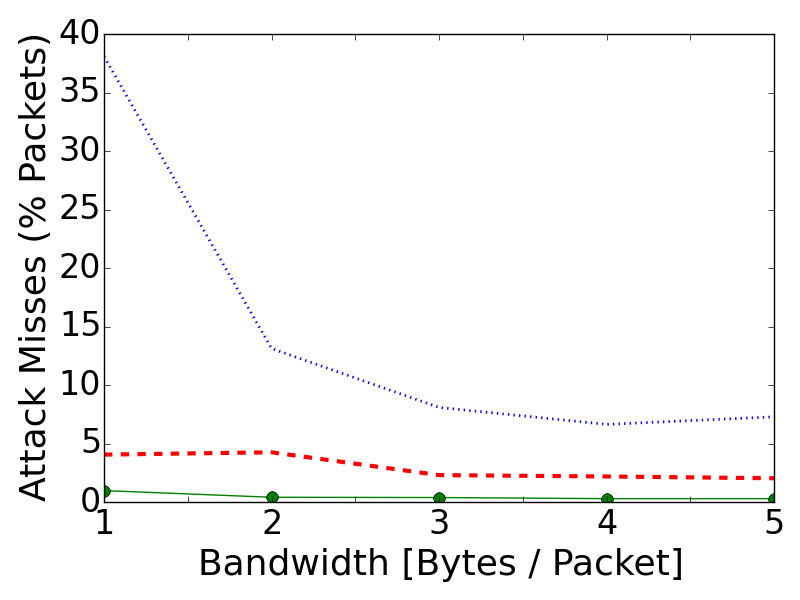}}\\
{\includegraphics[width=.5\linewidth]
{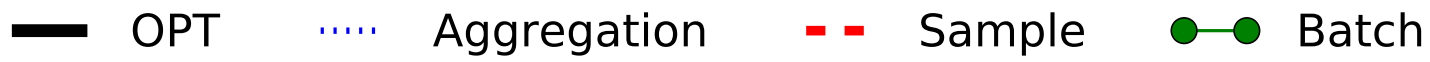}}
%{\includegraphics[width=\columnwidth, height=0.4cm]{figures/legend.PNG}}
\caption{\label{fig:attack} HTTP flood detection experiment. 50 attacking LANs on top of the backbone trace. Comparison of the detection time of the different approaches. Our batching approach achieves near-optimal detection time.}
\end{figure*}
\subsubsection{Accuracy and Traffic Budget}
In this experiment, we generate traffic towards ten load-balancers communicating with a centralized controller that maintains a global window view of the last 1,000,000 requests that entered the system. We evaluate the three different transmission methods (Aggregation, Sample, and Batch) with the same 1-byte per packet control traffic budget.

\T{Results.} Figure \ref{fig:manyLBs} depicts the results. As indicated, the best accuracy is achieved by the Batch approach, while Sample significantly outperforms Aggregation. 
Intuitively, the Aggregation method sends the largest messages, each of which contains the full information known to the measurement point.
%and is the most bandwidth-efficient of the three. 
Its drawback is a long delay between controller updates. The Sample method has a smaller delay but utilizes the bandwidth inefficiently due to the packet header overheads. Finally, Batch has a slightly higher delay but delivers more data within the bandwidth budget, which \mbox{improves the controller's accuracy.}

% In the Aggregation method, we use the Baseline algorithm with an ideal sketch aggregation in which we derive a lower bound on the minimal amount of counters that must be transmitted to maintain an error of 0.1\%. For \HHHAWAII{} we consider the sampling and the batching approach with batches of 100 packets.

\subsection{HTTP Flood Evaluation}
We now evaluate our detection system in an HTTP flood.  Our deployment consists of ten HAProxy load-balancers that serve as the entry point and direct requests to Apache servers. The HAProxy load-balancers also report to the centralized controller that discovers subnets that exceed the user-defined threshold. The bandwidth budget is set to 1-byte per packet and the window size \mbox{is $W=1$ million packets.}
%and the error is $\epsilon=0.01\%$.

\T{Traffic.} We inject flood traffic on top of the Backbone packet trace. Specifically, we select a random time at which we inject 50 randomly-picked 8-bit subnets that account for 70\% of the total traffic once the flood begins. We generate a new trace as follows. (1) We select 50 subnets by randomly choosing 8-bits for each, and (2) a random trace line in the range ($0$,$10^6$). Until that line the trace is unmodified. (3) From that line on, at each line, with probability 0.7 we add a flood line  from a uniformly picked flooding sub-network, and with probability 0.3 we  skip to the next line of the original trace. %This simulates an \emph{HTTP flood} attack, where attackers establish seemingly legitimate connections and send HTTP requests. 

\T{Results.} Figure \ref{fig:attack} depicts the results. Figure~\ref{fig:attack1} and Figure~\ref{fig:attack2} show the detection speed of the flooding subnets by the three different approaches at the controller. We compare among the three approaches and additionally outline an optimal algorithm that uses an accurate window and ``knows" exactly what traffic enters the load-balancers without delay (OPT). It is notable that the Batch approach achieves near-optimal performance, and outperforms Sample and Aggregation.
Figure~\ref{fig:attack3} shows that the Batch method identifies almost all of the attack messages as is expected by our theoretical analysis.  Further, its miss rate is 37$\times$ smaller under the 1-byte per packet bandwidth budget when compared \mbox{to the ideal Aggregation method.}

% * <sran@cs.technion.ac.il> 2018-01-30T13:53:12.930Z:
%
% ^.
%\ran{Shay, please verify. Did we end up doing HTTP GET or POST?} \shay{I tested HTTP GET. The POST is much slower.}

%To simulate the DDoS attack, we needed a tool that can generate a large number of consistent

\section{Related Work}
\label{sec:related}
\T{Heavy hitters} are an active research area on both intervals~\cite{FAST,HashPipe,SpaceSavings,RAP,DIMSUM,AndersonIMSUM,SketchVisor} and sliding windows~\cite{WCSS,HungLT10,SWHH,FAST,SWAMP}. 
HH integration in the single-device mode is an active research challenge.  For example, Sketchvisor~\cite{SketchVisor}
suggests using a lightweight fast-path measurement when the line is busy. This increases the throughput but reduces accuracy. Alternatively, HashPipe~\cite{HashPipe} adopts the interval-based Space Saving~\cite{SpaceSavings} into programmable switches. 
NetQRE~\cite{netqre} allows the network administrator to write a measurement program. The program can describe HH and HHH as well as sliding windows.  However, their algorithm is exact rather than approximate and requires a space that is linear in the window size \mbox{which is expensive for large windows.} 

\T{Hierarchical heavy hitters.} 
The HHH problem was first defined (in the Interval model) by~\cite{Cormode2003}, which also introduced the first algorithm. The problem then attracted a large body of follow-up work as well as an extension to multiple dimensions~\cite{Cormode2004,CormodeHHH,Hershberger2005,HHHMitzenmacher,Zhang:2004:OIH:1028788.1028802,MASCOTS,HHHIMC17}. MST~\cite{HHHMitzenmacher} is a conceptually simple multidimensional HHH algorithm that uses multiple independent HH instances; one instance is used for each prefix pattern. Upon a packet arrival, all instances are updated with their corresponding prefixes. The set of hierarchical heavy hitters is then calculated from the set of (plain) heavy hitters of each prefix type. The algorithm requires $O\left(\frac{H}{\epsilon}\right)$ space and $O\left(H\right)$ update time. 
MST can also operate in the sliding window model, by replacing the underlying HH algorithm with a sliding window solution~\cite{WCSS,FAST,HungAndTing}. Randomized HHH (RHH)~\cite{RHHH} is similar to MST but only updates a single HH instance.  This reduces the update complexity to a constant but requires a large amount of traffic to converge. RHHH does not naturally extend to sliding windows since each HH instance receives a slightly varying number of updates and \mbox{thus considers a different window.}

\T{Network-wide measurement.} The problem of network-wide measurement is becoming increasingly popular~\cite{RexfordNetworkwide,Sigcomm2018Networkwide, SketchVisor,FlowRadar}. A centralized controller collects data from all measurement points to form a network-wide perspective. Measurement points are placed in the network so that each packet is measured only once. The work of~\cite{HHTagging} suggests marking monitored packets which allows for more flexible measurement point placement. 

In~\cite{RexfordNetworkwide}, the controller determines a dynamic reporting threshold that allows for reduced communication overheads. It is unclear how to utilize the method in the sliding window model.  Yet, the optimization goal is very similar in essence to this work: maximize accuracy and minimize traffic overheads.  Stroboscope~\cite{Stroboscope} is another network-wide measurement system that also guarantees that the overheads adhere to a strict budget.  
FlowRadar~\cite{FlowRadar} avoids communication during the entire measurement period.
Instead, the state of each measurement point is shared at the end of the measurement. Thus, FlowRadar follows the Interval pattern, which we showed to be slow to detect new heavy hitters. 

%offline. Thus, while communication efficient \IK{?} the method does not extend to sliding windows. 
%\IK{so can we generalize on the fundamental reason behind why these methods cannot be applied for windows? or is there a different reason for each method?}

%deals with efficient sketch aggregation problem. That is, switches maintain a limited state measurement which is then sent to a centralized controller. The work of~\cite{RexfordNetworkwide} investigates settings similar to our own, they suggest different synchronization strategies that optimize communication with the controller. However, it is uncertain how these strategies extend to sliding windows. 

%goes a step further and guarantees that the measurement overheads remain within a strict budget. 

%\HHHAWAII{} tackles this problem by running \emph{a single} \HHAWAII{} instance with the same number of counters as all of MST's instances combined.
\section{Conclusions}
\label{sec:discussion}
%\ccr{TODO: It is not clear how to detect an HHH in real time. Explain that the query is expensive... for future work.}
Our work highlights the potential benefits of sliding-window measurements to cloud operators and makes them practical for network applications. Specifically, we showed in this work that window-based measurements detect traffic changes faster, and thus enable more agile applications. Despite these benefits, sliding windows have not been used extensively, since existing window algorithms are too slow to cope with the line speed and do not provide a network-wide view. Accordingly, we introduced the Memento family of HH and HHH algorithms for both single-device and network-wide measurements. We analyzed the algorithms and extensively evaluated them on real traffic traces. Our evaluations indicate that the Memento algorithms meet the necessary speed and efficiently to provide network-wide visibility. Therefore, our work turns sliding-window HH and HHH measurements into a practical option for the next generation of network applications.

\changed{A potential drawback of existing HHH solutions, ours included, is the ability to make real-time queries. That is, while RHHH provides line-rate packet processing on streams and H-Memento provides it for sliding windows, neither allows sufficiently fast queries. Therefore, we believe that a mechanism that would allow constant-time updates for detection of changes in the hierarchical heavy hitters set would be a promising direction for future work.}

We \sout{(anonymously)} 
%\IK{please clean all references to the TRCode -- nothing anonymous anymore -- and also check if we need to change sth there.} 
open-sourced the Memento algorithms and the HAProxy load-balancer extension that provides capabilities to block and rate-limit traffic from entire sub-networks (rather than from individual flows)~\cite{TRCode}. We hope that our open-source code will further facilitate sliding-window measurements in network applications.

\section*{Acknowledgments}
%\IK{Please make sure to ACK all the people you consulted for the paper. Also anonymous reviewers + shepherd. Also please add any financial ACK below -- I added mine.Ran: me too.}
We thank the anonymous reviewers and our shepherd, Kenjiro Cho, for their helpful comments and suggestions.
This work was partly supported by the Hasso Plattner Institute Research School; the Zuckerman Institute; the Technion Hiroshi Fujiwara Cyber Security Research Center; and \mbox{the Israel Cyber Bureau.}

\ifdefined\fullVersion
%\cleardoublepage
%\newpage
\appendix

%\appendix

%Algorithm~\ref{alg:windowcounting} provides a pseudo-code for this algorithm. Observe that the blue lines (Line 3 - Line 17) represent the window maintenance and that the new item is added to the window in the black lines.

%Figure~\ref{fig:window-counting} illustrates the algorithm's setting.

%There additional \HHAWAII{} requires other minor but important modifications to WCSS. First, since full updates are performed with probability $\tau$, a full update is expected once in $\tau^{-1}$ packets. Thus, we multiply the estimation of each item by $\tau^{-1}$.
%Similarly, we also need to multiply the overflow threshold of WCSS by $\tau$ to preserve its original meaning.

\section{\Hname Analysis}
\label{sec:analysis}
\T{Notations.} 
%We require additional notations to accurately describe its \FIK{of what? It looks like we want the 2nd paragraph before/instead of the 1st one. The reference to Table 1 is repeated.} output procedure. Notations are given in Table~\ref{tbl:notations} and the output method is described in Section~\ref{accurateDescription}.
%This section formalizes the Heavy Hitters and Hierarchical Heavy Hitter problems and provides notations and definitions. 
For brevity, notations are summarized in Table~\ref{tbl:notations}. %\FIK{the stuff below was already defined in the main paper... and we %copy-paste many of the paper's paragraphs. Want to do some cleaning?}
%
%Our traffic is modeled as the list $\mathbb{S}$, which is initially %empty and expands as at each step a packet is added.  A sliding window %algorithm considers only the last $W$ packets in its calculation. We %denote these packets as $\mathbb{S}^W$.
%
%The notation $f^{W}_e$ denotes the frequency of a flow $e$ in %$\mathbb{S}^W$. Given $e$, a heavy hitters algorithm provides an %estimator $(\widehat{f^{W}_e})$ for $f^{W}_e$.
%We formalize the problem as follows:
%
%\begin{definition}
%    \label{Def:probFE}
%    An algorithm solves the {\sc {$(\epsilon, \delta)$ - Window %Frequency Estimation}} problem if, for any flow ($x$),
%    it provides  $\widehat{f^W_{x}}$ such that
%    $\Pr \left[ {\left| {{f^W_x} - \widehat {{f^W_x}}} \right| \le %\varepsilon W} \right] \ge 1 - \delta . $
%\end{definition}

\T{Main result.} 
%The main theoretical result in the paper for \name is %Theorem~\ref{thm:correctness}. It states that \name solves the {\sc %{$(\epsilon, \delta)$ - Window Frequency Estimation}} problem  for %$\varepsilon = \varepsilon_a + \varepsilon_s$ whenever it is allocated  %$O(1/\epsilon_a)$ counters and has a sampling probability that satisfies %$\tau\ge Z_{1-\frac{\delta}{4}}W^{-1}\epsilon_s^{-2}$, where
%%. Here, $Z_\alpha$ is the $z$ value that satisfies $\Phi(z)=\alpha$ and %$\Phi$
%$Z$ is the inverse of the cumulative density function of the normal %distribution with mean $0$ and STD $1$.
%
%In other words, {the theorem emphasizes the link between the memory %space allocated to the counters and the sampling rate.} Specifically, if %we provide many counters to the algorithm, \ie have a low $\epsilon_a$, %then we can afford a higher $\epsilon_s$, \ie the sampling rate can be %low.
%
%
%
%
The main result for \HHHAWAII{} is given by Theorem~\ref{thm:correctness}, in Section~\ref{sec:analysis}. Given the window size ($W$), the desired accuracy ($\epsilon_s$), the desired confidence ($\delta$), and the hierarchy size ($H$), Theorem~\ref{thm:correctness} provides a lower bound on the sampling probability $\tau$ that still ensures correctness. Specifically, the theorem says that \HHHAWAII{} is correct for any $\tau>\NB \triangleq \NBound$. The symbol $Z$ is a parameter that depends on $\delta$, and satisfies $Z<4$ for any $\delta> 10^{-6}$.

%\FIK{It looks like the new stuff starts here...} 
That is, we prove that the HHH set returned by \HHHAWAII{} satisfies the accuracy and coverage properties. Section~\ref{sec:analSamples} shows the correctness of \HHAWAII{} and the accuracy property of \HHHAWAII{}. We then show Coverage in Section~\ref{anal:randHHH}. Finally, Section~\ref{sec:RHHH-prop} shows that \HHHAWAII{} solves {\sc$(\delta,\epsilon,\theta)-$approximate windowed HHH}.

We model the update procedure of \HHHAWAII{} as a balls and bins experiment where we first select one out of $H$ prefixes and then update that prefix with probability $\tau$. For simplicity, we assume that $\frac{H}{\tau}\in\mathbb N$.
Thus, we have $V\triangleq\frac{H}{\tau}$ bins and $W$ balls.  Upon a packet arrival, we place a ball in one of the bins; if the bin is one of the first $H$, we perform a full update for the sampled prefix type, and otherwise we perform a window update.  Definition~\ref{def:Xi} formulates this model. Alternatively, \HHAWAII{} is modeled as the degenerate case where $|H|=1$ and thus we update the fully specified~prefix.

\begin{definition}
\label{def:Xi}
For each bin ($i$) and set of packets ($K$), denote by $X^{K}_i$ the number of balls (from $K$) in bin $i$. When the set $K$ contains all packets, we use the notation $X_i$.
\end{definition}

We require confidence intervals for any $X_i$ and a set $K$. However, the $X_i$'s are correlated as $\sum\nolimits_{i=1}^{V} {{X_i}}  = W$ and therefore we use the technique of Poisson approximation. It enables us to compute confidence intervals for \emph{independent} Poisson variables $\set{Y_i}$ and convert back to the balls and bins case.

Formally, let $Y_1^K,...,Y_{ V }^K\sim Poisson\left( {\frac{K}{V}} \right)$, be \textbf{independent} variables representing the number of balls in each bin.

We now use Lemma~\ref{lemma:rare} to get intervals for the $X_i$'s.
\begin{lemma}[Corollary 5.11, page 103 of~\cite{Mitzenmacher:2005:PCR:1076315}]
    \label{lemma:rare}
    Let $\mathfrak E$ be an event whose probability monotonically increases with the number of balls. If the probability of $\mathfrak E$ is $p$ in the Poisson case then it is at most $2p$ in the exact case.
\end{lemma}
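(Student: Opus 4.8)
The plan is to prove Lemma~\ref{lemma:rare} from the exact correspondence between the balls-and-bins process and its Poisson relaxation, together with the monotonicity hypothesis. Write $W$ for the number of balls, and for each $k\ge 0$ let $p_k$ denote the probability of $\mathfrak E$ in the \emph{exact} experiment when exactly $k$ balls are thrown uniformly into the $V$ bins; by assumption $p_k$ is nondecreasing in $k$. First I would invoke the standard conditioning identity: if $Y_1,\dots,Y_V$ are independent with $Y_i\sim Poisson(W/V)$, then $N\triangleq\sum_{i=1}^{V}Y_i\sim Poisson(W)$, and conditioned on $\{N=k\}$ the vector $(Y_1,\dots,Y_V)$ has exactly the law of $k$ balls thrown uniformly into $V$ bins. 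Consequently the probability of $\mathfrak E$ in the Poisson experiment equals
\[
\sum_{k=0}^{\infty} p_k\,\Pr(N=k).
\]

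Next I would restrict this sum to the indices $k\ge W$ and use monotonicity ($p_k\ge p_W$ there) to obtain
\[
\sum_{k=0}^{\infty} p_k\,\Pr(N=k)\ \ge\ p_W\sum_{k=W}^{\infty}\Pr(N=k)\ =\ p_W\,\Pr(N\ge W).
\]
Since $p_W$ is precisely the probability of $\mathfrak E$ in the exact ($W$-ball) experiment, the bound $p_W\le 2\cdot(\text{Poisson probability})$ follows as soon as we know $\Pr(N\ge W)\ge\tfrac12$ for $N\sim Poisson(W)$. The monotonically decreasing case is symmetric: restrict to $k\le W$ and use $\Pr(N\le W)\ge\tfrac12$ instead.

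I expect the only real obstacle to be the clean justification of $\Pr(Poisson(W)\ge W)\ge\tfrac12$, i.e.\ that the mean of an integer-mean Poisson law does not exceed its median. The conditioning identity and the monotonicity truncation are routine, but this median/mean comparison is the one place that needs a genuine (if short) argument --- either by quoting a known bound on Poisson medians, or by a direct induction on $W$ showing $\Pr(N\ge W)\ge\Pr(N\le W)$, whence both are at least $\tfrac12$. Everything else reduces to nonnegativity of the $p_k$ and the exact Poisson--multinomial coupling.
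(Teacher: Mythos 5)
Your argument is correct and is precisely the standard proof of this statement: the paper itself does not prove the lemma but imports it verbatim from Mitzenmacher--Upfal (Corollary 5.11), whose proof is exactly your Poisson--multinomial conditioning identity, the monotone truncation to $k\ge W$, and the fact that $\Pr\left(\mathrm{Poisson}(W)\ge W\right)\ge\tfrac12$ for integer $W$. You have also correctly isolated the only nontrivial ingredient (the integer-mean Poisson median bound), which can be settled by the pairing argument $\Pr(N=W+k)\ge\Pr(N=W-k-1)$ for $0\le k\le W-1$.
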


\subsection{Accuracy Analysis}
\label{sec:analSamples}
To prove accuracy, we show that, for every prefix ($p$): \\ $\Pr \left( {\left| {{f^W_p} - \widehat {{f^W_p}}} \right| \le \varepsilon W} \right) \ge 1 - \delta.$
We have multiple sources of error and thus we first quantify the sampling error. Let $Y^{p}_i$ be the Poisson variable corresponding to a prefix $p$.
That is, the set $K$ contains all the packets that are generalized by $p$. Therefore: $E(Y^{p}_i) = \frac{f^W_p}{V} .$

We need to show that: \small $\Pr\left(\left|Y^{p}_i-E(Y^{p}_i)\right|\le \epsilon_s \frac{W}{V}\right)\ge 1-\delta $.
\normalsize
Luckily, there are many ways to derive confidence intervals for Poisson variables~\cite{19WaysToPoisson}. We use Lemma~\ref{lemma:poissonConfidence}, proved in~\cite{Wmethod}.

\begin{lemma}
\label{lemma:poissonConfidence}
Let $Y$ be a Poisson random variable. Then\\
$\Pr \left( {\left| {Y - E\left( Y \right)} \right| \ge {Z_{1-\delta }}\sqrt {E\left( Y \right)} } \right) \le \delta$;
here, $Z_\alpha$ is the $z$ value that satisfies $\Phi(z)=\alpha$ and $\Phi(z)$ is the cumulative density of the normal distribution with mean $0$ and STD $1$.

\end{lemma}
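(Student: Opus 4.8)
The plan is to prove the bound through the normal approximation to the Poisson distribution, which is the natural route given that the claimed threshold $Z_{1-\delta}\sqrt{E(Y)}$ is built from the standard-normal quantile. First I would record the structural fact that makes the $\sqrt{E(Y)}$ scaling meaningful: for $Y\sim\mathrm{Poisson}(\lambda)$ we have $E(Y)=\Var(Y)=\lambda$, so $\sqrt{E(Y)}$ is exactly the standard deviation of $Y$, and the event $\{|Y-E(Y)|\ge Z_{1-\delta}\sqrt{E(Y)}\}$ is precisely the event that $Y$ lies at least $Z_{1-\delta}$ standard deviations away from its mean. This reframing is what lets the normal quantile enter.

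The core step is to compare the standardized variable $\frac{Y-\lambda}{\sqrt{\lambda}}$ against a standard normal. When $\lambda$ is a positive integer I would write $Y=\sum_{j=1}^{\lambda}Y_j$ with $Y_j\sim\mathrm{Poisson}(1)$ independent (using additivity of independent Poissons), so that the central limit theorem gives that $\frac{Y-\lambda}{\sqrt{\lambda}}$ converges in distribution to $N(0,1)$ as $\lambda\to\infty$; for general $\lambda$ the same limit holds by the classical Poisson CLT. By the defining property $\Phi(Z_{1-\delta})=1-\delta$, the limiting normal assigns probability $1-\Phi(Z_{1-\delta})=\delta$ to each one-sided tail beyond $Z_{1-\delta}$, which delivers the stated deviation bound. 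To convert this limiting statement into a genuine finite-$\lambda$ inequality I would control the approximation error explicitly, either by a Berry--Esseen bound for the standardized Poisson (whose third-moment error term decays like $\lambda^{-1/2}$ and is negligible in our regime, where $E(Y)$ is large), or by a Poisson Chernoff/Bennett tail bound that is sub-Gaussian near the mean. This is exactly the content of the confidence-interval construction of~\cite{Wmethod}, so the cleanest write-up simply cites that result rather than re-deriving it.

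The main obstacle is precisely this passage from the asymptotic normal approximation to a clean finite-$\lambda$ bound carrying the \emph{sharp} constant $Z_{1-\delta}$: the CLT alone is only asymptotic, and a Chernoff-type argument yields sub-Gaussian tails but with a more conservative constant rather than the exact normal quantile, so matching the stated form requires either the error-controlled CLT or the established statistical result. A secondary bookkeeping point worth flagging is the gap between the one-sided tail (probability $\delta$) and the two-sided event $\{|Y-E(Y)|\ge Z_{1-\delta}\sqrt{E(Y)}\}$ (probability close to $2\delta$ under the exact normal). In the surrounding analysis this extra factor is absorbed together with the factor-of-two loss incurred by the Poisson approximation of Lemma~\ref{lemma:rare}, which is exactly why the downstream theorems are stated with $Z_{1-\delta/4}$ rather than $Z_{1-\delta}$.
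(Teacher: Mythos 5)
The paper never proves this lemma itself: it is imported wholesale from the cited reference~\cite{Wmethod} as a standard normal-approximation confidence interval for Poisson variables, so your decision to sketch the normal-approximation route and then defer to that citation is essentially the paper's own approach, and your sketch (mean equals variance so $\sqrt{E(Y)}$ is the standard deviation, standardize, invoke the Poisson CLT, control the finite-$\lambda$ error) is the standard derivation. One point you raise deserves emphasis because the paper does not address it: as literally stated the lemma pairs the \emph{one-sided} quantile $Z_{1-\delta}$ (defined by $\Phi(Z_{1-\delta})=1-\delta$) with a \emph{two-sided} deviation event, which under the exact normal limit has probability $2\delta$ rather than $\delta$; so the lemma is either loosely stated or implicitly uses $Z_{1-\delta/2}$, and the downstream proofs (which invoke it with rescaled confidence parameters such as $\delta_s/2$ and then pay another factor of two for the Poisson-approximation step of Lemma~\ref{lemma:rare}) silently absorb this constant. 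Your proposal is correct in substance and, unlike the paper, makes the asymptotic-versus-finite and one-sided-versus-two-sided caveats explicit.
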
 Lemma~\ref{lemma:poissonConfidence} lays the groundwork for Theorem~\ref{thm:pusmain}, which is the main accuracy result.
\begin{theorem}
    \label{thm:pusmain}
    If $\tau \ge \NBound$ then\\
    $\Pr \left( {\left| {{X_i}^pV - {f_p}} \right| \ge {\varepsilon _s}W} \right) \le {\delta _s}.$
\end{theorem}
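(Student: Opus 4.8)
The plan is to isolate the sampling (bin-count) error for a single prefix, bound it in the independent Poisson model, and only then transfer back to the exact balls-and-bins model via Lemma~\ref{lemma:rare}. Since $V>0$, the target event $\{|X_i^p V - f^W_p| \ge \varepsilon_s W\}$ (where $f^W_p$ is the window frequency written $f_p$ in the statement) coincides with $\{|X_i^p - f^W_p/V| \ge \varepsilon_s W/V\}$, so it suffices to control the deviation of the bin count $X_i^p$ from its mean $f^W_p/V$ by the threshold $t \triangleq \varepsilon_s W/V$. First I would pass to the independent Poisson surrogate $Y^p_i \sim \mathrm{Poisson}(f^W_p/V)$ from the preamble, for which $E(Y^p_i) = f^W_p/V$, prove the deviation bound there, and invoke Lemma~\ref{lemma:rare} at the very end.

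For the Poisson bound I would apply Lemma~\ref{lemma:poissonConfidence} with confidence parameter $\delta_s/2$, giving $\Pr(|Y^p_i - E(Y^p_i)| \ge Z_{1-\delta_s/2}\sqrt{E(Y^p_i)}) \le \delta_s/2$. The deterministic heart of the argument is then to check that the analytic half-width $Z_{1-\delta_s/2}\sqrt{E(Y^p_i)}$ never exceeds the target $t = \varepsilon_s W/V$, uniformly over all prefixes $p$. Using $f^W_p \le W$ we get $\sqrt{E(Y^p_i)} = \sqrt{f^W_p/V} \le \sqrt{W/V}$, so it is enough that $Z_{1-\delta_s/2}\sqrt{W/V} \le \varepsilon_s W/V$; squaring and substituting $V = H/\tau$ turns this into a lower bound on the sampling probability of the form $\tau \ge Z_{1-\delta_s/2}^2 H W^{-1}\varepsilon_s^{-2}$, which is exactly the hypothesized threshold $\tau \ge \NBound$ (up to the power of the constant $Z_{1-\delta_s/2}$, which stays below $4$ for any $\delta_s > 10^{-6}$). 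Consequently $\Pr(|Y^p_i - E(Y^p_i)| \ge t) \le \delta_s/2$ in the Poisson model.

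Finally I would transfer this bound to the exact model. The one delicate point is that Lemma~\ref{lemma:rare} requires an event that is monotone in the number of balls, whereas $\{|X_i^p - f^W_p/V| \ge t\}$ is two-sided and, worse, its own mean grows as balls are added. I would therefore split it into the upper tail $\{X_i^p \ge f^W_p/V + t\}$ and the lower tail $\{X_i^p \le f^W_p/V - t\}$: the former has probability monotonically increasing and the latter monotonically decreasing in the ball count, so (the symmetric form of) Lemma~\ref{lemma:rare} applies to each separately and at most doubles it. Since the combined Poisson probability of the two tails is the two-sided bound $\delta_s/2$ from the previous step, the doubling plus a union bound yields $\Pr(|X_i^p - f^W_p/V| \ge t) \le \delta_s$, i.e. $\Pr(|X_i^p V - f^W_p| \ge \varepsilon_s W) \le \delta_s$, as claimed. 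I expect this monotonicity bookkeeping --- reconciling a two-sided deviation around a drifting mean with the one-directional hypothesis of Lemma~\ref{lemma:rare} --- to be the main obstacle; the threshold comparison and the two cited lemmas then make the rest routine.
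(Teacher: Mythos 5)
Your proposal follows essentially the same route as the paper's proof: pass to the independent Poisson surrogate, apply Lemma~\ref{lemma:poissonConfidence} at level $\delta_s/2$, use $f^W_p \le W$ to extract the threshold on $\tau$, and transfer back to the exact balls-and-bins model via Lemma~\ref{lemma:rare}. You are in fact slightly more careful than the paper on two points --- splitting the two-sided deviation into monotone upper and lower tails before invoking Lemma~\ref{lemma:rare} (the paper only asserts that $Y_i^p$ is monotone in the number of balls), and observing that squaring the half-width comparison yields $Z_{1-\delta_s/2}^{2}$ rather than the first power appearing in the stated threshold --- but the decomposition, the key lemmas, and the overall structure are the same.
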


\begin{proof}
We use Lemma~\ref{lemma:poissonConfidence} for $\frac{\delta_s}{2}$ and get: \\
\small$\Pr \left( \left| {{Y_i}^p - \frac{{{f^W_p}}}{V}} \right| \ge {Z_{1 - \frac{\delta _s}{2}}\sqrt {\frac{{{f^W_p}}}{V}} } \right) \le \frac{\delta_s}{2}. $ \normalsize
Since we do not know the exact value of $f^W_p$, we assert that $f^W_p \le W$ to get:
$\Pr \left( \left| {{Y_i}^p - \frac{{{f^W_p}}}{V}} \right| \ge {Z_{1 - \frac{\delta _s}{2}}\sqrt {\frac{{{W}}}{V}} } \right) \le \frac{\delta_s}{2}. $
We need error of the form: $\frac{\epsilon_s\cdot W}{V}$ and thus set:
 $\frac{\epsilon_s\cdot W}{V} = {Z_{1 - \frac{\delta _s}{2}}\sqrt {\frac{{{W}}}{V}} } ={Z_{1 - \frac{\delta _s}{2}}\sqrt {\frac{{{W\tau}}}{H}} }$ We extract $\tau$ to get:
$\tau \ge {Z_{1 - \frac{{{\delta _s}}}{2}}}\frac{H}{W}{\varepsilon_s}^{ - 2} .
$
Thus, when $\tau \ge \NBound$,  we have that:
$\Pr \left( {\left| {{Y_i}^p - \frac{{{f^{W}_p}}}{V}} \right| \ge \frac{{{\varepsilon _s}W}}{V}} \right) \le \frac{{{\delta _s}}}{2} .$
We multiply by $V$ and get: \small
$\Pr \left( {\left| {{Y_i}^pV - {f^W_p}} \right| \ge {\varepsilon _s}W} \right) \le \frac{{{\delta _s}}}{2} .$\\
\normalsize
Finally, since $Y_i^{p}$ is monotonically increasing with the number of balls ($f^W_p$), we use Lemma~\ref{lemma:rare} and conclude that

$\Pr \left( {\left| {{X_i}^pV - {f_p}} \right| \ge {\varepsilon _s}W} \right) \le {\delta _s}.$
\normalsize
\end{proof}
To reduce clatter, we denote $\NB \triangleq \NBound$.
Theorem~\ref{thm:pusmain} shows that when $\tau\ge\NB$ the sample is accurate enough.
The error of the underlying \HHAWAII{} algorithm is proportional to the number of \emph{sampled} packets. Thus, if we oversample we get a slightly worse accuracy guarantee. We compensate by allocating (slightly) more counters as explained in Corollary~\ref{cor:oversample}.

\begin{corollary}
    \label{cor:oversample}
    Consider the number of updates (from the last $W$ items) to the underlying algorithm ($X$). If $\tau\ge\NB$, then $\Pr \left( {{X} \le \frac{W}{V}\left( {1 + {\varepsilon _s}} \right)} \right) \ge 1 - {\delta _s}.$
\end{corollary}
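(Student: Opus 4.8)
The plan is to mirror the argument of Theorem~\ref{thm:pusmain}, but with a \emph{one-sided} deviation bound, since here we only need an upper bound on the number of updates $X$. In the balls-and-bins model of Definition~\ref{def:Xi}, $X$ counts the balls (out of the $W$ in the window) that land in the bin(s) triggering a full update to the underlying algorithm, so its expectation is $\E(X)=\frac{W}{V}$. First I would pass to the Poisson approximation: let $Y\sim Poisson\left(\frac{W}{V}\right)$ be the independent Poisson surrogate for this count, and analyze the event $\mathfrak E\triangleq\left\{X>\frac{W}{V}(1+\eps_s)\right\}$ in the Poisson world.

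Next, I would apply the one-sided direction of Lemma~\ref{lemma:poissonConfidence} at confidence $\frac{\delta_s}{2}$ to $Y$, which gives $\Pr\left(Y-\E(Y)\ge Z_{1-\frac{\delta_s}{2}}\sqrt{\E(Y)}\right)\le\frac{\delta_s}{2}$. It then suffices to check that the multiplicative slack $\eps_s\E(Y)=\eps_s\frac{W}{V}$ dominates the additive width $Z_{1-\frac{\delta_s}{2}}\sqrt{\frac{W}{V}}$, i.e.\ that $Z_{1-\frac{\delta_s}{2}}\le\eps_s\sqrt{\frac{W}{V}}=\eps_s\sqrt{\frac{W\tau}{H}}$. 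Rearranging this inequality for $\tau$ produces exactly the threshold extracted in Theorem~\ref{thm:pusmain}, so the hypothesis $\tau\ge\NB=\NBound$ guarantees it; under this hypothesis the Poisson-case probability of $\mathfrak E$ is at most $\frac{\delta_s}{2}$.

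Finally, I would transfer the bound back to the exact (dependent) balls-and-bins setting. The event $\mathfrak E$ is monotone increasing in the number of balls, since adding balls can only increase the count $X$ and hence the chance it exceeds the fixed threshold $\frac{W}{V}(1+\eps_s)$. Thus Lemma~\ref{lemma:rare} applies, and the exact-case probability of $\mathfrak E$ is at most $2\cdot\frac{\delta_s}{2}=\delta_s$. Taking complements yields $\Pr\left(X\le\frac{W}{V}(1+\eps_s)\right)\ge 1-\delta_s$, as claimed.

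The algebraic rearrangement and the Poisson tail estimate are routine, being essentially inherited from Theorem~\ref{thm:pusmain}. The one genuinely delicate step is verifying the hypothesis of Lemma~\ref{lemma:rare}: I must confirm that the relevant event is the monotone one, namely an upper-tail event on a ball count. This is exactly why it is important to phrase the claim through the one-sided event $\mathfrak E$ rather than a two-sided deviation~--~a two-sided event $\{|X-\E(X)|\ge t\}$ is \emph{not} monotone in the number of balls, so the factor-of-two Poisson-to-exact transfer would not directly apply to it.
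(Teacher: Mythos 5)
Your proof is correct and uses the same underlying machinery, but it takes a more self-contained route than the paper. The paper's proof of Corollary~\ref{cor:oversample} is a two-line deduction: it invokes the already-proved two-sided bound of Theorem~\ref{thm:pusmain} (instantiated so that the relevant count has mean $\frac{W}{V}$) and then simply discards the lower tail to obtain $\Pr\left(X\le\frac{W}{V}(1+\eps_s)\right)\ge1-\delta_s$. You instead re-run the Poisson argument from scratch for the one-sided event $\left\{X>\frac{W}{V}(1+\eps_s)\right\}$: one-sided tail of Lemma~\ref{lemma:poissonConfidence} at level $\frac{\delta_s}{2}$, the same extraction of the threshold $\tau\ge\NB$, then Lemma~\ref{lemma:rare}. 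What your version buys is precisely the point you flag at the end: Lemma~\ref{lemma:rare} (Corollary 5.11 of Mitzenmacher--Upfal) requires the event's probability to be monotone in the number of balls, and an upper-tail event is monotone while a two-sided deviation event is not; the paper's Theorem~\ref{thm:pusmain} applies the factor-of-two transfer to a two-sided event somewhat loosely (it is repairable by splitting into the two tails, each monotone in one direction, and union-bounding, but this is left implicit). Your formulation sidesteps that subtlety entirely. You also implicitly correct a normalization slip in the paper's proof, which writes the deviation as $\eps_s W$ where $\eps_s\frac{W}{V}$ is what the stated conclusion requires; your comparison of $\eps_s\E(Y)=\eps_s\frac{W}{V}$ against $Z_{1-\frac{\delta_s}{2}}\sqrt{\frac{W}{V}}$ is the right one. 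The only blemish, inherited directly from the paper's own algebra in Theorem~\ref{thm:pusmain}, is that solving $Z_{1-\frac{\delta_s}{2}}\le\eps_s\sqrt{\frac{W\tau}{H}}$ for $\tau$ actually yields $\tau\ge Z_{1-\frac{\delta_s}{2}}^{2}HW^{-1}\eps_s^{-2}$, i.e.\ the threshold should carry $Z^2$ rather than $Z$; since you reproduce the paper's stated threshold $\NB=\NBound$ verbatim, this is not a gap in your reasoning relative to the paper, but it is worth noting if the constant matters.
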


\begin{proof}
 Theorem~\ref{thm:pusmain} yields:
$\Pr \left( {\left| {{X} - \frac{W}{V}} \right| \ge {\varepsilon _s}W} \right) \le {\delta _s}.$
Thus: $\Pr \left( {{X} \le \frac{W}{V}\left( {1 + {\varepsilon _s}} \right)} \right) \ge 1 - {\delta _s}.$
\end{proof}

Corollary~\ref{cor:oversample} means that, by allocating slightly more space to the underlying algorithm, we can compensate for possible oversampling.
Generally, we configure an algorithm ($\mathbb A$) that solves {\sc {$(\varepsilon_a, \delta_a)$ - Windowed Frequency Estimation}} with
$\varepsilon_a' \triangleq \frac{\varepsilon_a}{1+\varepsilon_s}$.
Applying Corollary~\ref{cor:oversample}, we get that, with probability $1-\delta_s$, there are at most $(1+\varepsilon_s)\frac{W}{V}$ sampled packets. Using the union bound we have that with probability $1-\delta_a-\delta_s$:
$\left| {{X^p} - \widehat {{X^p}}} \right| \le {\varepsilon _{a'}}\left( {1 + {\varepsilon _s}} \right)\frac{W}{V} = \frac{{{\varepsilon _a}\left( {1 + {\varepsilon _s}} \right)}}{{1 + {\varepsilon _s}}}\frac{W}{V} = {\varepsilon _a}\frac{W}{V}.$
For example, WCSS requires $4,000$ counters for $\epsilon_a=0.001$.
If we set $\epsilon_s = 0.001$, we now require $4004$ counters.

Hereafter, we assume that the algorithm is already configured to accommodate this problem.

\begin{theorem}
    \label{thm:PUSCombined}
    Consider an algorithm ($\mathbb{A}$) that solves the {\sc {$(\epsilon_a, \delta_a)$ -Windowed Frequency Estimation}} problem.
    If $\tau \ge \NB$, then for  $\delta \ge \delta_a + 2 \cdot \delta_s$ and $\epsilon \ge \epsilon_a + \epsilon_s$, $\mathbb{A}$ solves {\sc {$(\epsilon, \delta)$ - Windowed Frequency Estimation}}.
\end{theorem}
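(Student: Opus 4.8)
The plan is to bound the total error $\bigl|\widehat{f^W_p}-f^W_p\bigr|$ by splitting it, via the triangle inequality, into a \emph{sampling} component and an \emph{algorithmic} component, control each piece with the results already established, and close with a union bound over the resulting failure events. Recall that the estimator produced by running $\mathbb{A}$ on the sampled substream is $\widehat{f^W_p}=\widehat{X^p}\,V$, so writing $X^p$ for the true number of sampled packets of prefix $p$ inside the window gives
\[
\bigl|\widehat{f^W_p}-f^W_p\bigr|
\;\le\;
V\bigl|\widehat{X^p}-X^p\bigr|
\;+\;
\bigl|X^pV-f^W_p\bigr|,
\]
where the first term is the error the underlying window-frequency estimator makes on the sampled stream and the second is the error incurred by scaling the sample count back up by $V$.

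First I would dispatch the second (sampling) term directly with Theorem~\ref{thm:pusmain}: since $\tau\ge\NB$, that theorem gives $\Pr\bigl(|X^pV-f^W_p|\ge\epsilon_s W\bigr)\le\delta_s$, so with probability at least $1-\delta_s$ this term is at most $\epsilon_s W$. Second, I would bound the algorithmic term. The subtlety is that $\mathbb{A}$'s guarantee is stated relative to the number of items it actually processes, namely $X^p$, whose expectation is $W/V$ but which may exceed $W/V$ because of random oversampling. I would therefore follow the configuration set up just before the theorem: instantiate $\mathbb{A}$ with the tightened parameter $\epsilon_a'\triangleq\epsilon_a/(1+\epsilon_s)$, invoke Corollary~\ref{cor:oversample} to assert that with probability $1-\delta_s$ there are at most $(1+\epsilon_s)W/V$ sampled packets, and combine this (a union bound over $\mathbb{A}$'s failure event and the oversampling event) with $\mathbb{A}$'s $(\epsilon_a',\delta_a)$ guarantee to obtain $|\widehat{X^p}-X^p|\le\epsilon_a'(1+\epsilon_s)W/V=\epsilon_a W/V$ with probability at least $1-\delta_a-\delta_s$; multiplying by $V$ bounds the first term by $\epsilon_a W$.

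Finally, I would assemble the pieces. On the intersection of the three good events — which by a union bound holds with probability at least $1-(\delta_a+\delta_s)-\delta_s=1-(\delta_a+2\delta_s)\ge 1-\delta$ — the triangle inequality yields $\bigl|\widehat{f^W_p}-f^W_p\bigr|\le\epsilon_a W+\epsilon_s W=(\epsilon_a+\epsilon_s)W\le\epsilon W$, which is precisely the $(\epsilon,\delta)$-Windowed Frequency Estimation guarantee.

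The only genuinely delicate step is the second one: correctly absorbing the oversampling factor $(1+\epsilon_s)$ into the reconfigured accuracy $\epsilon_a'$ so that the algorithmic error stays at $\epsilon_a W$ rather than inflating, and tracking the three failure probabilities so that they sum to $\delta_a+2\delta_s$. The factor $2\delta_s$ is worth flagging: the sampling concentration is used twice — once through Theorem~\ref{thm:pusmain} to bound the magnitude of $X^pV-f^W_p$, and once through Corollary~\ref{cor:oversample} to bound the \emph{count} $X^p$ — and each application costs a $\delta_s$. Everything else is a routine triangle inequality and union bound.
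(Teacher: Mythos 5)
Your proposal is correct and follows essentially the same route as the paper's proof: the same add-and-subtract of $X^pV$ (triangle inequality) to split sampling error from algorithmic error, the same invocations of Theorem~\ref{thm:pusmain} and Corollary~\ref{cor:oversample} (including the reconfiguration $\epsilon_a'=\epsilon_a/(1+\epsilon_s)$ that the paper sets up just before the theorem), and the same union bound yielding $\delta_a+2\delta_s$. No gaps.
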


\begin{proof}
    We use Theorem~\ref{thm:pusmain}.
    That is,
    we have that
    \begin{equation}
    \label{eq:delta2}
    \Pr \left[ {\left| {{f^W_p} - {X_p}V} \right| \ge {\varepsilon _s}W} \right] \le {\delta _s}.
    \end{equation}

    $\mathbb{A}$ solves {\sc {$(\epsilon_a, \delta_a)$ - Windowed Frequency Estimation}} and provides us with an estimator $\widehat{X^p}$ for $X^p$ --  the number of updates for a prefix $p$.
    According to Corollary~\ref{cor:oversample}:
    {\small
    $\Pr \left( {\left| {{X^p} - \widehat {{X^p}}} \right| \le \frac{{{\varepsilon _a}W}}{V}} \right) \ge 1 - {\delta _a} - {\delta _s}.$}\normalsize~Multiplying by $V$ yields:
    \begin{equation}
    \label{eq:nodelta2}
\Pr \left( {\left| {{X^p}V - \widehat {{X^p}}V} \right| \ge {\varepsilon _a}W} \right) \le {\delta _a} + {\delta _s}.
    \end{equation}
    We need to show that: $\Pr \left( {\left| {{f^W_p} - \widehat {{X^p}}V} \right| \le \varepsilon W} \right) \ge 1 - \delta$.
    Note that: $f^W_p = E(X^p)V$ and  $\widehat{f^{W}_p} \triangleq \widehat{X^p}V$.
    Thus,
    \footnotesize
    \begin{align}
    &\Pr \left( {\left| {{f^W_p} - \widehat{f^W_p}} \right| \ge \varepsilon W} \right) = \Pr \left( {\left| {{f^W_p} - \widehat {{X^p}}V} \right| \ge \varepsilon W} \right)\notag\\
    =& \Pr \left( {\left| {{f^W_p} + \left( {{X^p}{V} - {X^p}{V}} \right) - {V}\widehat {{X^p}}} \right| \ge (\epsilon_a+\epsilon_s) W} \right)\label{eq:separation}
    \\ \le&\Pr \left( \left[{\left| {{f^W_p} - {X^p}{V}} \right| \ge {\varepsilon _s}W} \right]\vee  \left[{\left| {{X^p}{V} - \widehat {{X^p}}}{V} \right| \ge {\varepsilon _a}W}\right] \right)\notag.
    \end{align}\normalsize
    %    Using Equation~\ref{eq:nodelta2}, we derive the following upper bound
    The last inequality follows from the observation that if the error of~\eqref{eq:separation} exceeds $\epsilon W$, then one of the events occurs. We bound this expression with the Union bound.
    \small
\[\begin{array}{l}
\Pr \left( {\left| {{f^W_p} - \widehat {{f^W_p}}} \right| \ge \varepsilon W} \right) \le 
\Pr \left( {\left| {{f^W_p} - {X^p}V} \right| \ge {\varepsilon _s}W} \right)\\ + \Pr \left( {\left| {{X^p}V - \widehat {{X^p}}H} \right| \ge {\varepsilon _a}W} \right)  
\le{\delta _a} + 2{\delta _s},
\end{array}\]
\normalsize
where the last inequality follows from Equations (\ref{eq:delta2}) and (\ref{eq:nodelta2}).
\end{proof}

Theorem~\ref{thm:PUSCombined} implies accuracy, as it guarantees that, with probability $1-\delta$ ,the estimated frequency of \emph{any} prefix is within $\varepsilon W$ of its real frequency. In particular, this means that the HHH prefix estimations are within $\varepsilon W$ bound as shown by Corollary~\ref{cor:accuracy}.
Furthermore, by considering the degenerate case where we always select fully specified items (\ie $H=1$ and $V=\tau^{-1}$), we conclude the correctness of \HHAWAII{}, as stated in the following~Corollary~\ref{cor:RW-HH}.
\begin{corollary}
    \label{cor:accuracy}
    If $\tau \ge \NB$, then Algorithm~\ref{alg:Skipper} satisfies the accuracy constraint for $\delta = \delta_a+2\delta_s$ and $\epsilon = \epsilon_a+\epsilon_s$.
\end{corollary}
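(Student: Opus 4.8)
The plan is to derive this corollary almost immediately from Theorem~\ref{thm:PUSCombined}, because the Accuracy condition of Definition~\ref{def:deltaapproxHHH} is precisely a \emph{per-prefix} windowed frequency-estimation guarantee, which Theorem~\ref{thm:PUSCombined} already establishes for an \emph{arbitrary} prefix. First I would observe that the underlying \HHAWAII{} instance inside Algorithm~\ref{alg:Skipper} is configured---via the oversampling compensation of Corollary~\ref{cor:oversample}, using $\varepsilon_{a'}\triangleq\varepsilon_a/(1+\varepsilon_s)$ counters---so that it solves the {\sc{$(\varepsilon_a,\delta_a)$ - Windowed Frequency Estimation}} problem on the sampled substream. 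This is exactly the hypothesis that Theorem~\ref{thm:PUSCombined} requires of the algorithm $\mathbb A$.

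Next, since by assumption $\tau \ge \NB = \NBound$, Theorem~\ref{thm:PUSCombined} applies directly and guarantees that the combined estimator $\widehat{f^W_p}=\widehat{X^p}V$ solves {\sc{$(\epsilon,\delta)$ - Windowed Frequency Estimation}} for $\epsilon=\varepsilon_a+\varepsilon_s$ and $\delta=\delta_a+2\delta_s$. Concretely, for any prefix $p$ this yields the bound $\Pr\left(\left|f^W_p-\widehat{f^W_p}\right|\le\varepsilon W\right)\ge 1-\delta$, which is the numerical content we need.

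Finally I would specialize this statement to the output set $P$ produced by Algorithm~\ref{alg:Skipper}. Because the bound above holds for every prefix in the hierarchy---in particular for each $p$ that the algorithm inserts into $P$---and because the Accuracy condition is stated as the per-prefix implication ``$p\in P \Rightarrow \Pr(|f^W_p-\widehat{f^W_p}|\le\varepsilon W)\ge 1-\delta$'', no union bound over the selected prefixes is required; the per-prefix guarantee \emph{is} the Accuracy constraint. The only point deserving care---and thus the closest thing to an obstacle---is bookkeeping: I must confirm that the two-sided estimator $\widehat{f^W_p}$ whose accuracy Theorem~\ref{thm:PUSCombined} controls is consistent with the one-sided bounds $\widehat{f^W_p}^{+},\widehat{f^W_p}^{-}$ used inside {\sc calcPred} when forming $P$, so that the reported frequency genuinely inherits the $\pm\varepsilon W$ bound, and that the parameters $\varepsilon_a,\varepsilon_s,\delta_a,\delta_s$ match the claimed $\epsilon,\delta$. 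Since there is no further analytical difficulty, the corollary follows.
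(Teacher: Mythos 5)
Your proposal is correct and follows essentially the same route as the paper, which derives Corollary~\ref{cor:accuracy} directly from Theorem~\ref{thm:PUSCombined} by observing that the Accuracy condition of Definition~\ref{def:deltaapproxHHH} is exactly the per-prefix windowed frequency-estimation guarantee specialized to the prefixes in the output set $P$. Your added bookkeeping about the oversampling configuration via Corollary~\ref{cor:oversample} and the absence of a union bound matches the paper's (largely implicit) reasoning.
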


 \begin{corollary}
     \label{cor:RW-HH}
     If
     $\tau\ge Z_{1-\frac{\delta_s}{2}}W^{-1}\epsilon_s^{-2}$ then
     \HHAWAII{} solves the {\sc {$(\epsilon, \delta)$ - Windowed Frequency Estimation}} problem for $\delta = 2 \cdot \delta_s$ and $\varepsilon = \varepsilon_a + \varepsilon_s$.
 \end{corollary}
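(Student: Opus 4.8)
The plan is to obtain Corollary~\ref{cor:RW-HH} as the degenerate instance of Theorem~\ref{thm:PUSCombined} in which there is only a single prefix type, namely the fully specified flow identifier itself. Concretely, I would set $H=1$ in the balls-and-bins model of Definition~\ref{def:Xi}, so that the number of bins collapses to $V = H/\tau = \tau^{-1}$. Under this specialization, each arriving packet drops a ball into one of $V$ bins; landing in the single distinguished bin (probability $1/V = \tau$) triggers a Full update of the one and only prefix, and every other bin triggers a Window update. This exactly reproduces the sampling rule of \HHAWAII{}, where a Full update is performed with probability $\tau$ and a Window update otherwise.

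With $H=1$ the sampling threshold $\NBound$ simplifies to $Z_{1-\frac{\delta_s}{2}}W^{-1}\epsilon_s^{-2}$, which is precisely the hypothesis on $\tau$ stated in the corollary. Hence the condition $\tau \ge \NB$ required by Theorems~\ref{thm:pusmain} and~\ref{thm:PUSCombined} is satisfied, and applying Theorem~\ref{thm:pusmain} in this degenerate case bounds the sampling error directly: the single relevant count $X_i$, rescaled by $V$, estimates $f^W_x$ to within $\epsilon_s W$ with probability at least $1-\delta_s$. It then remains to fold in the error of the underlying window algorithm through Theorem~\ref{thm:PUSCombined}, which combines the sampling and algorithmic errors into the guarantee $\epsilon = \epsilon_a + \epsilon_s$.

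The one point that needs care --- and the reason the corollary reports $\delta = 2\delta_s$ rather than the $\delta_a + 2\delta_s$ of Theorem~\ref{thm:PUSCombined} --- is the confidence contribution of the underlying algorithm. Here I would observe that WCSS, the window algorithm on which \HHAWAII{} is built, is \emph{deterministic}: it solves {\sc {$(\epsilon_a, \delta_a)$ - Windowed Frequency Estimation}} with $\delta_a = 0$. Substituting $\delta_a = 0$ into $\delta \ge \delta_a + 2\delta_s$ collapses the confidence parameter to $\delta = 2\delta_s$ while leaving $\epsilon = \epsilon_a + \epsilon_s$ unchanged, giving exactly the stated constants. Because the argument is a direct substitution into an already-proved theorem, I anticipate no genuine obstacle; the only substantive checks are verifying that the $H=1$ balls-and-bins experiment faithfully models the independent-coin sampling of \HHAWAII{}, and justifying $\delta_a = 0$ via the deterministic worst-case guarantee of WCSS. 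Once these are in place, Theorem~\ref{thm:PUSCombined} yields the claim immediately.
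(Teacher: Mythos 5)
Your proposal is correct and follows essentially the same route as the paper, which obtains Corollary~\ref{cor:RW-HH} by specializing Theorem~\ref{thm:PUSCombined} to the degenerate case $H=1$, $V=\tau^{-1}$ (``always select fully specified items''). You are in fact slightly more careful than the paper: the text never explains why the $\delta_a$ term of Theorem~\ref{thm:PUSCombined} disappears from the corollary's $\delta = 2\delta_s$, and your observation that the underlying WCSS-based structure is deterministic, so that $\delta_a = 0$, supplies exactly the missing justification.
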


\subsection{Coverage Analysis}
\label{anal:randHHH}
We now show that \HHHAWAII{} satisfies the coverage property (Definition~\ref{def:deltaapproxHHH}). That is, $\Pr \left( \widehat {C_{q|P}} \ge C_{q|P} \right) \ge 1-\delta.$
Conditioned frequencies are calculated differently for one and two dimensions and therefore  Section~\ref{subsec:one} shows coverage for one dimension and Section~\ref{subsec:two} for two.

\subsubsection{One Dimension}
\label{subsec:one}
The following Lemma~\ref{lemma:cp}, proved in~\cite{HHHMitzenmacher} shows an expression for ${C_{q\mid P}}$.

\begin{lemma}
    \label{lemma:cp}
    In one dimension: ${C_{q\mid P}} = {f^W_q} - \sum\nolimits_{h \in G(q|P)} {{f^W_h}} .$
    \normalsize
\end{lemma}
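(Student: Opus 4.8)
The plan is to unwind the definition of $C_{q|P}$ and reduce the claim to a purely set-theoretic identity about the prefix tree. First I would observe that $H_{P\cup\{q\}} = H_P \cup H_q$, so the set of fully specified items \emph{newly} covered by adding $q$ is exactly $H_{P\cup\{q\}}\setminus H_P = H_q\setminus H_P$. Using additivity of the window frequency over disjoint sets of fully specified items together with $f^W_q = \sum_{e\in H_q} f^W_e$, this gives
$$ C_{q|P} = \sum_{e\in H_q\setminus H_P} f^W_e = f^W_q - \sum_{e\in H_q\cap H_P} f^W_e. $$
Thus it remains only to show that $\sum_{e\in H_q\cap H_P} f^W_e = \sum_{h\in G(q|P)} f^W_h$.

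The core step is the combinatorial identity $H_q\cap H_P = \bigcup_{h\in G(q|P)} H_h$, with the union being disjoint. I would prove it from the tree structure of one-dimensional prefixes, in which any two prefixes are either comparable (one generalizes the other) or have disjoint sets of descendants. The containment $\supseteq$ is immediate: each $h\in G(q|P)$ lies in $P$ and is generalized by $q$, so $H_h\subseteq H_P$ and $H_h\subseteq H_q$. For $\subseteq$, I would take a fully specified $e$ generalized both by $q$ and by some $p\in P$; in one dimension $p$ and $q$ are comparable, and in the bottom-up HHH computation $P$ consists only of prefixes from strictly lower (more specific) levels, so $P$ never contains an ancestor of $q$ and $p$ must be a descendant of $q$. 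Taking the most general element of $P$ that is a descendant of $q$ and still generalizes $e$ yields an element $h$: any $h'\in P$ strictly between $q$ and $h$ would also generalize $e$ and be more general than $h$, contradicting maximality, so $h\in G(q|P)$ and $e\in H_h$.

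Disjointness follows because any two distinct $h_1,h_2\in G(q|P)$ are incomparable: if one generalized the other it would sit strictly between $q$ and its descendant, contradicting the definition of $G(q|P)$ as the closest descendants; incomparable one-dimensional prefixes have disjoint descendant sets, so $H_{h_1}\cap H_{h_2}=\emptyset$. Summing $f^W$ over the disjoint union and using $f^W_h=\sum_{e\in H_h} f^W_e$ then gives $\sum_{e\in H_q\cap H_P} f^W_e = \sum_{h\in G(q|P)} f^W_h$, which completes the proof.

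The main obstacle is the $\subseteq$ direction of the set identity, and in particular justifying that $P$ contributes no prefix that is an ancestor of $q$: this is precisely where the one-dimensional tree structure and the bottom-up order of the computation are essential (had $P$ contained an ancestor of $q$, we would have $H_q\subseteq H_P$, forcing $C_{q|P}=0$ and breaking the clean telescoping). The same step fails in two dimensions, where a fully specified item can be covered by two incomparable selected prefixes whose greatest lower bound also lies below $q$; this is exactly why the two-dimensional case requires the inclusion--exclusion correction via $glb(\cdot,\cdot)$ rather than the simple subtraction of Lemma~\ref{lemma:cp}.
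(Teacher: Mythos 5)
Your proof is correct. Note that the paper does not actually prove this lemma---it is imported with a citation to the MST paper and no argument is given---so there is no in-paper proof to compare against; your derivation is a legitimate self-contained substitute. The two reductions you perform (first $C_{q|P}=\sum_{e\in H_q\setminus H_P}f^W_e$ via $H_{P\cup\{q\}}\setminus H_P=H_q\setminus H_P$, then the disjoint decomposition $H_q\cap H_P=\bigcup_{h\in G(q|P)}H_h$ using the chain structure of one-dimensional ancestors) are exactly the right ingredients, and your disjointness argument via pairwise incomparability of the elements of $G(q|P)$ is sound. You also correctly surface the hidden hypothesis that $P$ contains no generalization of $q$, without which the identity fails (such a prefix forces $C_{q|P}=0$ while contributing nothing to $G(q|P)$); the paper leaves this entirely implicit. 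One small imprecision: at the moment $C_{q|P}$ is evaluated in the bottom-up pass, $P$ may already contain other prefixes of the \emph{same} depth as $q$, not only strictly more specific ones. This does not damage the argument, because distinct one-dimensional prefixes of equal length are incomparable and hence are neither ancestors of $q$ nor counterexamples to the chain argument, but the justification should say that rather than appeal only to ``strictly lower levels.''
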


We use Lemma~\ref{lemma:cp} to show that the estimations of Algorithm~\ref{alg:Skipper} are conservative.

\begin{lemma}
\label{lemma:sq}
The conditioned frequency estimation of Algorithm~\ref{alg:Skipper} is:
$\widehat{C_{q|P}} = \widehat{f^W_q}^{+}-\sum\nolimits_{h \in G\left( {q|P} \right)} {\widehat{f^W_h}^- } +  2{Z_{1 - \delta }}\sqrt {W V}.$
\end{lemma}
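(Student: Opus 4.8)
The plan is to read $\widehat{C_{q|P}}$ directly off the pseudocode of Algorithm~\ref{alg:Skipper}, since this lemma merely records in closed form the quantity that the \textsc{Output} routine actually computes. Concretely, the estimator is assembled in two lines: Line~\ref{line:cp} sets $\widehat{C_{q|P}} = \widehat{f^W_q}^{+} + \mathit{calcPred}(q,P)$, and Line~\ref{line:accSample} adds the sampling-compensation term $2 Z_{1-\delta}\sqrt{VW}$. So the first step is simply to substitute the value returned by \textsc{calcPred}.

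For the one-dimensional case, \textsc{calcPred} is given by Algorithm~\ref{alg:randHHH}: it initializes $R=0$ and, looping over $h \in G(q|P)$, subtracts $\widehat{f^W_h}^{-}$ from $R$, so it returns $R = -\sum_{h \in G(q|P)} \widehat{f^W_h}^{-}$. Plugging this into the two assignments above yields
\[
\widehat{C_{q|P}} = \widehat{f^W_q}^{+} - \sum_{h \in G(q|P)} \widehat{f^W_h}^{-} + 2 Z_{1-\delta}\sqrt{VW},
\]
which is exactly the claimed expression. The only bookkeeping to check is that the set iterated over in \textsc{calcPred} is precisely $G(q|P)$ (the closest already-selected descendants of $q$), and that the upper/lower-bound estimators $\widehat{f^W_q}^{+}$ and $\widehat{f^W_h}^{-}$ are the ones defined in the notation section via $\widehat{X^W_p}^{\pm} V$.

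The main thing to watch is therefore not any hard inequality but making sure the pointers line up: that Line~\ref{line:cp} uses the \emph{upper} bound $\widehat{f^W_q}^{+}$ for the queried prefix while \textsc{calcPred} subtracts the \emph{lower} bounds $\widehat{f^W_h}^{-}$ of its descendants, and that the additive slack is applied exactly once. This asymmetric choice of bounds is deliberate and is where the genuine mathematical content will eventually live: it is what makes the estimate conservative, i.e.\ $\widehat{C_{q|P}} \ge C_{q|P}$ with high probability. That conservativeness, however, is a separate claim, to be established afterwards using Lemma~\ref{lemma:cp} together with the confidence intervals of Theorem~\ref{thm:pusmain}; the present lemma is purely the closed-form identity, so I expect no real obstacle here beyond a careful transcription of the two algorithms.
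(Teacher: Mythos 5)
Your proposal is correct and matches the paper's own proof essentially verbatim: both read the identity directly off Line~\ref{line:cp} (which sets $\widehat{C_{q|P}} = \widehat{f^W_q}^{+} + \mathit{calcPred}(q,P)$), observe that the one-dimensional \textsc{calcPred} of Algorithm~\ref{alg:randHHH} returns $-\sum_{h \in G(q|P)} \widehat{f^W_h}^{-}$, and note that Line~\ref{line:accSample} contributes the $2Z_{1-\delta}\sqrt{WV}$ term. Your added remark that the conservativeness $\widehat{C_{q|P}} \ge C_{q|P}$ is a separate claim (handled later via Lemma~\ref{lemma:cp} and the Poisson confidence bounds) correctly delineates the scope of this lemma, exactly as the paper does.
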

\begin{proof}
Looking at Line~\ref{line:cp} in Algorithm~\ref{alg:Skipper}, we get that: $\widehat{C_{q|P}} = \widehat{f^W_q}^{+} + calcPred(q,P)$.
That is, we need to verify that the return value $calcPred(q,P)$ in one dimension (Algorithm~\ref{alg:randHHH}) is $\sum\nolimits_{h \in G\left( {q|P} \right)} {\widehat{f^W_h}^- }$.
%This follows naturally from that algorithm.
Finally, the addition of $2{Z_{1 - \delta }}\sqrt {WV}$ is done in line~\ref{line:accSample}.
\end{proof}

\begin{theorem}
    \label{thm:underCP}
$\Pr \left( \widehat {C_{q|P}} \ge C_{q|P} \right) \ge 1-\delta.$
\end{theorem}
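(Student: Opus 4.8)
The plan is to show that the estimator $\widehat{C_{q|P}}$ given by Lemma~\ref{lemma:sq} never falls below the true conditioned frequency $C_{q|P}$ of Lemma~\ref{lemma:cp}, except on a failure event of probability at most $\delta$. Subtracting the two closed forms, the quantity to control is
\[
\widehat{C_{q|P}} - C_{q|P} = \left(\widehat{f^W_q}^{+} - f^W_q\right) + \sum_{h\in G(q|P)}\left(f^W_h - \widehat{f^W_h}^{-}\right) + 2Z_{1-\delta}\sqrt{WV},
\]
so it suffices to prove that, with probability at least $1-\delta$, the two signed error sums together exceed $-2Z_{1-\delta}\sqrt{WV}$.

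First I would peel off the deterministic, one-sided error of the underlying \HHAWAII{}/Space-Saving instance. Since Space-Saving never under-counts, and $\widehat{f^W_q}^{+}$, $\widehat{f^W_h}^{-}$ are its upper and lower bounds on the sampled counts scaled by $V$, we have $\widehat{f^W_q}^{+}\ge X^W_q V$ and $\widehat{f^W_h}^{-}\le X^W_h V$. This reduces the problem to the purely sampling-induced deviations:
\[
\widehat{C_{q|P}} - C_{q|P} \ge \left(X^W_q V - f^W_q\right) + \sum_{h}\left(f^W_h - X^W_h V\right) + 2Z_{1-\delta}\sqrt{WV}.
\]

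The crucial observation, and the step I expect to be the main obstacle, is that the deviation must \emph{not} be summed naively over the $|G(q|P)|$ prefixes (which may be numerous), but instead grouped into two single item-sets. In one dimension the coverage sets $\{H_h : h\in G(q|P)\}$ are pairwise disjoint and all contained in $H_q$, so $\sum_h X^W_h$ is exactly the sample count of the single set $\bigcup_h H_h$, while $X^W_q$ is the sample count of the single set $H_q$. I would therefore invoke the Poisson confidence bound (Lemma~\ref{lemma:poissonConfidence}), together with the Poisson-to-exact conversion (Lemma~\ref{lemma:rare}), \emph{once} for $H_q$ and \emph{once} for $\bigcup_h H_h$. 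Using $f^W_{\bullet}\le W$ and the identity $Z_{1-\delta}\sqrt{WV}=\varepsilon_s W$ extracted in Theorem~\ref{thm:pusmain}, each application bounds the corresponding one-sided deviation by $Z_{1-\delta}\sqrt{WV}$ at the prescribed failure level; the hypothesis $\tau\ge\psi$ is precisely what guarantees this slack.

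Finally I would combine the two confidence events by a union bound, keeping the total failure probability at most $\delta$, and add up on the good event: the two contributions of $-Z_{1-\delta}\sqrt{WV}$ are cancelled exactly by the $+2Z_{1-\delta}\sqrt{WV}$ correction added in Line~\ref{line:accSample}, giving $\widehat{C_{q|P}}-C_{q|P}\ge 0$ and hence $\Pr(\widehat{C_{q|P}}\ge C_{q|P})\ge 1-\delta$. The delicate bookkeeping is making the factor of two in the slack term match the two grouped confidence applications (one per item-set) and the factor-two inflation of Lemma~\ref{lemma:rare}; this is exactly why the algorithm adds $2Z_{1-\delta}\sqrt{WV}$ rather than a single copy.
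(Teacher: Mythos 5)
Your proposal follows essentially the same route as the paper's proof: reduce to the purely sampling-induced deviation (using the one-sided bounds $\widehat{f^W_q}^{+}\ge f^W_q$ and the conservative lower bounds on the subtracted terms), group the randomness into exactly two aggregate quantities rather than one per element of $G(q|P)$, apply Lemma~\ref{lemma:poissonConfidence} once to each, convert via Lemma~\ref{lemma:rare}, and union-bound so that the two one-sided deviations are absorbed by the $2Z\sqrt{WV}$ slack added in Line~\ref{line:accSample}. The paper phrases the two groups as the packet sets $K^{+}$ and $K^{-}$ that respectively increase and decrease $\widehat{C_{q|P}}$, which is the same device as your $H_q$ versus $\bigcup_{h}H_h$ grouping, so no substantive difference remains.
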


\begin{proof}
from Lemma~\ref{lemma:sq}, we get \small$\widehat {{C_{q|P}}} = \widehat {f^W_q}^ +  - \sum\limits_{h \in G\left( {q|P} \right)} {\widehat {f^W_h}^ - + 2{Z_{1-\frac{\delta }{8}}}\sqrt {WV} }$.\normalsize It is enough to show that the randomness is bounded by  $2{Z_{1-\frac{\delta }{8}}}\sqrt {WV}$ with probability~$1-\delta$ as ${\widehat {{f^W_p}}^ + } \ge {f^W_p}$ and $f^W_h \le \widehat{f^W_h}^-$.
    We denote by $K$ the set of packets that affect the calculation of $\widehat {{C_{q|P}}}$. We split $K$ into two: $K^{+}$ contains packets that increase the value of $\widehat {{C_{q|P}}}$ and $K^{-}$ contains these that decrease it.
    We use $K^{+}$ to estimate the sample error in $\widehat{ f^W_q}$ and $K^{-}$ for estimating the error in $\sum\limits_{h \in G\left( {q|P} \right)} {\widehat {f^W_h}^ -}$.

We denote by $Y^{K^+}$ the number of balls in the positive sum and use Lemma~\ref{lemma:poissonConfidence}.
$C_{q|p}$ is non-negative. Thus \footnotesize$E\left(Y^{K^-}\right)\le \frac{W}{V}$\normalsize and \footnotesize
    $\Pr \left( \left| {Y^{K^ +}  - E\left( {Y^{K^ +} } \right)} \right| \ge {Z_{1-\frac{\delta }{8}}}\sqrt \frac{W}{V} \right) \le \frac{\delta }{4}.$\normalsize
    Similarly, we use Lemma~\ref{lemma:poissonConfidence} to bound the error of $Y^{K^-}$.
    $\Pr \left( {\left| Y^{K^-}  - E\left( {{Y^{K^-} }} \right) \right| \ge {Z_{1-\frac{\delta }{8}}}\sqrt  \frac{W}{V} } \right) \le \frac{\delta }{4}.$ 
    $Y^{K^+}$ and $Y^{K^-}$ are monotonic with the number of balls. We apply Lemma~\ref{lemma:rare} and use the Union bound to conclude:
    \footnotesize
$\Pr \left( {\widehat {{C_{q|P}}} \ge {C_{q|P}}} \right)\le
 2\Pr \left( {H\left( {Y^{K^-}  + Y^{K^+} } \right)   \ge VE\left( {Y^{K^-}  + Y^{K^+} } \right)\\+ 2{Z_{1 - \frac{\delta }{8}}}\sqrt {NV} } \right)
\le 1-2\frac{\delta }{2} = 1-\delta.\qedhere
$
\end{proof}
\normalsize

\normalsize
\begin{theorem}
    Algorithm~\ref{alg:Skipper} solves {\sc$(\delta, \varepsilon, \theta)$ - Approximate windowed HHH} for  $\tau \ge \NB$, $\delta = \delta_a + 2\delta_s$, $\varepsilon = \varepsilon_s + \varepsilon_a$.
\end{theorem}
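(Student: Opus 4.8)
The plan is to show that Algorithm~\ref{alg:Skipper} meets both defining requirements of Definition~\ref{def:deltaapproxHHH}---\emph{Accuracy} and \emph{Coverage}---since establishing each separately immediately yields the claim. Both properties have essentially been prepared by the preceding lemmas, so the remaining work is bookkeeping: assembling them under a consistent error/confidence budget so that the stated values $\delta = \delta_a + 2\delta_s$ and $\varepsilon = \varepsilon_s + \varepsilon_a$ fall out.

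For Accuracy, I would invoke Corollary~\ref{cor:accuracy} directly: with $\tau \ge \NB$, every prefix $p$ reported in the output set $P$ satisfies $\Pr(|f^W_p - \widehat{f^W_p}| \le \varepsilon W) \ge 1-\delta$ for $\delta = \delta_a + 2\delta_s$ and $\varepsilon = \varepsilon_a + \varepsilon_s$. This is exactly the Accuracy clause. The fully specified case ($H=1$) is covered as the degenerate instance via Corollary~\ref{cor:RW-HH}, so \HHAWAII{} inherits the guarantee as a special case.

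For Coverage, the key observation is the algorithm's decision rule: a prefix $q$ is added to $P$ precisely when its estimated conditioned frequency meets the threshold $\theta W$, so $q \notin P$ forces $\widehat{C_{q|P}} < \theta W$. I would then chain this with the conservative-estimation guarantee of Theorem~\ref{thm:underCP}, namely $\Pr(\widehat{C_{q|P}} \ge C_{q|P}) \ge 1-\delta$: with probability at least $1-\delta$ we obtain $C_{q|P} \le \widehat{C_{q|P}} < \theta W$, which is precisely the Coverage clause. For two-dimensional hierarchies the same chaining applies, substituting the inclusion--exclusion coverage bound of Section~\ref{subsec:two} for Theorem~\ref{thm:underCP}.

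The main point---rather than a deep obstacle---is matching the confidence and error budgets across the two sub-arguments. The coverage step must use the confidence level at which the $\sqrt{WV}$ slack term added in Line~\ref{line:accSample} of Algorithm~\ref{alg:Skipper} was calibrated, and the accuracy step must retain the extra $\delta_s$ introduced by the Poisson-to-balls-and-bins conversion of Lemma~\ref{lemma:rare} (already absorbed into Corollary~\ref{cor:accuracy}). Once these align with $\delta = \delta_a + 2\delta_s$ and $\varepsilon = \varepsilon_s + \varepsilon_a$, the theorem follows by simply conjoining Accuracy and Coverage.
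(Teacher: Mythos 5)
Your proposal is correct and follows essentially the same route as the paper: Accuracy is obtained directly from Corollary~\ref{cor:accuracy}, and Coverage follows by noting that $q \notin P$ forces $\widehat{C_{q|P}} < \theta W$ and chaining this with the conservativeness guarantee $\Pr(\widehat{C_{q|P}} \ge C_{q|P}) \ge 1-\delta$ of Theorem~\ref{thm:underCP}. The only cosmetic difference is that you also fold in the two-dimensional case and the $H=1$ degenerate case, which the paper defers to Corollaries~\ref{cor:coverage} and~\ref{cor:RW-HH} respectively.
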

\begin{proof}
We need to prove that Algorithm~\ref{alg:Skipper} satisfies both accuracy and coverage. Corollary~\ref{cor:accuracy} shows accuracy, while Theorem~\ref{thm:underCP} says that: $\Pr \left(\widehat{C_{q|P}}\ge{C_{q|P}}\right)\ge 1-\delta$.

Consider a prefix $q$ such that $q\notin P$, where $P$ is the set of HHH.  We know that $\widehat{C_{q|P}} <\theta W$ because otherwise $q$ would have been an HHH prefix. Thus, with probability $1-\delta$, we get: $C_{q|P} < \widehat{C_{q|P}}<\theta W$, which implies that  $\Pr \left( {{C_{q|P}} < \theta W} \right) \ge 1 - \delta$ and hence Algorithm~\ref{alg:Skipper} satisfies coverage as well.
\end{proof}

\subsubsection{Two Dimensions}
\label{subsec:two}
Next, we show that \HHHAWAII{} is correct for two dimensions. To do so, we require the notion of $G(q|P)$ which contains the closest prefixes to $q$ from the set $P$. Definition~\ref{def:bestG} formalizes this. \begin{definition}[Best generalization]
    \label{def:bestG}
    Define $G(q|P)$  as the set $\left\{ {p:p \in P,p \prec q,\neg \exists p' \in P:q \prec p' \prec p} \right\}$.
    Intuitively, $G(q|P)$ is the set of prefixes that are best generalized by $q$.
    That is, $q$ does not generalize any prefix that generalizes one of the prefixes in $G(q|P)$.
\end{definition}
We start with Lemma~\ref{lemma:cp2d} that quantifies the conditioned frequency in two dimensions. The proof appears in~\cite{HHHMitzenmacher}.
%Conditioned frequency is calculated differently for two dimensions, as
%we use inclusion/exclusion principles and we need to show that these calculations are sound too.

\begin{lemma}
    \label{lemma:cp2d}
In two dimensions,  ${C_{q|P}} = {f^W_q} - \sum\limits_{h \in G\left( {q|P} \right)} {{f^W_h}}  + \sum\limits_{h,h' \in G\left( {q|P} \right)} {{f^W_{{\rm{glb}}\left( {h,h'} \right)}}^{}} $ 
    \normalsize
\end{lemma}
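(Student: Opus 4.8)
The plan is to start from the definition of the conditioned frequency and reduce the claim to an inclusion–exclusion identity over the fully specified items. Recall that $C_{q|P} = \sum_{e \in H_{P \cup \{q\}} \setminus H_P} f^W_e$. Since $H_{P\cup\{q\}} = H_P \cup H_q$, we have $H_{P\cup\{q\}}\setminus H_P = H_q \setminus H_P$, and using $f^W_q = \sum_{e\in H_q} f^W_e$ this gives $C_{q|P} = f^W_q - \mu$, where $\mu \triangleq \sum_{e\in H_q\cap H_P} f^W_e$ is the total window frequency of the items under $q$ that are \emph{already} covered by $P$. The whole statement therefore reduces to showing that $\mu = \sum_{h\in G(q|P)} f^W_h - \sum_{h,h'\in G(q|P)} f^W_{glb(h,h')}$.

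The first step is to identify the covered region as $H_q\cap H_P = \bigcup_{h\in G(q|P)} H_h$. The inclusion $\supseteq$ is immediate, since every $h\in G(q|P)$ is a descendant of $q$ lying in $P$. For $\subseteq$, I would take a fully specified $e$ with $e\in H_q$ and $e\in H_p$ for some $p\in P$; then $e$ lies under $glb(q,p)$, which is a descendant of $q$, and by maximality of $G(q|P)$ one can route $e$ to a closest descendant $h\in G(q|P)$ with $e\in H_h$. This mirrors the one-dimensional argument behind Lemma~\ref{lemma:cp}, with the important difference that there the subtrees $\{H_h\}$ are pairwise disjoint, so one immediately gets $\mu=\sum_h f^W_h$ and hence the clean $C_{q|P}=f^W_q-\sum_h f^W_h$.

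The second step, which I expect to be the main obstacle, is that in two dimensions the subtrees $\{H_h\}_{h\in G(q|P)}$ may overlap, so I would apply inclusion–exclusion to $\mu = \sum_{e\in\bigcup_h H_h} f^W_e$. By Definition~\ref{def:glb} each pairwise meet is $H_h\cap H_{h'} = H_{glb(h,h')}$, so the first correction contributes $-\sum_{h,h'} f^W_{glb(h,h')}$ to $\mu$, which upon substitution into $C_{q|P}=f^W_q-\mu$ becomes the $+\sum_{h,h'} f^W_{glb(h,h')}$ of the claim. The delicate part is to argue that the expansion \emph{truncates} after the pairwise term, i.e. that the regions covered by three or more prefixes contribute with a net coefficient of exactly one. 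Here I would exploit the product structure of the two-dimensional hierarchy: in each coordinate the prefix sets form a laminar (chain-like) family, so triple and higher meets collapse onto already-counted pairwise meets. This truncation is precisely what the guard in Algorithm~\ref{alg:randHHH2D} encodes (adding back $glb(h,h')$ only when no third $h_3\in G(q|P)$ generalizes it), and verifying that this accounting sets the coefficient of every covered item to one is the technical heart of the proof; it is exactly the point at which the two-dimensional case departs from the simpler formula of Lemma~\ref{lemma:cp}.
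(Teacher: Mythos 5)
First, note that the paper does not actually prove Lemma~\ref{lemma:cp2d}: it is imported verbatim, with the statement ``the proof appears in~\cite{HHHMitzenmacher}.'' So there is no in-paper argument to match yours against; what follows is an assessment of your plan on its own terms. Your first step (reducing to $C_{q|P}=f^W_q-\mu$ with $\mu=\sum_{e\in H_q\cap H_P}f^W_e$) is correct and is the natural opening move.

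The genuine gap is in your second step, the identity $H_q\cap H_P=\bigcup_{h\in G(q|P)}H_h$. Your ``route $e$ to a closest descendant $h\in G(q|P)$'' argument works in one dimension, where any $p\in P$ with $H_p\cap H_q\neq\emptyset$ is comparable to $q$. In two dimensions this fails: take $q=(\langle 1.*\rangle,\langle *\rangle)$ and $P=\{(\langle *\rangle,\langle 2.*\rangle)\}$. The unique member of $P$ is incomparable to $q$, so $G(q|P)=\emptyset$, yet $H_q\cap H_P=H_{(\langle 1.*\rangle,\langle 2.*\rangle)}\neq\emptyset$; there is nothing to route $e$ to, and the claimed formula returns $f^W_q$ while the true conditioned frequency is $f^W_q-f^W_{(\langle 1.*\rangle,\langle 2.*\rangle)}$. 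So the lemma as stated requires (and silently assumes) that every $p\in P$ whose item set meets $H_q$ is a descendant of $q$; your proof must either add that hypothesis or show it is maintained by the algorithm's construction of $P$, and neither is automatic. A second, smaller gap is the truncation step, which you correctly identify as the crux but do not carry out: with three pairwise-overlapping members $h_1,h_2,h_3$ of $G(q|P)$ (first coordinates forming a chain one way, second coordinates the other way), the triple meet coincides with $glb(h_1,h_3)$, which is itself generalized by $h_2$; the unguarded sum $\sum_{h,h'}f^W_{glb(h,h')}$ in the lemma's statement then over-counts, and the correct accounting is exactly the guarded sum of Algorithm~\ref{alg:randHHH2D} (add back $glb(h,h')$ only when no third member of $G(q|P)$ generalizes it). Your plan gestures at the guard but does not reconcile it with the unguarded formula you are asked to prove, so the coefficient-equals-one claim remains unverified precisely where the two-dimensional case is hard.
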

Next, Lemma~\ref{lemma:algCF2D} formalizes the expression Algorithm~\ref{alg:Skipper} uses to calculate conditioned frequencies in two dimensions.
\begin{lemma}
    \label{lemma:algCF2D}
    In two dimensions, Algorithm~\ref{alg:Skipper} calculates conditioned frequency in the following manner: \footnotesize
    \[\widehat {{C_{q|P}}} = \widehat {f^W_q}^ +  - \sum\limits_{h \in G\left( {q|P} \right)} {\widehat {f^W_h}^ - }  + \sum\limits_{h,h' \in G\left( {q|P} \right)} {\widehat {f^W_{{\rm{glb}}\left( {h,h'} \right)}}^ + }  + 2{Z_{1 - \frac{\delta }{8}}}\sqrt {WV} .\]
    \normalsize
%    \textbf{TODO: what do you mean by these $\hat {f^W_q}^ +$? I guess $\widehat{f^W_q}^+$, but please verify}
\end{lemma}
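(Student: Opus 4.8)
The plan is to mirror the one-dimensional argument of Lemma~\ref{lemma:sq} and obtain the claimed closed form by unrolling the pseudocode, matching each accumulated quantity against a term of the statement. First I would invoke Line~\ref{line:cp} of Algorithm~\ref{alg:Skipper}, which sets $\widehat{C_{q|P}} = \widehat{f^W_q}^{+} + \mathit{calcPred}(q,P)$; this immediately supplies the leading $\widehat{f^W_q}^{+}$. It then suffices to evaluate $\mathit{calcPred}(q,P)$ for the two-dimensional subroutine (Algorithm~\ref{alg:randHHH2D}) and to add the sampling correction $2Z_{1-\frac{\delta}{8}}\sqrt{WV}$ contributed by Line~\ref{line:accSample}.

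Next I would trace the two loops of Algorithm~\ref{alg:randHHH2D}. The first loop (Line~\ref{alg:second}) ranges over $G(q|P)$ and subtracts $\widehat{f^W_h}^{-}$ for each $h$, producing the term $-\sum_{h\in G(q|P)}\widehat{f^W_h}^{-}$ verbatim. The second loop (Line~\ref{alg:third}) ranges over unordered pairs $h,h'\in G(q|P)$, forms their greatest lower bound $q'=\mathrm{glb}(h,h')$ in the sense of Definition~\ref{def:glb}, and adds $\widehat{f^W_{q'}}^{+}$. This is exactly the inclusion-exclusion add-back that corrects the over-subtraction of the first loop, paralleling the true identity of Lemma~\ref{lemma:cp2d}, so that assembling the three pieces gives the stated expression.

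I expect the delicate step to be accounting for the guard on Line~\ref{alg:third}, which adds back $\widehat{f^W_{q'}}^{+}$ only when no third prefix $h_3\in G(q|P)$ satisfies $q'\preceq h_3$. I would argue that this guard is precisely what reconciles the loop's output with the sum written in the statement: it implements the inclusion-exclusion of Definition~\ref{def:glb} by charging a pairwise common descendant only when that descendant is not already subsumed by another selected prefix, so that each distinct $\mathrm{glb}$ contribution is counted once rather than once per generating pair. Making this correspondence precise---tracking which pairs survive the guard and why their glb terms coincide with those appearing in $\sum_{h,h'\in G(q|P)}\widehat{f^W_{\mathrm{glb}(h,h')}}^{+}$---uses the multiple-parent structure of the two-dimensional lattice and is where the argument genuinely departs from the one-dimensional Lemma~\ref{lemma:sq}; the surrounding term-by-term matching is otherwise routine bookkeeping.
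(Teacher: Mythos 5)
Your proposal is correct and follows essentially the same route as the paper: the paper's own proof is a three-sentence read-off attributing the first term to Line~\ref{line:cp}, the last term to Line~\ref{line:accSample}, and the middle two terms to Algorithm~\ref{alg:randHHH2D}. You are in fact more careful than the paper, which never discusses the guard on Line~\ref{alg:third} or its reconciliation with the unguarded pairwise sum written in the lemma statement.
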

\begin{proof}
The proof follows from Algorithm~\ref{alg:Skipper}.
Line~\ref{line:cp} adds: $\widehat{f_q^{+}}$ while Line~\ref{line:accSample} is responsible for the last element ($2{Z_{1 - \frac{\delta }{8}}}\sqrt {WV}$).
The rest stems from the calcPredecessors method in Algorithm~\ref{alg:randHHH2D}.
\end{proof}

Theorem~\ref{thm:conservativeCP} lays the groundwork for coverage.
\begin{theorem}
    \label{thm:conservativeCP}
    $\Pr \left( {\widehat {{C_{q|P}}} \ge {C_{q|P}}} \right) \ge 1 - \delta .$
\end{theorem}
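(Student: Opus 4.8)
The plan is to mirror the one-dimensional argument of Theorem~\ref{thm:underCP}, but driving it from the inclusion–exclusion expressions of Lemmas~\ref{lemma:cp2d} and~\ref{lemma:algCF2D}. First I would apply the deterministic algorithmic bounds returned by the underlying window algorithm, namely $\widehat{f^W_q}^{+}\ge X^W_q V$, $\widehat{f^W_h}^{-}\le X^W_h V$, and $\widehat{f^W_{glb(h,h')}}^{+}\ge X^W_{glb(h,h')} V$, where each $X^W_p$ is the number of sampled occurrences of prefix $p$. Substituting these into $\widehat{C_{q|P}}$ and $C_{q|P}$ term-by-term — the $+$ bound on every positively-signed term and the $-$ bound on every negatively-signed term, so that the algorithmic error always pushes $\widehat{C_{q|P}}$ upward — reduces the claim to showing that the remaining \emph{sampling} error of the inclusion–exclusion count is at most the slack term $2Z_{1-\frac{\delta}{8}}\sqrt{WV}$ with probability at least $1-\delta$.

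Next I would isolate that sampling error. Since each prefix is sampled with probability $1/V$, we have $E(X^W_p)\,V = f^W_p$ for every prefix, so the sampled count $\left(X^W_q - \sum_h X^W_h + \sum_{h,h'} X^W_{glb(h,h')}\right)V$ is an unbiased estimate of $C_{q|P}$. Following the 1D proof, I would partition the contributing packets into a positive set $K^{+}$ — the occurrences counted by the $f^W_q$ term together with the added-back $glb$ terms — and a negative set $K^{-}$ — the occurrences counted by the subtracted $\sum_h f^W_h$ terms. Passing to the Poisson approximation, let $Y^{K^{+}}$ and $Y^{K^{-}}$ be independent Poisson variables with the corresponding means; applying Lemma~\ref{lemma:poissonConfidence} to each with parameter $\delta/8$ and the bound $E(Y^{K^{\pm}})\le W/V$, then scaling by $V$, makes each deviation at most $Z_{1-\frac{\delta}{8}}\sqrt{WV}$. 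A union bound over the two events, followed by Lemma~\ref{lemma:rare} to transfer from the Poisson world back to the exact balls-and-bins model (which doubles the failure probability), would then yield the stated $1-\delta$.

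The main obstacle, relative to one dimension, is the bookkeeping forced by the inclusion–exclusion term. In 2D the prefixes of $G(q|P)$ need not generalize disjoint item sets, so neither $\sum_h f^W_h$ nor $f^W_q + \sum_{h,h'} f^W_{glb(h,h')}$ is trivially at most $W$, and I must justify the expectation bound $E(Y^{K^{\pm}})\le W/V$ with care. I expect to lean on the non-negativity of $C_{q|P}$ (which forces the negative balls to be dominated by the positive balls) together with the greatest-lower-bound condition imposed in Algorithm~\ref{alg:randHHH2D} — the requirement that no third prefix of $G(q|P)$ sit below $glb(h,h')$ — which guarantees that each item is counted with net multiplicity in $\{0,1\}$ and prevents the pairwise truncation of inclusion–exclusion from overcounting. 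Making this multiplicity argument precise, so that the positive and negative ball counts are each genuinely bounded by $W$, is the delicate step; once it is in place, the Poisson tail estimates and the union bound carry over essentially verbatim from Theorem~\ref{thm:underCP}.
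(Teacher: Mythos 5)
Your proposal follows essentially the same route as the paper's proof: split the packets affecting $C_{q|P}$ into a positive set $K^{+}$ and a negative set $K^{-}$, apply Lemma~\ref{lemma:poissonConfidence} with parameter $\delta/8$ to each Poisson-approximated count to get a deviation of $Z_{1-\frac{\delta}{8}}\sqrt{WV}$ after scaling by $V$, and combine via a union bound together with Lemma~\ref{lemma:rare} to obtain the stated $1-\delta$. The one place you go beyond the paper is in explicitly flagging the need to justify $E\left(Y^{K^{\pm}}\right)\le W/V$ despite the inclusion--exclusion overlaps in two dimensions --- a point the paper's proof passes over silently --- and your plan to derive it from the non-negativity of $C_{q|P}$ and the $\mathrm{glb}$ condition of Algorithm~\ref{alg:randHHH2D} is a sensible way to close that small gap.
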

\begin{proof}
    Observe Lemma~\ref{lemma:cp2d} and notice that if there is no sampling error: \small$\widehat {f^W_q}^ +  - \sum\limits_{h \in G\left( {q|P} \right)} {\widehat {f^W_h}^ - }  + \sum\limits_{h,h' \in G\left( {q|P} \right)} \widehat {f^W}_{{\rm{glb}}\left( {h,h'} \right)}^{+}$\\ \normalsize
    is a conservative estimate. Thus, we now show that this error is less than $2{Z_{1 - \frac{\delta }{8}}}\sqrt {WV}$ with probability $1-\delta$.

    We denote by $K$ the packets that affect $C_{q|P}$ and since the expression of $\widehat{C_{q|P}}$ is not monotonic. As before, we split in two: $K^{+}$ are packets that increase  $\widehat{C_{q|P}}$, while and  $K^{-}$ decrease it.
    Similarly, we denote by $\{Y_i^K\}$  the number of packets from $K$ in bin $i$ of the Poisson model.
    We also denote the random variable $Y^{K^+}$ that counts \emph{how many} balls from $K$ had increased $\widehat{C_{q|P}}$.
      Lemma~\ref{lemma:poissonConfidence} binds  $Y^{K^+}$ in the following manner:
     \small
$\Pr \left( \left| {Y^{K^+}  - E\left( {Y^{K^+}} \right)} \right| \ge {Z_{1-\frac{\delta }{8}}}\sqrt \frac{W}{V} \right) \le \frac{\delta }{4}.$ \\
    Similarly, we denote by  $Y^{K^-}$ the number of packets from $K$ with negative impact on  $\widehat{C_{q|P}}$. Using Lemma~\ref{lemma:poissonConfidence} results in:
    \small
$\Pr \left( \left| {Y^{K^-}  - E\left( {Y^{K^-} } \right)} \right| \ge {Z_{1-\frac{\delta }{8}}}\sqrt \frac{W}{V } \right) \le \frac{\delta }{4}.$
\normalsize
$Y^{K^+}$ and $Y^{K^-}$ are monotonic with the number of balls.
Thus, we apply Lemma~\ref{lemma:rare} and conclude that:
\footnotesize
$
\Pr \left( {\widehat {{C_{q|P}}} \ge {C_{q|P}}} \right) \le \\
2\Pr \left( {V\left( {Y^{K^-}  + Y^{K^+} } \right) \ge \left( {VE\left( {Y^{K^-}  + Y^{K^+}} \right) + 2{Z_{1 - \frac{\delta }{8}}}\sqrt {WV} } \right)} \right)  \\
\le 1 - 2\frac{\delta }{2} = 1 - \delta$ ,
\normalsize completing the proof.
\end{proof}

\subsubsection{Putting It All Together}
We can now prove the coverage property for one and two dimensions.
\begin{corollary}
    \label{cor:coverage}
    If $\tau>\NB$ then \HHHAWAII{} satisfies coverage.
    That is, let $P$ be the HHH set of \HHHAWAII{}; given a prefix $q \notin P$ we have  $\Pr \left(C_{q|P}<\theta W\right) >1-\delta.$
\end{corollary}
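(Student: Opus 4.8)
The plan is to obtain coverage directly from the two conservative-estimation results already in hand, namely Theorem~\ref{thm:underCP} for one dimension and Theorem~\ref{thm:conservativeCP} for two dimensions, combined with the admission rule of Algorithm~\ref{alg:Skipper}. Both theorems assert that $\Pr\left(\widehat{C_{q|P}} \ge C_{q|P}\right) \ge 1-\delta$ in the regime $\tau > \NB$; that is, the estimate the algorithm computes is an over-estimate of the true conditioned frequency with probability at least $1-\delta$. Coverage is then a short logical consequence of this guarantee together with the reason $q$ was left out of $P$ in the first place.

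Concretely, I would fix a prefix $q \notin P$, where $P$ is the set returned by \HHHAWAII{}. By inspection of the {\sc Output} routine of Algorithm~\ref{alg:Skipper}, a prefix is inserted into the running set exactly when its estimated conditioned frequency reaches the threshold; hence, since $q$ was \emph{not} inserted, we must have $\widehat{C_{q|P}} < \theta W$ at the moment $q$ was evaluated. Invoking Theorem~\ref{thm:underCP} (one dimension) or Theorem~\ref{thm:conservativeCP} (two dimensions), with probability at least $1-\delta$ the estimate dominates the truth, i.e.\ $C_{q|P} \le \widehat{C_{q|P}}$. Chaining the two inequalities gives $C_{q|P} \le \widehat{C_{q|P}} < \theta W$ on this high-probability event, so $\Pr\left(C_{q|P} < \theta W\right) \ge 1-\delta$, which is precisely the coverage condition of Definition~\ref{def:deltaapproxHHH}.

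The one point that requires care --- and what I expect to be the only real obstacle --- is justifying that the quantity $\widehat{C_{q|P}}$ the algorithm tests against the threshold when it processes $q$ coincides with the conditioned-frequency estimate relative to the \emph{final} set $P$. This is where the level-by-level ordering of the {\sc Output} loop matters: the conditioned frequency $C_{q|P}$ depends on $P$ only through $G(q|P)$, the closest descendants of $q$ in $P$, and every such descendant has strictly smaller depth than $q$ and is therefore decided in an earlier iteration. Consequently $G(q|P)$ is already fixed by the time $q$ is examined, the tested estimate is exactly $\widehat{C_{q|P}}$ for the returned $P$, and no later insertion can alter it. Once this is observed, the remaining argument is the two-line chaining above, and the corollary follows uniformly for both the one- and two-dimensional cases by selecting the appropriate conservative-estimation theorem.
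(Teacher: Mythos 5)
Your proposal is correct and follows essentially the same route as the paper: both invoke Theorem~\ref{thm:underCP} (one dimension) and Theorem~\ref{thm:conservativeCP} (two dimensions) to get $\Pr\left(C_{q|P} \le \widehat{C_{q|P}}\right) \ge 1-\delta$, observe that $q \notin P$ forces $\widehat{C_{q|P}} < \theta W$, and chain the inequalities. Your extra remark that the level-by-level ordering of the {\sc Output} loop fixes $G(q|P)$ before $q$ is examined is a valid point the paper leaves implicit, but it does not change the argument.
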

\begin{proof}
    Theorem~\ref{thm:underCP} shows coverage in one dimension and
    Theorem~\ref{thm:conservativeCP} in two. These theorems guarantee that: $\Pr \left( C_{q|P}<\widehat{C_{q|P}}\right) > 1-\delta$.
    Let $q \notin P$, which means that $\widehat{C_{q|P}}<\theta W$. However, with probability $1-\delta$, $C_{q|P}<\widehat{C_{q|P}}<\theta W$ and therefore $C_{q|P} <\theta W$ as well.
\end{proof}

\subsection{\HHAWAII{} and \HHHAWAII{} Analysis}
\label{sec:RHHH-prop}
Theorem~\ref{thm:correctness} is our main result for \HHHAWAII{} and shows that \HHHAWAII{} is correct. Its proof follows from Corollary~\ref{cor:accuracy} and Corollary~\ref{cor:coverage} which imply that accuracy and coverage are both satisfied.
Note that, \HHHAWAII{} is correct when $\tau>\NB \triangleq \NBound$. That is, larger windows ($W$), or larger $\epsilon_s$, allow for more aggressive sampling.

We finish with the following simple theorems that show the space and update complexity of \HHAWAII{} and \HHHAWAII{}. % admit the same update and query complexity as prior works, while improving the update time.
%The space and query complexity of our algorithms is the same as prior art.
\begin{theorem}
    \label{thm:O1}
    \HHAWAII{} and \HHHAWAII{} perform updates in constant time. %and answer heavy hitter queries in $O(H/\eps)$ time.
\end{theorem}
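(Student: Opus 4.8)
The plan is to verify that every branch executed by \HHAWAII{}'s \textsc{Update} (Algorithm~\ref{alg:hhawaii}) and by \HHHAWAII{}'s \textsc{Update} (Algorithm~\ref{alg:Skipper}) performs only a constant number of elementary operations. First I would pin down the data structures so that each primitive is $O(1)$: a Stream-Summary implementation of the Space Saving instance $\SSSInstance$ supports \textsc{add} and \textsc{query} in $O(1)$ (the minimum-valued counter is always at the front); the queue-of-queues $\queueOfOverflows$ supports \textsc{push}, \textsc{pop}, and \textsc{append} in $O(1)$; and $\sumOfOverflows$ is a hash table with $O(1)$ \textsc{contains}, increment, decrement, and \textsc{remove}. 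With these in place, the only two steps that are not \emph{obviously} constant-time are (i) the flush of $\SSSInstance$ on a frame boundary (line~\ref{line:flush}), and (ii) the removal of the entire oldest queue of $\queueOfOverflows$ on a block boundary. The rest of the proof shows both are handled in $O(1)$ per packet.

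The main obstacle is step (ii): showing that the de-amortized draining in lines~\ref{line:empty-tail}--\ref{line:remove-from-tail} keeps up. The key combinatorial claim I would establish is that at most $\blockSize$ overflows are recorded during any single block. Indeed, a block spans exactly $\blockSize$ packets (since $\frameOffset$ advances once per \textsc{WindowUpdate}, i.e. once per packet, and a new block begins whenever $\frameOffset \bmod \blockSize = 0$), at most one \textsc{FullUpdate} occurs per packet, and each \textsc{FullUpdate} raises a single counter of $\SSSInstance$ by one and therefore triggers at most one overflow, i.e. at most one \textsc{push} into $\queueOfOverflows.\text{head}$ (line~\ref{line:push}). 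Hence the queue created for any block holds at most $\blockSize$ identifiers. Since that queue then becomes the tail, is never appended to again, and is drained at the rate of one identifier per packet over the following $\blockSize$ packets before it is popped (line~\ref{line:pop}), it is guaranteed empty by the time it is removed, and every per-packet call performs at most one decrement of $\sumOfOverflows$. This yields $O(1)$ worst-case maintenance of $\sumOfOverflows$.

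For step (i), I would first note that a naive flush costs $O(\numBlocks)=O(1/\eps)$, which is already $O(1)$ \emph{amortized}: a flush happens only once per $\window$ packets and the assumption $\window\eps\gg 1$ gives $\numBlocks/\window = O(1/(\window\eps))\ll 1$. To upgrade this to a worst-case $O(1)$ bound I would replace the physical reset by a lazy one: tag each counter of $\SSSInstance$ with the index of the frame in which it was last written and keep a single global frame index; the flush then merely increments this global index in $O(1)$, while any counter whose tag is stale is interpreted as empty and re-initialized on its first access within the new frame. This preserves every estimate while making line~\ref{line:flush} constant-time.

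Finally, for \HHHAWAII{}, its \textsc{Update} consists of one call to \textsc{RandomPrefix}, which masks the fully specified item to one of the $H$ prefix patterns chosen uniformly and costs $O(1)$, followed by exactly one \HHAWAII{} update, which is $O(1)$ by the argument above; hence \HHHAWAII{} updates in constant time as well. The \textsc{Output}/query routines are deliberately excluded, as the theorem concerns updates only. The substantive part of the argument is thus the overflow-counting bound of step (ii) matched against the block length; the flush is a comparatively routine technicality.
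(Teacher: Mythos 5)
Your proposal is correct and follows the same decomposition as the paper's proof: each packet triggers either a Window update or a Full update, and each is $O(1)$, with \HHHAWAII{} adding only an $O(1)$ random-prefix selection. The paper's own proof is essentially a two-sentence assertion of this, whereas you additionally verify the two steps that genuinely need an argument --- that the de-amortized draining of the oldest queue keeps pace (at most $\blockSize$ overflows enqueued per block versus one dequeue per packet while that queue is the tail), and that the frame-boundary flush of $\SSSInstance$ can be made worst-case $O(1)$ via a lazy, tag-based reset --- so your write-up is strictly more complete than the paper's.
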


\begin{proof}
For \HHAWAII{}, each update results in a Window or Full update, both are done in constant time. \HHHAWAII{}, also selects a random prefix in constant time. \qedhere
\end{proof}

\begin{theorem}
    \label{thm:space}
    \HHHAWAII{} uses $O\left(\frac{H}{\varepsilon_a}\right)$ table entries.
\end{theorem}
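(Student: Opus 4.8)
The plan is to reduce the space bound for \HHHAWAII{} to a space bound for the single underlying \HHAWAII{} instance that it maintains. Recall from Algorithm~\ref{alg:Skipper} that \HHHAWAII{} does not keep a per-prefix lattice of structures: it initializes exactly one \HHAWAII{} with $H\cdot\eps_a^{-1}$ counters and feeds it a single random prefix per packet. Hence it suffices to show that \HHAWAII{} equipped with $\numBlocks$ counters occupies $O(\numBlocks)$ table entries, and then substitute $\numBlocks = H/\eps_a$. I would split \HHAWAII{}'s memory (Algorithm~\ref{alg:hhawaii}) into its three components: the Space Saving instance $\SSSInstance$, which by construction holds exactly $\numBlocks$ counters and is therefore $O(\numBlocks)$; the queue of queues $\queueOfOverflows$, whose entries are the overflow records currently overlapping the window; and the hash table $\sumOfOverflows$, which stores one entry per distinct flow with at least one live overflow. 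The first component is immediate, so the real work is in bounding the total number of live overflow records.

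First I would bound the number of overflows produced within a single frame. By line~\ref{line:flush}, $\SSSInstance$ is flushed at every frame boundary, so within a frame the sum of all counter values equals the number of \textsc{FullUpdate} calls since the last flush, which is at most the number of packets in a frame, namely $\window$. An overflow for a flow is recorded precisely when its (monotonically non-decreasing) Space Saving value crosses a new multiple of the block size $\blockSize$, so the number of records charged to a counter whose value reaches $v$ is $\floor{v/\blockSize}$; summing over the counters, the number of overflows in a frame is at most $\left(\sum_i v_i\right)/\blockSize \le \window/\left(\window/\numBlocks\right) = \numBlocks$. Since the window has length $\window$ and a frame also has length $\window$, the most recent $\window$ packets overlap at most two consecutive frames, so the $\numBlocks+1$ blocks tracked by $\queueOfOverflows$ carry at most $2\numBlocks = O(\numBlocks)$ overflow records in total. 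This bounds the entries of $\queueOfOverflows$ directly; moreover, since $\sum_x \sumOfOverflows[x]$ equals this record count and every stored entry satisfies $\sumOfOverflows[x]\ge 1$, the table $\sumOfOverflows$ also has $O(\numBlocks)$ entries. Summing the three components gives $O(\numBlocks)$ entries for \HHAWAII{}, and substituting $\numBlocks = H/\eps_a$ yields the claimed $O(H/\eps_a)$ bound.

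The main obstacle is the overflow-counting step. One must argue cleanly that the per-frame overflow count is $O(\numBlocks)$ even though Space Saving reassigns the identifier of a slot to a new flow whenever its value is minimal: the key observation that keeps the argument valid is that a \emph{slot}'s value is monotone within a frame (it is incremented, never reset, on reassignment), so $\sum_i\floor{v_i/\blockSize}$ still counts every crossing regardless of which flow owned the slot. It is also essential to note that the flush at a frame boundary applies only to $\SSSInstance$ (not to $\queueOfOverflows$ or $\sumOfOverflows$, which persist across the boundary), which is exactly what keeps the running counter sum bounded by $\window$ and hence the per-frame record count bounded by $\numBlocks$. A secondary point to verify is that the de-amortized eviction in lines~\ref{line:empty-tail}--\ref{line:remove-from-tail} pops stale records quickly enough that the structures never exceed the two-frame window, so the $O(\numBlocks)$ bound holds at every step rather than merely in amortized terms.
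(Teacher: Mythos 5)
Your proof is correct and takes essentially the same route as the paper's, which simply observes that \HHHAWAII{} holds a single \HHAWAII{} instance with $O\left(\frac{H}{\varepsilon_a}\right)$ entries and asserts that ``the rest of the variables are insignificant.'' Your overflow-counting step---at most $\numBlocks$ overflows per frame because the counter sum in $\SSSInstance$ is bounded by $\window$ and each overflow consumes $\blockSize$ of it, hence $O(\numBlocks)$ live records across the at most two frames overlapping the window---is precisely the justification, left implicit in the paper, for why $\queueOfOverflows$ and $\sumOfOverflows$ are also $O(\numBlocks)$.
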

\begin{proof}
\HHHAWAII{} utilizes a \HHAWAII{} instance with $O\left(\frac{H}{\varepsilon_a}\right)$ entries. The rest of the variables are insignificant.
\end{proof}

\else
\fi

%\newpage
\bibliographystyle{abbrv}
\bibliography{refs}

\end{document}

%=========================================================================
%  End of document
%=========================================================================

OLD ABSTRACT

OLD INTRO
%%%%%%%%%%

%\IK{My feeling is that we have too many colliding stories (window, sampling, DDoS, router, cloud, aggregation, etc.). We can decide on the final story once we have everything at hand.}
%\Gil{I redid intro, I think we know enough to write it by now. Feel free to edit/change.}
%\Erez{Also rewarded the intro after Gil}
Measurements are a key component for network functions. Traffic engineering, load balancing, quality of service and intrusion detection~\cite{CONGA,LBSigComm,TrafficEngeneering,DevoFlow,IntrusionDetection2,ApproximateFairness,7218487} are just a few examples of applications that rely on traffic measurements.
Measurements can be performed at a centralized controller or at the network device level, and the desired location is application dependent.

When the measurement is performed in a centralized controller, network devices periodically supply the controller with traffic samples~\cite{sFlow,netFlow,BetterNetflow}. The controller is then required to quickly merge the traffic results. Alternatively, measurement algorithms can be integrated directly into network devices.  Such an integration is challenging due to the limited computing and memory resources in network devices. In both settings, accurate measurements are difficult to achieve and thus applications settle on approximate solutions. Specifically, this work considers approximate \emph{Heavy Hitters (HH)} and \emph{Hierarchical Heavy Hitters (HHH)}, which are used in a variety of applications.

\emph{Distributed Denial of Service (DDoS)} attacks are becoming an imminent problem. Intuitively, the problem intensifies due to the rise of the \emph{Internet of Things (IoT)} that radically increases the number of connected devices. In such attacks, the attacker transmits a large volume of traffic with the goal of impacting the quality of some service. The attacking devices are not completely random, that is, most attacking devices were often compromised by a virus and therefore they often share common characteristics such as subnets or device type. HHH algorithms are a central piece in DDoS mitigation systems~\cite{DDoSwithHHH,DDoSwithHHH2,PseudoWindowHHH,DDOSwithHHH3}. When used efficiently, they help the mitigation system separate between legitimate and attack traffic. This way, attack traffic can be blocked with a minimal impact on users.

It is interesting to note that every current HHH algorithm was designed in the stream model measurement~\cite{CormodeHHH,HHHMitzenmacher,RHHH}.
However, in many cases, the usage pattern of some algorithms seems to mimic a sliding window~\cite{PseudoWindowHHH,DDoSwithHHH}.
Specifically, a common pattern in such systems is to sequentially perform stream measurements and discard old measurements. Performing the HHH measurement on a sliding window was considered before~\cite{HHHMitzenmacher}. However, in ~\cite{HHHMitzenmacher} that idea was dismissed since the existing sliding window algorithms were too slow for network applications.

Our work compares sliding windows to the common methods to mimic a sliding window with stream algorithms. We show that sliding windows are quicker to detect heavy hitters and that they are more accurate.
With this motivation, we craft faster algorithms for the HH and HHH problems. Intuitively, we accelerate performance through a combination of randomization and pseudo-sampling techniques. These techniques are used in order to reduce the bandwidth overhead when traffic is transmitted to a central controller, and also to reduce the computation overheads when the measurement is performed on a network device.

Finally, we identify load-balancers such as HAProxy and Nginx as a prime spot to handle DDoS attacks on a cloud. Unlike switches, load-balancers are the entry point to the cloud, they see the entire decrypted HTTPS traffic, and already maintain state for higher network applications. This motivated us to extend HAProxy with the ability to perform sliding window HHH measurement. Such a capability can be leveraged to construct sophisticated DDoS mitigation systems for HAProxy. We exemplify such capabilities with a simple mitigation scheme.

\IK{Can also cut the contributions into a. HHH Algorithms [HHAWAI], then b. Experiments; and present a. as Requirements/Design Goals (see comment in 2.2.1) and corresponding Key Ideas }

\T{Contributions.} We introduce novel randomized sliding window algorithms:
\HHAWAII{} for (plain) heavy hitters (HH) and \HHHAWAII{} for hierarchical heavy hitters (HHH). Our algorithms use sampling
in order to increase processing speed and to allow distributed deployments where multiple devices transmit
traffic samples to a centralized controller for analysis. Our use of sampling is formally analyzed and we provide accuracy guarantees. %We also evaluate it on real traffic traces and show that similar accuracy is achieved.

In the HH problem,
\HHAWAII{} has a similar empirical accuracy and is up to 14$x$ faster compared to previous approaches~\cite{WCSS}. In the HHH problem, \HHHAWAII{} is the first sliding window algorithm to achieve constant update time. In practice, it is up to 273$x$ faster than existing sliding window alternatives when evaluated on real Internet traces.
\HHHAWAII{} is also compared to the fastest stream algorithm~\cite{RHHH} and we show that it achieves a similar, or better, performance for a reasonable range of sampling probabilities.

Next, we embed $\HHHAWAII{}$ into a popular open source load-balancer named HAProxy. This integration allows HAProxy to efficiently detect HHH and potentially mitigate DDoS attacks.  We show that $\HHHAWAII{}$ allows HAProxy to detect heavy hitter subnets faster and more accurately than when equipped with $RHHH$.

Finally, we evaluate the case where 10 load-balancers update a centralized controller with their measurement data. We show that this method is traffic efficient and that its accuracy is comparable to other aggregation approaches.
%accurately detect heavy hitter subnets than $R
%In this implementation, $\HHHAWAII{}$, we show that our algorithm enhances HAProxy

%Additionally, we show that both algorithms can operate in a distributed manner where network devices transmit traffic samples to a central controller. For the HHH problem \HHHAWAII{} is the first algorithm that admits such an implementation. Finally, we implement our algorithms in a load balancer and show that...
\IK{Maybe integrate in contributions, \eg ``we show XYZ (\S 2).''  }
\T{Roadmap.} The rest of the paper is organized as follows.  Section~\ref{sec:whyWindows} provides additional motivation for sliding window measurement and show that they are quicker to detect new heavy hitters. Our algorithms are presented in section~\ref{sec:algorithms}. %\HHAWAII{} is presented in Section~\ref{sec:HHAWAII}, followed by  \HHHAWAII{} in Section~\ref{sec:HHHAWAII}, which also includes notations and problem definitions.
Section~\ref{sec:eval} shows an empirical evaluation of the algorithms and Section~\ref{sec:analysis} formally proves their accuracy and correctness. Related work is surveyed in Section~\ref{sec:related}, and we conclude with a short discussion in Section~\ref{sec:discussion}.

}